\newtheorem{theorem}{Theorem}
\newtheorem{corollary}{Corollary}
\newtheorem{definition}{Definition}
\newtheorem{proposition}{Proposition}
\newtheorem{remark}{Remark}
\newtheorem{example}{Example}
\newtheorem{identity}{Identity}
\newenvironment{proof}[1][Proof]{\noindent\textbf{#1.} }{$\hfill\square$\bigskip}
\newtheorem{defn0}{Definition}
\newtheorem{theorem0}{Theorem}
\newtheorem{proposition0}{Proposition}
\newtheorem{example0}{Example}
\newtheorem{remark0}{Remark}
\newtheorem{identity0}{Identity}
\renewenvironment{definition}{ \flushleft \begin{defn0}}{\end{defn0}}
\renewenvironment{theorem}{\flushleft \begin{theorem0} }{\end{theorem0}}
\renewenvironment{proposition}{ \flushleft \begin{proposition0}}{\end{proposition0}}
\renewenvironment{remark}{\medskip \flushleft \begin{remark0}}{\end{remark0}}
\renewcommand{\cite}{\citet*}
\renewcommand\section{\@startsection
{section}{1}{0pt}{3pt} {2pt} {\large\bf}}
\renewcommand\subsection{\@startsection
{subsection}{1}{0pt}{1pt} {1pt} {\large\bf}}
\renewcommand\subsubsection{\@startsection
{subsubsection}{1}{0pt}{-2.5ex plus-1ex minus-.2ex} {2.3ex plus.2ex}
{\normalfont\bf}}
\renewcommand{\thesection}{\arabic{section}}
\renewcommand{\thesubsection}{\arabic{section}.\arabic{subsection}}
\numberwithin{equation}{section}
\begin{document}
\bigskip%
%TCIMACRO{\TeXButton{title}{\begin{center}
%\large{Hedging strategies and minimal variance portfolios}
%\large{\\ for European and exotic options in a  L\'{e}vy market}
%\vspace{0.2 in}
%\large{\\ Wing Yan Yip}
%\textit{\\ Department of Mathematics, Imperial College London}
%\large{\\ David Stephens}
%\textit{\\ Department of Mathematics and Statistics, McGill
%University}
%\textit{\\ Department of Mathematics, Imperial College London}
%\large{\\ Sofia Olhede}
%\textit{\\ Department of Statistical Science, University College London}
%\vspace{0.2 in}
%\end{center}}}%
%BeginExpansion
\begin{center}
\large{Hedging strategies and minimal variance portfolios}
\large{\\ for European and exotic options in a  L\'{e}vy market}
\vspace{0.2 in}
\large{\\ Wing Yan Yip}
\textit{\\ Department of Mathematics, Imperial College London}
\large{\\ David Stephens}
\textit{\\ Department of Mathematics and Statistics, McGill
University}
\textit{\\ Department of Mathematics, Imperial College London}
\large{\\ Sofia Olhede}
\textit{\\ Department of Statistical Science, University College London}
\vspace{0.2 in}
\end{center}%
%EndExpansion
%

%TCIMACRO{\TeXButton{margin}{\begin{changemargin}{.5in}{.5in}
%}}%
%BeginExpansion
\begin{changemargin}{.5in}{.5in}
%EndExpansion
%

%TCIMACRO{\TeXButton{indent}{\indent}}%
%BeginExpansion
\indent
%EndExpansion
This paper presents hedging strategies for European and exotic options in a
L\'{e}vy market. \ By applying Taylor's theorem, dynamic hedging portfolios
are constructed under different market assumptions, such as the existence of
power jump assets or moment swaps. \ In the case of European options or
baskets of European options, static hedging is implemented. \ It is shown that
perfect hedging can be achieved. \ Delta and gamma hedging strategies are
extended to higher moment hedging by investing in other traded derivatives
depending on the same underlying asset. \ This development is of practical
importance as such other derivatives might be readily available. \ Moment
swaps or power jump assets are not typically liquidly traded. \ It is shown
how minimal variance portfolios can be used to hedge the higher order terms in
a Taylor expansion of the pricing function, investing only in a risk-free bank
account, the underlying asset and potentially variance swaps. \ The numerical
algorithms and performance of the hedging strategies are presented, showing
the practical utility of the derived results.

\noindent\textit{MSC: }60J30; 60H05

\noindent%
%TCIMACRO{\TeXButton{indent}{\indent}}%
%BeginExpansion
\indent
%EndExpansion
\textsc{Key Words}: L\'{e}vy process; Hedging; Exotic option; Variance swap,
Power jump asset; Moment swap, Chaotic Representation Property.%

%TCIMACRO{\TeXButton{margin}{\end{changemargin}}}%
%BeginExpansion
\end{changemargin}%
%EndExpansion

\section{Introduction}

This paper provides perfect hedging strategies and minimal variance portfolios
for European and exotic options in a L\'{e}vy market. \ It is well known, see
\cite[p.71]{s00}, that Brownian motion has the Chaotic Representation Property
(CRP), which states that every square integrable random variable adapted to
the filtration generated by a Brownian motion can be represented as a sum of
its mean and an infinite sum of iterated stochastic integrals with respect to
the Brownian motion, with deterministic integrands. \ A consequence of this is
the so-called Predictable Representation Property (PRP) for Brownian motion.
\ The PRP states that every square integrable random variable adapted to the
filtration generated by a Brownian motion can be represented as a sum of its
mean and a stochastic integral with respect to the Brownian motion, where the
integrand is a predictable process. \ The PRP implies the completeness of the
Black-Scholes option pricing model and gives the admissible self-financing
strategy of replicating a contingent claim whose price only depends on the
time to maturity and the current stock price.

\pagestyle{fancy} \fancyhf{} \fancyhead[LE,RO]{\textbf{\thepage}}
\renewcommand{\headrulewidth}{0pt} \fancyhead[CE,CO]{HEDGING STRATEGIES AND
MINIMAL VARIANCE PORTFOLIOS}

Unfortunately, this kind of PRP, where the stochastic integral is with respect
to the underlying process only, is a property which is only possessed by a few
martingales, including Brownian motion, the compensated Poisson process, and
the Az\'{e}ma martingale (see \cite{s03} or \cite{dp99}).\ \ When the
underlying asset is driven by a L\'{e}vy process, perfect hedging using only a
risk-free bank account and the underlying asset is not in general possible.
\ The market is therefore said to be incomplete. \ However, even in this case,
further developments are possible. \ There are two different types of chaos
expansions for L\'{e}vy processes: \cite{i56} proved a Chaotic Representation
Property (CRP) for any square integrable functional of a general L\'{e}vy
process. \ This CRP is written in terms of multiple integrals with respect to
a two-parameter random measure associated with the L\'{e}vy process.
\ \cite{ns00} proved the existence of a new version of the CRP for L\'{e}vy
processes which satisfy some exponential moment conditions. \ This new version
states that every square integrable random variable adapted to the filtration
generated by a L\'{e}vy process can be represented as an infinite sum of
iterated stochastic integrals with respect to the orthogonalised compensated
power jump processes of the underlying L\'{e}vy process. \ The market can be
completed by allowing trades in these processes while risks due to jumps and
fat tails are considered. \ \ In light of the new version of the PRP,
\cite{cns05} suggested that the market should be enlarged with power jump
assets so that perfect hedging could still be implemented. \ \cite{cgns06}
used this completeness to solve the portfolio optimisation problem using the
martingale method. \ Another form of commonly traded financial derivative is
the variance swap which depends functionally on the volatility of the
underlying asset. \ Since variance swaps are already traded commonly in the
over-the-counter (OTC) markets, \cite{s05} suggested trading in moment swaps,
which are a generalisation of variance swaps. \ Based on the CRP derived by
\cite{i56}, \cite{bdlop03} derived a minimal variance portfolio for hedging
contingent claims in a L\'{e}vy market. \ 

Inspired by these papers, we derive practical and implementable hedging
strategies based on the PRP derived from Taylor approximations to the\ option
pricing formulae. \ We apply Taylor's theorem directly to the option pricing
formulae and derive perfect hedging strategies by investing in power jump
assets, moment swaps or some traded derivatives depending on the same
underlying asset. \ The hedging of the higher moments terms in the Taylor
expansion of a contingent claim using other contingent claims in a L\'{e}vy
market is a technique introduced by this paper. \ When these financial
derivatives are not available, we demonstrate how to use the minimal variance
portfolios derived by \cite{bdlop03} to hedge the higher order terms in the
Taylor expansion. \ While we apply Taylor expansions to decompose the pricing
formula into an infinite sum of higher moment terms, \cite{cns05} applied the
It\^{o}'s formula to obtain the PRP of a contingent claim. \ Note that the
It\^{o}'s formula is derived as a result of the elementary Taylor expansion,
see \cite{k02b}. \ In practice, when implementing a hedging strategy
numerically, we have to discretise the time variable. \ Hence, it is more
natural to work directly from Taylor's theorem as this discretisation can be
acknowledged explicitly. \ In fact, the delta and gamma hedges commonly used
by traders in the market, given in Section \ref{SectionDGLit}, are derived
using a Taylor expansion. \ We construct dynamic hedging strategies for
European and exotic options in a L\'{e}vy market. \ Although static hedging is
only applied to European options, exotic options can be decomposed into a
basket of European options so that static hedging can be achieved; in this
case see for example \cite{dek95}. \ It is practically important to be able to
statically hedge since static hedging has several advantages over dynamic
hedging. \ Static hedging is less sensitive to the assumption of zero
transaction costs (both commissions and the cost of paying individuals to
monitor the positions). \ Moreover, dynamic hedging tends to fail when
liquidity dries up or when the market makes large moves, but especially in
these situations effective hedging is needed.

We discuss how hedging can be implemented by applying Taylor's theorem to a
pricing formula. \ We investigate the approximation of the derivatives of the
pricing formula and present the numerical procedures used to construct the
hedging strategies. \ Performance of the hedging is assessed and the
difficulties encountered are discussed. \ Thus, this paper constitutes a
practical development for the hedging of contingent claims in a L\'{e}vy market.

The rest of the paper is arranged as follows: \ Section
\ref{SectionBackground} gives the background about L\'{e}vy processes and the
CRPs in terms of power jump processes and Poisson random measures. \ Section
\ref{SectionHS} gives hedging strategies by investing in variance swaps,
moment swaps or power jump assets and extend the delta and gamma hedging
strategies to higher moment hedging. \ Section \ref{SectionMVP} gives the
minimal variance portfolios in the case where perfect hedging is not possible.
\ Section \ref{SectionDer} gives the approximation procedures of the hedging
strategies and Section \ref{SectionPerform} gives the performance of the
hedging strategies implemented on a set of different types of options as
illustration of the performance of the proposed method. \ An example of static
hedge of a one year European option on real life data is given. \ In Section
\ref{SectionConclusion}, some concluding remarks are provided. \ Proofs and
tables are included in the appendices at the end.

\section{Background\label{SectionBackground}\label{SectionIntroduction}}

In this section, we give a brief introduction to L\'{e}vy processes and the
two versions of the CRP discussed in the introduction. \ We discuss Taylor
expansions which will later be used to derive the hedging portfolios of exotic options.

Let $X=\{X_{t},t\geq0\}$ be a L\'{e}vy process in a complete probability space
$(\Omega,\mathcal{F},P)$ on $\mathbb{R}^{d}$, where $\mathcal{F}$ is the
filtration generated by $X:\mathcal{F}_{t}=\sigma\left\{  X_{t},0\leq s\leq
t\right\}  $, where $\sigma$ denotes the sigma-algebra generated by $X. $ \ A
detailed account of L\'{e}vy process can be found in \cite{s99}. \ Denote the
\textit{left limit process} by $X_{t-}=\lim_{s\rightarrow t,s<t}X_{s},\ t>0,$
and the \textit{jump size} at time $t$ by $\Delta X_{t}=X_{t}-X_{t-}$. \ Let
$\nu$ be the L\'{e}vy measure of $X$. \ In the rest of the paper, we assume
that all L\'{e}vy measures concerned satisfy, for some $\varepsilon>0$ and
$\lambda>0$, $\int_{\left(  -\varepsilon,\varepsilon\right)  ^{c}}\exp\left(
\lambda\left\vert x\right\vert \right)  \nu\left(  \mathrm{d}x\right)
<\infty.$ \ This condition implies that for $i\geq2,$ $\int_{-\infty}%
^{+\infty}\left\vert x\right\vert ^{i}\nu\left(  \mathrm{d}x\right)  <\infty,$
and that the characteristic function $E\left[  \exp\left(  \mathrm{i}%
uX_{t}\right)  \right]  $ is analytic in a neighborhood of 0. \ Denote the
$i$-th \textit{power jump process} by $X_{t}^{(i)}=\sum_{0<s\leq t}(\Delta
X_{s})^{i},\ i\geq2,$ and for completeness let $X_{t}^{(1)}=X_{t}$. \ Clearly
$E[X_{t}]=E[X_{t}^{(1)}]=m_{1}t$, where $m_{1}<\infty$ is a constant and by
\cite[p.32]{p04}, we have%
\begin{equation}
E[X_{t}^{(i)}]=E[\sum_{0<s\leq t}(\Delta X_{s})^{i}]=t\int_{-\infty}^{\infty
}x^{i}\nu(\mathrm{d}x)=m_{i}t<\infty,\ \ \ \mathrm{for}\ i\geq2, \label{mean}%
\end{equation}
thus defining the moments $m_{i}$. \ \cite{ns00} introduced the
\textit{compensated power jump process }(or\textit{\ Teugels martingale}) of
order $i $, $Y^{\left(  i\right)  }=\left\{  Y_{t}^{\left(  i\right)  }%
,t\geq0\right\}  , $ defined by
\begin{equation}
Y_{t}^{(i)}=X_{t}^{(i)}-E[X_{t}^{(i)}]=X_{t}^{(i)}-m_{i}t\ \ \ \mathrm{for}%
\ i=1,2,3,.... \label{CompPower}%
\end{equation}
$Y^{\left(  i\right)  }$ is constructed to have a zero mean. \ It was shown by
\cite[Section 2]{ns00} that there exist constants $a_{i,1},a_{i,2}%
,...,a_{i,i-1}$ such that the processes defined by
\begin{equation}
H_{t}^{(i)}=Y_{t}^{(i)}+a_{i,i-1}Y_{t}^{(i-1)}+\cdots+a_{i,1}Y_{t}^{(1)},
\label{H}%
\end{equation}
for$\ i\geq1$ are a set of pairwise strongly orthogonal
martingales.\ \ \cite{ns00} proved the CRP and PRP in terms of these
orthogonalised compensated power jump processes, $H^{\left(  i\right)  \prime
}$s.

\begin{theorem}
[\textbf{Chaotic Representation Property (CRP)}]Every random variable $F$ in
$L^{2}(\Omega,\mathcal{F})$ has a representation of the form%
\begin{equation}
F=E(F)+\sum_{j=1}^{\infty}\sum_{i_{1},...,i_{j}\geq1}\int_{0}^{\infty}\int
_{0}^{{t_{1}}-}\cdots\int_{0}^{{t_{j-1}}-}f_{(i_{1},...,i_{j})}(t_{1}%
,...,t_{j})\mathrm{d}H_{t_{j}}^{(i_{j})}...\mathrm{d}H_{t_{2}}^{(i_{2}%
)}\mathrm{d}H_{t_{1}}^{(i_{1})}, \label{crp}%
\end{equation}
where the $f_{(i_{1},...,i_{j})}$'s are functions in $L^{2}(\mathbb{R}_{+}%
^{j})$ and $H$'s are defined in equation (\ref{H}).
\end{theorem}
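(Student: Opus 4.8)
The plan is to follow the original argument of Nualart and Schoutens, reducing the chaotic representation in terms of the orthogonalised Teugels martingales $H^{(i)}$ to It\^o's classical chaos decomposition for the Brownian part and a chaos decomposition for the Poisson random measure part. First I would recall that, under the exponential moment hypothesis imposed in this section, all the moments $m_i$ are finite and the $L^2$-space generated by $X$ splits, via the L\'evy--It\^o decomposition, into the Gaussian component and the compensated jump component; the filtration $\mathcal F$ is then the one generated by a Brownian motion $W$ together with the compensated Poisson random measure $\widetilde N(\mathrm dt,\mathrm dx)$. For each fixed $i\ge 1$ the process $H^{(i)}$ is a square-integrable martingale with deterministic, strictly positive quadratic variation rate (indeed $\langle H^{(i)},H^{(i)}\rangle_t = q_i\, t$ for a constant $q_i>0$, and $\langle H^{(i)},H^{(j)}\rangle=0$ for $i\ne j$ by the strong orthogonality in (\ref{H})), so iterated integrals against the $H^{(i)}$'s over the simplex $\{0<t_j<\cdots<t_1\}$ are well defined isometrically for integrands in $L^2(\mathbb R_+^{\,j})$, and iterated integrals of different multi-indices $(i_1,\dots,i_j)$ are mutually orthogonal.

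The core step is to show that the closed linear span of $E(F)$ together with all such iterated integrals is all of $L^2(\Omega,\mathcal F)$. I would do this by the standard density/approximation argument: it suffices to approximate, in $L^2$, exponential random variables of the form $\exp\!\big(\sum_k \lambda_k (X_{t_k}-X_{t_{k-1}})\big)$ (equivalently Dol\'eans--Dade stochastic exponentials $\mathcal E(\sum_i \lambda^{(i)} H^{(i)})_T$), since such exponentials are total in $L^2(\Omega,\mathcal F_T)$ for every $T$. For a stochastic exponential, the stochastic differential equation it solves can be iterated by Picard/Neumann expansion, and each iteration produces precisely an iterated stochastic integral of the type appearing in (\ref{crp}); one then checks that the tail of the Picard expansion vanishes in $L^2$ using the exponential integrability, which guarantees convergence of the resulting series of norms. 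This is the classical route of \cite{ns00}, and the only technical content is bookkeeping of constants $q_i$ and verifying the $L^2$-summability of the expansion, which the exponential moment condition is tailored to deliver.

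The main obstacle — and the reason the exponential moment assumption is not cosmetic — is precisely this convergence control: unlike the pure Brownian case, the power jump martingales $H^{(i)}$ for large $i$ have quadratic variation constants $q_i$ that grow, and one must show that the coefficients produced by expanding a stochastic exponential decay fast enough to offset this growth so that the multiple series in (\ref{crp}) converges in $L^2$. Handling the change of basis from the raw Teugels martingales $Y^{(i)}$ to the orthogonal family $H^{(i)}$ (the triangular relations (\ref{H})) while preserving these estimates is the delicate point; everything else — existence and isometry of the iterated integrals, mutual orthogonality across multi-indices, and totality of exponentials — is routine. Uniqueness of the kernels $f_{(i_1,\dots,i_j)}$ then follows immediately from the orthogonality, by projecting $F$ onto each chaos and using the isometry to identify the kernel.
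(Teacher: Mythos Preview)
The paper does not give its own proof of this theorem; it is stated as background and attributed to \cite{ns00}. Your outline is essentially the original Nualart--Schoutens argument (totality of stochastic exponentials, Picard expansion into iterated integrals against the orthogonalised Teugels martingales, $L^2$-convergence via the exponential moment hypothesis, uniqueness by orthogonality), so there is nothing to compare --- you have correctly reconstructed the proof the paper is citing.
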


\begin{theorem}
[\textbf{Predictable Representation Property (PRP)}]The CRP stated above
implies that every random variable $F$ in $L^{2}\left(  \Omega,\mathcal{F}%
\right)  $ has a representation of the form%
\begin{equation}
F=E\left[  F\right]  +\sum_{i=1}^{\infty}\int_{0}^{\infty}\phi_{s}^{\left(
i\right)  }\mathrm{d}H_{s}^{\left(  i\right)  }, \label{prp}%
\end{equation}
where $\phi_{s}^{\left(  i\right)  }$'s are predictable, that is, they are
$\mathcal{F}_{s-}$-measurable.
\end{theorem}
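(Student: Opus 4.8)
The plan is to derive the PRP (equation~\eqref{prp}) from the CRP (equation~\eqref{crp}) by collapsing each of the iterated integrals into a single stochastic integral whose integrand is an explicitly defined, adapted (indeed $\mathcal{F}_{s-}$-measurable) process. The starting observation is that in~\eqref{crp} the outermost integration variable is $t_1$ and the outermost differential is $\mathrm{d}H^{(i_1)}_{t_1}$; everything appearing to the right of it — all the inner integrals against $\mathrm{d}H^{(i_2)}_{t_2},\dots,\mathrm{d}H^{(i_j)}_{t_j}$ over the simplex $0<t_j<\cdots<t_2<t_1$ — is integrated only up to $t_1-$, hence is a random variable measurable with respect to $\mathcal{F}_{t_1-}$. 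So each summand of~\eqref{crp} already has the shape $\int_0^\infty (\text{predictable process})\,\mathrm{d}H^{(i_1)}_{t_1}$.

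First I would fix $i\ge 1$ and, for each $j\ge 1$ and each multi-index $(i_1,\dots,i_j)$ with $i_1=i$, define
\begin{equation}
\psi^{(i_1,\dots,i_j)}_{s}=\int_0^{s-}\!\!\cdots\int_0^{t_{j-1}-} f_{(i_1,\dots,i_j)}(s,t_2,\dots,t_j)\,\mathrm{d}H^{(i_j)}_{t_j}\cdots \mathrm{d}H^{(i_2)}_{t_2},
\end{equation}
which is $\mathcal{F}_{s-}$-measurable by construction (the upper limits are strict left-limits). Then I would set
\begin{equation}
\phi^{(i)}_s=\sum_{j=1}^{\infty}\ \sum_{i_2,\dots,i_j\ge 1}\psi^{(i,i_2,\dots,i_j)}_s,
\end{equation}
so that, summing~\eqref{crp} over all $j$ and all multi-indices according to the value of $i_1$, one formally obtains $F=E[F]+\sum_{i\ge1}\int_0^\infty \phi^{(i)}_s\,\mathrm{d}H^{(i)}_s$. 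The bulk of the argument is then to justify that the rearrangement is legitimate and that $\phi^{(i)}$ is a bona fide integrand: one must check that $\phi^{(i)}_s$ is well-defined in $L^2$ (the $j$- and multi-index sums converge in $L^2(\Omega)$, uniformly enough in $s$), that it is predictable, and that the stochastic integrals $\int_0^\infty\phi^{(i)}_s\,\mathrm{d}H^{(i)}_s$ exist and may be interchanged with the infinite summation. This uses the pairwise strong orthogonality of the $H^{(i)}$ together with the It\^o isometry for each $H^{(i)}$, which turns the $L^2(\Omega)$-norm of each summand in~\eqref{crp} into the $L^2$-norm of the corresponding kernel $f_{(i_1,\dots,i_j)}$ over the simplex; since~\eqref{crp} is an $L^2$-convergent (and orthogonal) expansion, these kernel norms are summable, which is exactly what is needed to control the rearranged series.

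I expect the main obstacle to be purely the bookkeeping of these convergence and measurability claims — in particular, verifying that the partial sums of the rearranged series form a Cauchy sequence in $L^2(\Omega)$ and that the limit integrand is predictable rather than merely adapted — rather than any conceptual difficulty; the identity itself is essentially forced once the iterated integral is read ``outermost variable first.'' A clean way to present it is to truncate both the chaos expansion and the inner summations to finitely many terms, observe the PRP representation holds trivially for the truncation with an integrand that is manifestly a finite sum of $\mathcal{F}_{s-}$-measurable terms, and then pass to the limit using the orthogonality-plus-isometry estimate to control the tail. Alternatively, and perhaps most economically, one can simply cite the original derivation in \cite{ns00}, since this implication is stated there; I would include the short self-contained argument above for completeness.
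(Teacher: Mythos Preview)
The paper does not actually supply a proof of this theorem: it is stated as a background result, with the implication ``CRP $\Rightarrow$ PRP'' attributed to \cite{ns00} and simply quoted. So there is no in-paper argument to compare against; your ``most economical'' alternative of citing \cite{ns00} is precisely what the paper does.

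That said, your self-contained sketch is the standard derivation and is correct: reading each iterated integral in~\eqref{crp} with the outermost variable $t_1$ last, the inner $(j-1)$-fold integral is $\mathcal{F}_{t_1-}$-measurable because all inner upper limits are strict left limits, so grouping by the outermost index $i_1=i$ produces the predictable integrand $\phi^{(i)}$. The convergence and rearrangement are controlled exactly as you indicate, via the pairwise strong orthogonality of the $H^{(i)}$ and the associated isometry, which identifies the $L^2(\Omega)$-norms of the chaos terms with the $L^2$-norms of the deterministic kernels $f_{(i_1,\dots,i_j)}$ over the simplex; summability of the latter is part of the statement that~\eqref{crp} converges in $L^2$. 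Your truncate-then-pass-to-the-limit presentation is the clean way to make this rigorous, and the only point worth flagging is that predictability of the limit integrand follows because the predictable $\sigma$-field is closed under $L^2$-limits of predictable processes (equivalently, the space of integrands for which the stochastic integral is defined is complete). None of this goes beyond what is in \cite{ns00}, which is why the paper is content to cite rather than reprove.
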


In contrasts, \cite{i56} proved a chaos expansion for general L\'{e}vy
processes in terms of multiple integrals with respect to the compensated
Poisson random measure. \ One may convert the representation to one involving
iterated integrals (see \cite{l04}). \ The stochastic integrals are in terms
of both Brownian motion, $W$, and the compensated Poisson measure $\tilde
{N},$
\begin{equation}
\tilde{N}\left(  \mathrm{d}t,\mathrm{d}x\right)  =N\left(  \mathrm{d}%
t,\mathrm{d}x\right)  -\nu\left(  \mathrm{d}x\right)  \mathrm{d}t,
\label{cprm}%
\end{equation}
where $\nu\left(  \mathrm{d}x\right)  $ is the L\'{e}vy measure of the
underlying L\'{e}vy process, and\
\[
N\left(  B\right)  =\#\left\{  t:\left(  t,\Delta X_{t}\right)  \in B\right\}
,B\in\mathcal{B}\left(  \left[  0,T\right]  \times\mathbb{R}_{0}\right)  ,
\]
is the Poisson random measure of the L\'{e}vy process where $\mathcal{B}%
\left(  \left[  0,T\right]  \times\mathbb{R}_{0}\right)  $ is the
\textit{Borel} $\sigma$\textit{-algebra} of $\left[  0,T\right]
\times\mathbb{R}_{0}$ and $\mathbb{R}_{0}=\mathbb{R-}\left\{  0\right\}
$.\ \ The compensator of $N\left(  \mathrm{d}t,\mathrm{d}x\right)  $ is known
to be $E\left[  N\left(  \mathrm{d}t,\mathrm{d}x\right)  \right]  =\nu\left(
\mathrm{d}x\right)  \mathrm{d}t.$ \ \cite{bdlop03} derived relations between
the two chaos expansions. \ When the underlying L\'{e}vy process is a pure
jump process, the compensated power jump process defined in (\ref{CompPower})
satisfies%
\begin{equation}
Y_{t}^{\left(  i\right)  }=\int_{0}^{t}\int_{\mathbb{R}}x^{i}\tilde{N}\left(
\mathrm{d}s,\mathrm{d}x\right)  ,\text{ \ \ }0\leq t\leq T,\ i=1,2,....
\label{YN}%
\end{equation}
This relationship is very important in the development of the chaotic
representation of L\'{e}vy processes. \ Since the introduction of the chaos
expansion by \cite{i56}, the development of representations in the literature
has been focused on expansions with respect to the Poisson random measure.
\ Unfortunately, we cannot trade in the Poisson random measure. \ Note that
trading in a finite set of power jump assets is possible because the $i$-th
power jump asset contains information of the $i$-th moment of the L\'{e}vy
process, given that $i$ is finite. \ Therefore, it is theoretically possible
to construct a financial product which contains information of the $i$-th
moment of the underlying process. \ For example, if we want to hedge the risk
introduced by the variance of the underlying process, we can trade in the
variance swaps or the second power jump asset. \ However, the Poisson random
measure contains \textbf{all} the information of the moments up to infinity
and hence it is not clear how to construct such a financial product unless
information of all the higher moments are obtained. \ This limits the
application of the CRP in terms of Poisson random measures and also the
application of L\'{e}vy processes in finance. \ The equation (\ref{YN}) links
the two important expansions together and hence the results derived for
expansions in terms of Poisson random measures can be applied to the
expansions in terms of power jump processes.

To unify notation, \cite{bdlop03} defined the following notation:
\[
U_{1}=\left[  0,T\right]  ,U_{2}=\left[  0,T\right]  \times\mathbb{R}%
,\mathrm{d}Q_{1}\left(  \cdot\right)  =\mathrm{d}W\left(  \cdot\right)
,Q_{2}\left(  \cdot\right)  =\tilde{N}\left(  \cdot,\cdot\right)  ,
\]%
\[
\int_{U_{1}}g\left(  u^{\left(  1\right)  }\right)  Q_{1}\left(
\mathrm{d}u^{\left(  1\right)  }\right)  =\int_{0}^{t}g\left(  s\right)
W\left(  \mathrm{d}s\right)  ,\int_{U_{2}}g\left(  u^{\left(  2\right)
}\right)  Q_{2}\left(  \mathrm{d}u^{\left(  2\right)  }\right)  =\int_{0}%
^{t}\int_{\mathbb{R}}g\left(  s,x\right)  \tilde{N}\left(  \mathrm{d}%
s,\mathrm{d}x\right)  .
\]
The CRP in terms of Brownian motion and Poisson random measures is given by:

\begin{theorem}
[Chaos expansion for general L\'{e}vy process by \cite{i56}]\label{TheoremIto}%
Let $F$ be a square integrable random variable adapted to the underlying
L\'{e}vy process, $X$. \ We have%
\begin{equation}
F=E\left[  F\right]  +\sum_{n=1}^{\infty}\sum_{j_{1},...,j_{n}=1,2}\tilde
{J}_{n}\left(  g_{n}^{\left(  j_{1},...,j_{n}\right)  }\right)  , \label{gprm}%
\end{equation}
for a unique sequence $g_{n}^{\left(  j_{1},...,j_{n}\right)  }$
($j_{1},...,j_{n}=1,2;\ n=1,2,...$) of deterministic functions in the
corresponding $L_{2}$-space, $L_{2}\left(  G_{n}\right)  ,$ where
\begin{equation}
G_{n}=\left\{  \left(  u_{1}^{\left(  j_{1}\right)  },...,u_{n}^{\left(
j_{n}\right)  }\right)  \in\Pi_{i=1}^{n}U_{j_{i}}:0\leq t_{1}\leq\cdots\leq
t_{n}\leq T\right\}  \label{gn}%
\end{equation}
with $u^{\left(  j_{i}\right)  }=t$ if $j_{i}=1$, and $u^{\left(
j_{i}\right)  }=\left(  t,x\right)  $ if $j_{i}=2,$ and
\begin{align*}
&  \tilde{J}_{n}\left(  g_{n}^{\left(  j_{1},...,j_{n}\right)  }\right) \\
&  \ =\int_{\Pi_{i=1}^{n}U_{j_{i}}}g_{n}^{\left(  j_{1},...,j_{n}\right)
}\left(  u_{1}^{\left(  j_{1}\right)  },...,u_{n}^{\left(  j_{n}\right)
}\right)  1_{G_{n}}\left(  u_{1}^{\left(  j_{1}\right)  },...,u_{n}^{\left(
j_{n}\right)  }\right)  Q_{j_{1}}\left(  \mathrm{d}u_{1}^{\left(
j_{1}\right)  }\right)  \cdots Q_{j_{n}}\left(  \mathrm{d}u_{n}^{\left(
j_{n}\right)  }\right)  .
\end{align*}

\end{theorem}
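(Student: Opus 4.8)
The plan is to deduce the statement from the Lévy--Itô decomposition together with the independent-increments structure of $X$, reducing it to a combined Wiener--Poisson chaos decomposition. First I would write, using the standing exponential-moment assumption on $\nu$ (which in particular gives $\int_{\mathbb R_0}|x|\,\nu(\mathrm dx)<\infty$),
\[
X_t = bt + \sigma W_t + \int_0^t\!\!\int_{\mathbb R_0} x\,\tilde N(\mathrm ds,\mathrm dx),
\]
with $W$ a standard Brownian motion independent of the jump measure $N$, and then check the routine but necessary fact that $\mathcal F_t=\sigma\{X_s:s\le t\}$ coincides with the filtration generated by the pair $(W,N)$; thus "$F$ adapted to $X$" means exactly $F\in L^2(\Omega,\sigma(W_s,N(A):s\le T,\ A\subset[0,T]\times\mathbb R_0))$, so the claim is a chaos decomposition in the noise $(Q_1,Q_2)=(W,\tilde N)$.

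Next I would set up the iterated integrals $\tilde J_n(g_n^{(j_1,\dots,j_n)})$ against $(Q_1,Q_2)$ over the time-ordered domain $G_n$ and record the Itô isometry: $E[\tilde J_n(g_n^{(j_1,\dots,j_n)})\,\tilde J_m(h_m^{(k_1,\dots,k_m)})]=0$ whenever $n\ne m$ or $(j_i)\ne(k_i)$, while $E[\tilde J_n(g_n^{(j_1,\dots,j_n)})^2]=\|g_n^{(j_1,\dots,j_n)}\|_{L_2(G_n)}^2$. These follow from the orthogonality of increments of $W$ and of $\tilde N$ (with compensator $\nu(\mathrm dx)\,\mathrm dt$), the vanishing of $[W,\tilde N]$, and the restriction to ordered domains, worked out recursively from the outermost integrator inward. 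In particular $g_n^{(j_1,\dots,j_n)}\mapsto\tilde J_n(g_n^{(j_1,\dots,j_n)})$ is an isometry, so once existence is established the uniqueness in (\ref{gprm}) is immediate: project any putative identity onto the order-$n$, pattern-$(j_i)$ component and invert the isometry.

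The substantive step is density of the closed linear span $\mathcal H$ of all $\tilde J_n(g_n^{(j_1,\dots,j_n)})$ in $L^2(\Omega,\mathcal F_T)$. For this I would use the exponential martingales
\[
\mathcal E_T(f,h)=\exp\Big(\textstyle\int_0^T f_s\,\mathrm dW_s-\tfrac12\int_0^T f_s^2\,\mathrm ds+\int_0^T\!\!\int_{\mathbb R_0} h\,\tilde N(\mathrm ds,\mathrm dx)-\int_0^T\!\!\int_{\mathbb R_0}(e^{h}-1-h)\,\nu(\mathrm dx)\,\mathrm ds\Big),
\]
for bounded $f$ and bounded $h$ of compact support in $x$ (extended to complex-valued $f,h$). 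On one hand, finite linear combinations of such $\mathcal E_T(f,h)$ are total in $L^2(\Omega,\mathcal F_T)$, since a variable orthogonal to all of them would have vanishing joint characteristic function of the increments of $W$ and $\tilde N$, which generate $\mathcal F_T$, and hence vanish. On the other hand $\mathcal E_T(f,h)$ solves the linear SDE $\mathrm d\mathcal E_t=\mathcal E_{t-}\big(f_t\,\mathrm dW_t+\int_{\mathbb R_0}(e^{h(t,x)}-1)\,\tilde N(\mathrm dt,\mathrm dx)\big)$, and Picard iteration of this equation produces an $L^2$-convergent series $\sum_n\sum_{(j_i)}\tilde J_n(g_n^{(j_1,\dots,j_n)})$ with explicit product kernels $g_n^{(j_1,\dots,j_n)}=\prod_{i=1}^n\psi_{j_i}$, where $\psi_1(t)=f_t$ and $\psi_2(t,x)=e^{h(t,x)}-1$; the $L^2$-convergence and the norm identity $\|\mathcal E_T\|_2^2=\sum_n\sum_{(j_i)}\|g_n^{(j_1,\dots,j_n)}\|_{L_2(G_n)}^2$ are controlled by the standard remainder estimate for the iteration ($n$-th term bounded by $C^n/n!$). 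Hence $\mathcal E_T(f,h)\in\mathcal H$ for every admissible $f,h$, so $\mathcal H=L^2(\Omega,\mathcal F_T)$ and (\ref{gprm}) follows.

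The main obstacle is the density argument of the third step: ensuring $\mathcal E_T(f,h)$ is genuinely square integrable, justifying the characteristic-function separation and the passage from imaginary to general $f,h$ by analytic continuation—this is precisely where the exponential-integrability hypothesis on $\nu$ is used—and then carefully matching the terms produced by Picard iteration (which naturally live on the simplex $0\le t_1\le\cdots\le t_n\le T$) with the $\{1,2\}^n$-indexed iterated integrals $\tilde J_n$ over $G_n$ in the statement. The remaining ingredients—the Lévy--Itô decomposition, the coincidence of filtrations, the Itô isometry, and the deduction of uniqueness—are standard.
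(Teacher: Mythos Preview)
The paper does not prove this theorem at all: it is stated as a classical result of It\^{o} (the citation \cite{i56}), with the reformulation in terms of iterated integrals against $Q_1=W$ and $Q_2=\tilde N$ attributed to \cite{l04} and the notation taken from \cite{bdlop03}. There is therefore no ``paper's own proof'' to compare your proposal against; the theorem serves here purely as background, and the authors rely on the cited references for its justification.

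That said, your outline is a correct and standard route to the result. The three ingredients you isolate---the L\'evy--It\^o decomposition together with the identification of $\mathcal F_T^X$ with the filtration generated by $(W,N)$, the It\^o isometry and mutual orthogonality of the iterated integrals $\tilde J_n$ indexed by patterns in $\{1,2\}^n$, and the totality of Dol\'eans-Dade exponentials $\mathcal E_T(f,h)$ combined with their explicit chaos expansion via Picard iteration---are exactly the pieces used in the modern proofs (e.g.\ L{\o}kka's). Your identification of the delicate point is also accurate: the square-integrability of $\mathcal E_T(f,h)$ and the analytic-continuation/density argument are where care is needed, and the standing hypothesis $\int_{(-\varepsilon,\varepsilon)^c}e^{\lambda|x|}\nu(\mathrm dx)<\infty$ (or the restriction to bounded $h$ of compact $x$-support) is what makes this go through. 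One minor point: the exponential-moment assumption on $\nu$ does not by itself give $\int_{\mathbb R_0}|x|\,\nu(\mathrm dx)<\infty$ near the origin, so in general the L\'evy--It\^o decomposition retains the small-jump compensation; this does not affect your argument, since you only need that $\mathcal F_T^X$ is generated by $(W,N)$, which holds regardless.
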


So far we have given the theoretical results on the chaotic representations.
\ We now discuss their financial applications. \ Under the Black-Scholes
model, the PRP of Brownian motions allows perfect hedging of European options.
\ Unfortunately, the derivation of hedging strategies of options in an
incomplete market\ is not as simple and has been the focus of considerable
study in the literature, see for example \cite{cgm01}, \cite{hkcflv05} and
\cite{ctv05}. \ In this paper, by extending the ideas of \cite{cns05},
\cite{s05} and \cite{bdlop03}, we derive and implement some hedging strategies
for European and exotic options. \ Numerical procedures are provided and
performance of the hedging strategies is discussed.

As mentioned above, the PRP is useful in option hedging. \ For option pricing
functions which are infinitely differentiable in the stock price, we can
simply apply the It\^{o}'s formula to obtain such a predictable
representation. \ Assuming power jump assets are traded in the market,
\cite{cns05} derived a self-financing replicating portfolio for a contingent
claim whose payoff function only depends on the stock price at maturity.
\ Their hedging formula is derived from the It\^{o}'s formula and given in
terms of an infinite sum of stochastic integrals.\ \ In this paper, we use a
different approach to determine a self-financing replicating portfolio, which,
in some cases, can be used in both static and dynamic hedging with a flexible
$\Delta t,$ where $\Delta t$ denotes the time change during the hedging
period. \ We discuss this in more detail in Section \ref{SectionHS}. \ We
apply Taylor's theorem directly to the option pricing formulae to obtain the
hedging portfolios. \ In the literature, the results on option hedging using
CRP of L\'{e}vy processes, has previously focused on the theoretical aspects
of the problem, see, for example, \cite{cns05} and \cite{l04}. \ We aim to
investigate the problem from a practical point of view by providing methods to
obtain the hedging portfolios explicitly using numerical methods and shall
discuss the difficulties encountered. \ Our approach can be applied to a range
of exotic options in the case of dynamic hedging, for example, barrier
options, lookback options and Asian options.

\section{The perfect hedging strategies\label{SectionHS}}

In this section, we derive hedging strategies using Taylor's theorem.
\ Firstly, we specify the model of the underlying asset, $S_{t}$. \ Following
\cite[Theorem 3]{cns05}, we assume
\begin{equation}
\mathrm{d}S_{t}=bS_{t-}\mathrm{d}t+S_{t-}\mathrm{d}X_{t}, \label{S}%
\end{equation}
where $X=\left\{  X_{t},t\geq0\right\}  $ is a general L\'{e}vy process. \ Let
the risk-free bank account be $B_{t}=\exp\left(  rt\right)  ,$ where $r$ is
the continuously compounded risk-free rate. \ Let $F\left(  t,x\right)  $ be
the option pricing function at time $t<T$ and stock price equal to $x,$ where
$T$ is the maturity of the option. \ Let $D_{1}^{i}F\left(  t,x\right)  $ be
the $i$-th derivative of $F\left(  t,x\right)  $ with respect to the first
variable (time), and $D_{2}^{i}F\left(  t,x\right)  $ be the $i$-th derivative
of $F\left(  t,x\right)  $ with respect to the second variable (stock price).
\ Suppose $F\left(  t,x\right)  $ is continuous and infinitely differentiable
in the second variable and satisfies
\[
\sup_{x<K,t\leq t_{0}}\sum_{n=2}^{\infty}\left\vert D_{2}^{n}F\left(
t,x\right)  \right\vert R^{n}<\infty\qquad\text{for all }K,R>0,t_{0}>0.
\]
Let $\Delta t$ be the time change during the hedging period and $\Delta
S_{t}=S_{t+\Delta t}-S_{t}$. \ Applying Taylor's theorem twice to the option
pricing formula, $F\left(  t,S_{t}\right)  $, we obtain%
\begin{equation}
F\left(  t+\Delta t,S_{t}+\Delta S_{t}\right)  -F\left(  t,S_{t}\right)
=\sum_{i=1}^{\infty}\frac{D_{2}^{i}F\left(  t+\Delta t,S_{t}\right)  }%
{i!}\left(  \Delta S_{t}\right)  ^{i}+\sum_{i=1}^{\infty}\frac{D_{1}%
^{i}F\left(  t,S_{t}\right)  }{i!}\left(  \Delta t\right)  ^{i}, \label{f1}%
\end{equation}
which is true as long as $D_{2}^{i}F\left(  t+\Delta t,S_{t}\right)  $ and
$D_{1}^{i}F\left(  t,S_{t}\right)  $ exist for $i=1,2,3,....$ \ Note that it
is not necessary to apply the multivariate Taylor's theorem since the value of
$\Delta t$ is known at time $t$. $\ $Let $M^{\left(  q\right)  }\left(
t,x\right)  $ be the price of a financial derivative such that $M^{\left(
q\right)  }\left(  0,S_{0}\right)  =F\left(  0,S_{0}\right)  $ and%
\begin{equation}
M^{\left(  q\right)  }\left(  t+\Delta t,S_{t}+\Delta S_{t}\right)
-M^{\left(  q\right)  }\left(  t,S_{t}\right)  =\sum_{i=1}^{q}\frac{D_{2}%
^{i}F\left(  t+\Delta t,S_{t}\right)  }{i!}\left(  \Delta S_{t}\right)
^{i}+\sum_{i=1}^{\infty}\frac{D_{1}^{i}F\left(  t,S_{t}\right)  }{i!}\left(
\Delta t\right)  ^{i}, \label{f2}%
\end{equation}
where $q$ is a positive integer. \ Therefore, we have $\lim_{q\rightarrow
\infty}M^{\left(  q\right)  }\left(  T,S_{T}\right)  =F\left(  T,S_{T}\right)
,$ that is, the value of the financial derivative $M^{\left(  q\right)  }$
converges to $F$ as $q$ goes to infinity. \ Our aim is to construct a
self-financing hedging portfolio for $M^{\left(  q\right)  }.$ \ Note that the
hedging error at time $\Delta t$,
\begin{align*}
\left[  F\left(  t+\Delta t,S_{t}+\Delta S_{t}\right)  -F\left(
t,S_{t}\right)  \right]   &  -\left[  M^{\left(  q\right)  }\left(  t+\Delta
t,S_{t}+\Delta S_{t}\right)  -M^{\left(  q\right)  }\left(  t,S_{t}\right)
\right] \\
&  \qquad=\sum_{i=q+1}^{\infty}\frac{D_{2}^{i}F\left(  t+\Delta t,S_{t}%
\right)  }{i!}\left(  \Delta S_{t}\right)  ^{i},
\end{align*}
can be approximated using standard techniques in calculating the remainder
terms in a Taylor expansion. \ Let $\mathcal{P}_{t}^{\left(  i\right)  }$ be
the value of a basket of financial derivatives such as the risk-free bank
account, the underlying stock, variance swaps, moment swaps, power jump assets
or other financial derivatives depending on the same underlying stock such
that $\left(  \Delta S_{t}\right)  ^{i}=\Delta\mathcal{P}_{t}^{\left(
i\right)  }=\mathcal{P}_{t+\Delta t}^{\left(  i\right)  }-\mathcal{P}%
_{t}^{\left(  i\right)  }.$ \ Note that $\mathcal{P}_{t}^{\left(  i\right)  }$
is a basket of assets that would not lead to arbitrage opportunities. \ We
will show later how to construct such a basket of tradable assets.
\ Therefore, we have%
\begin{align}
M^{\left(  q\right)  }\left(  t+\Delta t,S_{t}+\Delta S_{t}\right)
-M^{\left(  q\right)  }\left(  t,S_{t}\right)   &  =\sum_{i=1}^{\infty}%
\frac{D_{1}^{i}F\left(  t,S_{t}\right)  }{i!}\left(  \Delta t\right)
^{i}+D_{2}^{1}F\left(  t+\Delta t,S_{t}\right)  \Delta S_{t}\nonumber\\
&  +\sum_{i=2}^{q}\frac{D_{2}^{i}F\left(  t+\Delta t,S_{t}\right)  }{i!}%
\Delta\mathcal{P}_{t}^{\left(  i\right)  }. \label{hedgePort}%
\end{align}
The self-financing portfolio to hedge $M^{\left(  q\right)  }\left(  t+\Delta
t,S_{t}+\Delta S_{t}\right)  -M^{\left(  q\right)  }\left(  t,S_{t}\right)  $
is then

\begin{itemize}
\item[(i)] Invest
\begin{equation}
\frac{1}{\left(  \exp\left(  r\Delta t\right)  -1\right)  }\sum_{i=1}^{\infty
}D_{1}^{i}F\left(  t,S_{t}\right)  \left(  \Delta t\right)  ^{i}/i! \label{P}%
\end{equation}
in a riskless bank account such that at time $t+\Delta t$, the deposit is
worth
\[
\frac{1}{\left(  \exp\left(  r\Delta t\right)  -1\right)  }\sum_{i=1}^{\infty
}D_{1}^{i}F\left(  t,S_{t}\right)  \left(  \Delta t\right)  ^{i}\exp\left(
r\Delta t\right)  /i!
\]
and the change of value of the investment is $\sum_{i=1}^{\infty}\frac
{D_{1}^{i}F\left(  t,S_{t}\right)  }{i!}\left(  \Delta t\right)  ^{i}$.

\item[(ii)] Invest $D_{2}^{1}F\left(  t+\Delta t,S_{t}\right)  $ in the
underlying stock;

\item[(iii)] Invest $\dfrac{D_{2}^{i}F\left(  t+\Delta t,S_{t}\right)  }{i!}$
in $\mathcal{P}_{t}^{\left(  i\right)  }$ for $i=2,3,...,q$.
\end{itemize}

In real life application, we have to find a reasonable value for $q$ and we
discuss methods of choosing $q$ in Section \ref{SectionPerform}. \ Note that
the approximation in (\ref{f2}) requires the existence of $D_{1}^{i}F\left(
t,S_{t}\right)  $ for $i=1,2,3,...$ and $D_{2}^{i}F\left(  t+\Delta
t,S_{t}\right)  $ only for $i=1,2,3,...,q.$ \ The value of $q$ determines how
many different financial derivatives needed to hedge the option up to a
pre-specified level of accuracy. \ If $q=1,$ we only need to hedge the
deterministic term
%$\sum_{i=1}^{\infty }\frac{D_{1}^{i}F\left(
%t,S_{t}\right) }{i!}\left( \Delta t\right) ^{i}$
that appears as the first term in equation (\ref{hedgePort}) by investing in a
risk-free bank account, and the term $D_{2}^{1}F\left(  t+\Delta
t,S_{t}\right)  \Delta S_{t}$ by investing in the underlying stock, which is a
simple extension to the delta hedging discussed in Section \ref{SectionDGLit}.
\ If $q=2,$ we can hedge by investing in a risk-free bank account, the
underlying stock and the variance swaps currently traded in the market, which
is discussed in Section \ref{SectionVS}. \ If $q\geq3,$ we can consider
perfect hedging in three cases: \ (a) trading in moment swaps, discussed in
Section \ref{SectionMS},\ (b) trading in power jump assets, discussed in
Section \ref{SectionHPJ} and (c) trading in some financial derivatives
depending on the same underlying assets, discussed in Section \ref{SectionExt}%
. \ Note that (a) and (b) are not liquidly traded in the market while (c)
might be more readily available. \ If all of these financial derivatives are
not available for trading, we can employ the minimal variance portfolios
derived in Section \ref{SectionMVP}. \ 

The approximation in (\ref{f2}) can be used in both static and dynamic hedging
for European options by just changing $\Delta t$. \ The reason why static
hedging may not be applicable to exotic options is because if during the
hedging period, $\Delta t$, the value of the $S_{t+\Delta s}$, where $\Delta
s<\Delta t$ is explicitly occurring in the formulae, then this must be used in
the calculation of the option price. \ In this case, we have to apply Taylor's
theorem with respect to both $\Delta S_{t}=\left(  S_{t+\Delta t}%
-S_{t}\right)  $ and $\left(  S_{t+\Delta s}-S_{t}\right)  $. \ In the case of
dynamic hedging, we can assume that the minimum time period for a change of
value of $S$ to take place is equal to $\Delta t\,$, the hedging period$.$
Although static hedging can only be applied to European options, some exotic
options can be decomposed into a basket of European options such that static
hedging can still be achieved, see for example \cite{dek95}. \ In Section
\ref{SectionPerform}, we show the approximation results for both static
hedging ($\Delta t$ equals to 3 months) and dynamic hedging ($\Delta t$ equals
to 5 minutes) for European options and dynamic hedging for barrier options.
\ The advantage of static hedging over dynamic hedging is that in real life,
transaction costs and bid-ask spreads of option prices are not negligible.
\ The replicating portfolio is not truly self-financing since extra investment
must be made to pay for these additional costs. \ Hence, it is preferable to
hedge statically rather than dynamically as the costs involved will be less
and constant rebalancing is not required. \ In the literature and in practice,
it is common to assume that $\Delta S_{t}$ is very small such that the
approximation in (\ref{f2}) can be truncated without loss of accuracy; this is
the main assumption behind the delta and gamma hedges commonly used by traders
in the market. \ However, in real life, the price of every traded asset in the
market moves by a tick size, such as 0.5 or 1. \ After a very short period of
time, the price of the traded asset either stays unchanged or moves by a
multiple of the tick size. \ Hence, the assumption of $\Delta S_{t}$ being
very small in hedging is not sufficiently accurate. \ It would not in general
be reasonable to assume that $\Delta S_{t}$ is small when modelling $S$ as a
process with jumps. \ Thus, we consider $\Delta S_{t}\geq1$ for both static
and dynamic hedging in our simulation analysis in Section \ref{SectionPerform}.

\subsection{Hedging instruments\label{SectionVSMS}\label{SectionPJA}}

In this section, we consider the use of moment swaps (including variance
swaps) and power jump assets in our hedging strategies. \ Recall the PRP for
L\'{e}vy processes involves stochastic integrals with respect to power jump
processes, which are related to the higher moments of the underlying L\'{e}vy
process. \ In equation (\ref{f2}), they are represented through the terms
$\frac{D_{2}^{i}F\left(  t+\Delta t,S_{t}\right)  }{i!}\left(  \Delta
S_{t}\right)  ^{i}.$ \ To hedge these terms, we need to invest in some
financial derivatives related to these higher moments. \ We show how moment
swaps introduced by \cite{s05} and power jump assets by \cite{cns05} can be
used to construct $\mathcal{P}_{t}^{\left(  i\right)  }$ used in the hedging
portfolio given in (\ref{hedgePort}). \ Variance swaps, introduced by
\cite{ddkz99}, are commonly traded over-the-counter (OTC) derivatives.
\ \cite{s05} generalised variance swaps to moment swaps, which are not
liquidly traded in the market. \ \cite{wfv06} gave a detailed discussion on
volatility swaps. \ There are two common contractual definitions of returns of
stock price. \ Let $\left\{  s_{1},s_{2},...,s_{n}\right\}  $ be the sampling
points of the contract, where the $s$'s are equally spaced with length $\Delta
s$. \ The \textit{actual return }and the \textit{log return} are defined to be%
\begin{equation}
R_{\text{actual},i}=\left(  S_{s_{i+1}}-S_{s_{i}}\right)  /S_{s_{i}}\text{
\ \ and \ \ }R_{\text{log},i}=\log\left(  S_{s_{i+1}}/S_{s_{i}}\right)  .
\label{R}%
\end{equation}
The annualised {realised} variance, $\sigma_{\text{{realised}}}^{2}$, is
defined by $\sigma_{\text{{realised}}}^{2}=\frac{1}{\Delta s\left(
n-2\right)  }\sum_{i=1}^{n-1}R_{i}^{2}$ where $R_{i}$ is either the actual
return or log return of the stock price. \ In the case of log return,
$R_{i}=R_{\text{log},i}$, \cite{s05} generalised variance swaps to moment
swaps. \ The annualised {realised} $k$-th moment is defined by
$M_{\text{{realised}}}^{\left(  k\right)  }=\frac{1}{\Delta s\left(
n-2\right)  }\sum_{i=1}^{n-1}R_{i}^{k}.$ \ This definition can be easily
extended to the case where $R_{i}=R_{\text{actual},i}.$ We can now give the
definition of the $k$-th moment swap.

\begin{definition}
\label{DefinitionMS}A $k$-th moment swap is a forward contract on annualised
{realised} $k$-th moment, $M_{\text{{realised}}}^{\left(  k\right)  }$. \ Its
payoff to the holder at expiration is equal to $\left(  M_{\text{{realised}}%
}^{\left(  k\right)  }-M_{\text{strike}}^{\left(  k\right)  }\right)  N,$
where $M_{\text{{realised}}}^{\left(  k\right)  }$ is the {realised} $k$-th
moment (quoted in annual terms) over the life of the contract,
$M_{\text{strike}}^{\left(  k\right)  }$ is the pre-defined delivery price for
the $k$-th moment, and $N$ is the notional amount of the swap. The annualised
{realised} $k$-th moment is calculated based on the pre-specified set of
sampling points over the period, $\left\{  s_{1},s_{2},...,s_{n}\right\}  $. \ 
\end{definition}

\cite{cns05} suggested enlarging the L\'{e}vy market with power jump assets,
where the $i$-th power jump asset is defined by%
\begin{equation}
T_{t}^{\left(  i\right)  }=\exp\left(  rt\right)  Y_{t}^{\left(  i\right)
},\ \ \ i\geq2, \label{T}%
\end{equation}
and $Y_{t}^{\left(  i\right)  }$ is defined in (\ref{CompPower}). \ The
authors derived the dynamic hedging portfolio trading in these assets using
the It\^{o}'s formula and noted that the 2nd power jump process is related to
the {realised} variance. \ However, the 2nd power jump asset is not the same
as a variance swap and we consider their usages separately in Section
\ref{SectionHS2}.

\subsection{Hedging strategies\label{SectionHS2}}

In the last section, we introduce two different kinds of financial derivatives
involving higher moments, namely, the moment swaps and the power jump assets.
\ In this section, we explain how to use them to construct the basket of
financial derivatives, $\mathcal{P}_{t}^{\left(  i\right)  }$, in order to
hedge the terms in equation (\ref{f2}). \ We also discuss the delta and gamma
hedges in the literature and we extend them in order to obtain perfect hedging
by trading in certain financial derivatives depending on the same underlying
asset, which may be available in the market.

\subsubsection{Hedging with variance swaps\label{SectionVS}}

To hedge the term $\left(  \Delta S_{t}\right)  ^{2}$ in equation (\ref{f2}),
we construct $\mathcal{P}_{t}^{\left(  2\right)  }$ which invest in a
risk-free bank account and variance swaps. \ If $\Delta t$ is negligible
compared to $\Delta S_{t}$, from (\ref{S}), we have \
\begin{equation}
\left(  \Delta S_{t}\right)  ^{2}=S_{t}^{2}\left(  \Delta X_{t}\right)  ^{2}.
\label{SX}%
\end{equation}
We cannot use the variance swaps using log return, $R_{\text{log},i}$ defined
in (\ref{R}) to hedge. We have
\[
\left[  \log\left(  \frac{S_{t+\Delta t}}{S_{t}}\right)  \right]  ^{2}
\bumpeq\left[  \log\left(  1+\Delta X_{t}\right)  \right]  ^{2}%
\]
since we assume $\Delta t$ to be negligible. From (\ref{SX}), we need $\left(
\Delta X_{t}\right)  ^{2}$ rather than $\left[  \log\left(  1+\Delta
X_{t}\right)  \right]  ^{2}$ to hedge, therefore the variance swaps using log
returns are not useful in this case. \ Even if we use the model $S_{t+\Delta
t}=S_{t}\exp\left(  \Delta X_{t}\right)  $ such that $\log\left(  S_{t+\Delta
t}/S_{t}\right)  =\Delta X_{t},$ we then have $\left(  \Delta S_{t}\right)
^{2}=\left(  S_{t+\Delta t}-S_{t}\right)  ^{2}=S_{t}^{2}\left[  \exp\left(
\Delta X_{t}\right)  -1\right]  ^{2},$ which still can not be hedged by the
variance swaps using log returns. \ Therefore, in our case where we apply
Taylor's theorem with respect to $\Delta S_{t}$, we should invest in the
variance swaps using absolute returns, $R_{\text{actual},i}$, as defined in
(\ref{R}). \ 

Recall in Section \ref{SectionVSMS} that there is a set of sampling points,
$\left\{  s_{1},s_{2},...,s_{n}\right\}  $, for each contract. \ We invest in
the variance swap at time $t$ where the last two sampling points are equal to
$t$ and $t+\Delta t$: $s_{n-1}=t$ and $s_{n}=t+\Delta t$ and maturity equal to
$t+\Delta t$. \ At maturity, we receive the payoff $\sigma_{\text{{realised}}%
}^{2}-\sigma_{\text{strike}}^{2}$, where
\[
\sigma_{\text{{realised}}}^{2}=\frac{1}{\Delta s\left(  n-2\right)  }%
\sum_{i=1}^{n-1}\left(  \frac{S_{t_{i+1}}-S_{t_{i}}}{S_{t_{i}}}\right)
^{2}=\frac{1}{\Delta s\left(  n-2\right)  }\left[  \left(  \frac{\Delta S_{t}%
}{S_{t}}\right)  ^{2}+\sum_{i=1}^{n-2}\left(  \frac{S_{t_{i+1}}-S_{t_{i}}%
}{S_{t_{i}}}\right)  ^{2}\right]
\]
and the value of $\sum\limits_{i=1}^{n-2}\left[  (S_{t_{i+1}}-S_{t_{i}%
})/S_{t_{i}}\right]  ^{2}$ is known at time $t$. In the following, we give the
hedging strategy to hedge the term
\begin{equation}
Q_{2}=\frac{D_{2}^{2}F\left(  t+\Delta t,S_{t}\right)  }{2}\left(  \Delta
S_{t}\right)  ^{2}=C_{2}\left(  \Delta S_{t}\right)  ^{2} \label{eq:D2term}%
\end{equation}
in equation (\ref{f2}) by constructing $\mathcal{P}_{t}^{\left(  2\right)  }$.

\begin{proposition}
\label{PropositionHVS} To hedge the term $Q_{2}$ in equation (\ref{eq:D2term})
we invest in $C_{2}$ units of $\mathcal{P}_{t}^{\left(  2\right)  }$ at time
$t$, consisting of $\Delta s\left(  n-2\right)  S_{t}^{2}$ units of the
variance swap with sampling points $\left\{  ...,s_{n-1}=t,s_{n}=t+\Delta
t\right\}  \,$, maturity $t+\Delta t$, strike $\sigma_{\text{strike}}^{2}$
and
\[
\frac{S_{t}^{2}\Delta s\left(  n-2\right)  }{\left[  \exp\left(  r\Delta
t\right)  -1\right]  }\left[  \sigma_{\text{strike}}^{2}-\frac{1}{\Delta
s\left(  n-2\right)  }\sum_{i=1}^{n-2}\left(  \frac{S_{t_{i+1}}-S_{t_{i}}%
}{S_{t_{i}}}\right)  ^{2}\right]  +\frac{P_{V}\Delta s\left(  n-2\right)
S_{t}^{2}}{\left[  \exp\left(  r\Delta t\right)  -1\right]  }%
\]
units of cash in a risk-free bank account, where $P_{V}$ is the price of one
unit of the variance swap.
\end{proposition}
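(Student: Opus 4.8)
The plan is to construct $\mathcal{P}_t^{(2)}$ explicitly and verify that its increment over $[t,t+\Delta t]$ equals $(\Delta S_t)^2$, so that investing $C_2$ units of it replicates $Q_2$. The portfolio is a static position (held from $t$ to $t+\Delta t$) in two instruments: one variance swap of the stated specification, and a cash deposit in the risk-free bank account. First I would compute the time-$(t+\Delta t)$ payoff of $\Delta s(n-2)S_t^2$ units of the variance swap. By Definition~\ref{DefinitionMS} (in the actual-return form of equation~(\ref{R})), one unit pays $\sigma_{\text{realised}}^2-\sigma_{\text{strike}}^2$, and using the displayed decomposition of $\sigma_{\text{realised}}^2$ with $s_{n-1}=t$, $s_n=t+\Delta t$,
\[
\sigma_{\text{realised}}^2=\frac{1}{\Delta s(n-2)}\left(\frac{\Delta S_t}{S_t}\right)^2+\frac{1}{\Delta s(n-2)}\sum_{i=1}^{n-2}\left(\frac{S_{t_{i+1}}-S_{t_i}}{S_{t_i}}\right)^2,
\]
where the second sum is $\mathcal{F}_t$-measurable. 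Multiplying by $\Delta s(n-2)S_t^2$ gives a payoff at $t+\Delta t$ of
\[
(\Delta S_t)^2+S_t^2\sum_{i=1}^{n-2}\left(\frac{S_{t_{i+1}}-S_{t_i}}{S_{t_i}}\right)^2-\Delta s(n-2)S_t^2\,\sigma_{\text{strike}}^2.
\]

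Next I would choose the cash position so that the bank-account leg exactly cancels the two deterministic ($\mathcal{F}_t$-measurable) terms above, plus the initial cost of the variance swap itself (so the basket is self-financing in the sense of the construction around equation~(\ref{hedgePort})). Depositing an amount $A$ at time $t$ yields $A\exp(r\Delta t)$ at $t+\Delta t$, a change in value of $A(\exp(r\Delta t)-1)$; to absorb a target deterministic increment $D$ one sets $A=D/(\exp(r\Delta t)-1)$. Here the deterministic terms to be removed are $-\left[S_t^2\sum_{i=1}^{n-2}\big((S_{t_{i+1}}-S_{t_i})/S_{t_i}\big)^2-\Delta s(n-2)S_t^2\sigma_{\text{strike}}^2\right]$ from the swap payoff, together with the $P_V\,\Delta s(n-2)S_t^2$ spent acquiring the swap; collecting these gives exactly the cash amount
\[
\frac{S_t^2\Delta s(n-2)}{\exp(r\Delta t)-1}\left[\sigma_{\text{strike}}^2-\frac{1}{\Delta s(n-2)}\sum_{i=1}^{n-2}\left(\frac{S_{t_{i+1}}-S_{t_i}}{S_{t_i}}\right)^2\right]+\frac{P_V\Delta s(n-2)S_t^2}{\exp(r\Delta t)-1}
\]
stated in the proposition. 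Adding the swap increment and the bank-account increment, all $\mathcal{F}_t$-measurable terms cancel and $\Delta\mathcal{P}_t^{(2)}=(\Delta S_t)^2$, so $C_2$ units deliver $C_2(\Delta S_t)^2=Q_2$, matching the term in~(\ref{f2}); this $\mathcal{P}_t^{(2)}$ is then a valid choice in~(\ref{hedgePort}).

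The only real subtlety — and the step I would treat most carefully — is the bookkeeping of \emph{when} cash flows occur and the no-arbitrage caveat flagged in the text after the definition of $\mathcal{P}_t^{(i)}$: one must check that entering the variance swap at time $t$ has the stated cost $P_V$ per unit (a forward contract on $\sigma_{\text{realised}}^2$ with delivery price $\sigma_{\text{strike}}^2$ need not have zero initial value, hence the explicit $P_V$ term), and that the only randomness entering the $t+\Delta t$ value of the basket is through $(\Delta S_t)^2$, using that $S_t$ and the partial sum over the historical sampling points $s_1,\dots,s_{n-2}$ are known at time $t$. The rest is the routine algebraic verification above; no approximation is involved since, unlike the passage preceding the proposition, here $(\Delta S_t)^2$ is reproduced exactly rather than via $S_t^2(\Delta X_t)^2$.
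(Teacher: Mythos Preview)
Your proposal is correct and follows essentially the same approach as the paper's proof: both compute the change in value of the variance swap position plus the bank deposit over $[t,t+\Delta t]$ and verify algebraically that it equals $C_2(\Delta S_t)^2$. The only difference is organisational---the paper writes down the initial investment and the maturity value separately and then subtracts, whereas you compute the increment of each leg directly and solve for the cash amount; the arithmetic is identical.
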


\begin{proof}
Let
\begin{equation}
\overline{S}_{n,2}=\frac{1}{\Delta s\left(  n-2\right)  }\sum_{i=1}%
^{n-2}\left(  \frac{S_{t_{i+1}}-S_{t_{i}}}{S_{t_{i}}}\right)  ^{2}=\frac
{1}{\Delta s\left(  n-2\right)  }\widetilde{S}_{n,2}. \label{S2bar}%
\end{equation}
The initial investment at time $t$ equals the price of the variance swap plus
the deposit into the risk-free bank account, which is equal to
\[
C_{2}\Delta s\left(  n-2\right)  S_{t}^{2}P_{V}\left[  1+\frac{1}{e^{r\Delta
t}-1}\right]  \newline+\frac{C_{2}S_{t}^{2}\Delta s\left(  n-2\right)
}{\left[  \exp\left(  r\Delta t\right)  -1\right]  }\left[  \sigma
_{\text{strike}}^{2}-\overline{S}_{n,2}\right]  .
\]
At maturity, the portfolio is worth%
\[
\ C_{2}\left(  \Delta S_{t}\right)  ^{2}+C_{2}S_{t}^{2}\Delta s\left(
n-2\right)  \left[  \sigma_{\text{strike}}^{2}-\overline{S}_{n,2}\right]
/\left[  \left(  e^{r\Delta t}-1\right)  \right]  +C_{2}P_{V}\frac{\Delta
s\left(  n-2\right)  S_{t}^{2}e^{r\Delta t}}{e^{r\Delta t}-1}.
\]
Hence, the change of value of the hedging portfolio is equal to
\[
C_{2}\left(  \Delta S_{t}\right)  ^{2}+C_{2}\Delta s\left(  n-2\right)
S_{t}^{2}P_{V}\left[  \frac{e^{r\Delta t}}{e^{r\Delta t}-1}-1-\frac
{1}{e^{r\Delta t}-1}\right]  =C_{2}\left(  \Delta S_{t}\right)  ^{2},
\]
as desired.
\end{proof}

\subsubsection{Hedging with moment swaps\label{SectionMS}}

In the last section, we explained how to hedge the term $Q_{2}$ in equation
(\ref{eq:D2term}) using variance swaps. \ The idea can be extended easily to
moment swaps to hedge the term
\[
Q_{i}=\frac{D_{2}^{i}F\left(  t+\Delta t,S_{t}\right)  }{i!}\left(  \Delta
S_{t}\right)  ^{i}=C_{i}\left(  \Delta S_{t}\right)  ^{i}%
\]
for $i=3,4,5,...$, which can be done by investing in the $i$-th moment swap at
time $t$ with sampling points $s_{n-1}=t$ and $s_{n}=t+\Delta t$ and maturity
equal to $t+\Delta t$. \ At maturity, we receive the payoff
$M_{\text{{realised}}}^{\left(  i\right)  }-M_{\text{strike}}^{\left(
i\right)  }$, where
\[
M_{\text{{realised}}}^{\left(  i\right)  }=\frac{1}{\Delta s\left(
n-2\right)  }\left[  \left(  \frac{\Delta S_{t}}{S_{t}}\right)  ^{i}%
+\sum_{i=1}^{n-2}\left(  \frac{S_{t_{i+1}}-S_{t_{i}}}{S_{t_{i}}}\right)
^{i}\right]  =\frac{1}{\Delta s\left(  n-2\right)  }\left[  \left(
\frac{\Delta S_{t}}{S_{t}}\right)  ^{i}+\widetilde{S}_{n,i}\right]  ,
\]
and the value of $\widetilde{S}_{n,i}$ is known at time $t.$ In the following,
we give the hedging strategy to hedge the term $Q_{i}$ by constructing
$\mathcal{P}_{t}^{\left(  i\right)  }$.

\begin{proposition}
To hedge the terms $Q_{i}$ we invest in $C_{i}$ units of $\mathcal{P}%
_{t}^{\left(  i\right)  }$ at time $t$, consisting of $\Delta s\left(
n-2\right)  S_{t}^{i}$ units of the $i$-th moment swap with sampling points
$\left\{  ...,s_{n-1}=t,s_{n}=t+\Delta t\right\}  \,$, maturity $t+\Delta t$
and strike $M_{\text{strike}}^{\left(  i\right)  }$, and
\[
\frac{S_{t}^{i}\Delta s\left(  n-2\right)  }{\left[  \exp\left(  r\Delta
t\right)  -1\right]  }\left[  M_{\text{strike}}^{\left(  i\right)  }-\frac
{1}{\Delta s\left(  n-2\right)  }\widetilde{S}_{n,i}\right]  +\frac{\Delta
s\left(  n-2\right)  S_{t}^{i}P_{M}}{\left[  \exp\left(  r\Delta t\right)
-1\right]  }%
\]
units of cash in a risk-free bank account where $P_{M}$ is the price of one
unit of the moment swap.
\end{proposition}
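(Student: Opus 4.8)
The plan is to follow the proof of Proposition~\ref{PropositionHVS} almost verbatim, replacing squares by $i$-th powers and the variance swap by the $i$-th moment swap. First I would record the analogue of~(\ref{SX}): since $\Delta t$ is taken to be negligible compared with $\Delta S_t$, equation~(\ref{S}) gives $(\Delta S_t)^i = S_t^i(\Delta X_t)^i = S_t^i(\Delta S_t/S_t)^i$, and with the sampling grid chosen so that $s_{n-1}=t$ and $s_n=t+\Delta t$ the realised $i$-th moment isolates exactly the one new return $(\Delta S_t/S_t)^i$, while the residual sum $\widetilde{S}_{n,i}=\sum_{j=1}^{n-2}\big((S_{t_{j+1}}-S_{t_j})/S_{t_j}\big)^i$ is $\mathcal{F}_t$-measurable, i.e.\ known at time $t$. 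Writing $\overline{S}_{n,i}=\widetilde{S}_{n,i}/(\Delta s(n-2))$, in parallel with~(\ref{S2bar}), keeps the bookkeeping identical to the variance-swap case.

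Next I would price the two legs of $C_i$ copies of $\mathcal{P}_t^{(i)}$ at time $t$ and again at time $t+\Delta t$. At time $t$ the cost is the price of the $C_i\Delta s(n-2)S_t^i$ moment swaps, namely $C_i\Delta s(n-2)S_t^iP_M$, plus the prescribed cash deposit $\frac{C_iS_t^i\Delta s(n-2)}{e^{r\Delta t}-1}\big[M_{\text{strike}}^{(i)}-\overline{S}_{n,i}+P_M\big]$. At maturity each moment swap pays $M_{\text{{realised}}}^{(i)}-M_{\text{strike}}^{(i)}$, so the swap leg is worth $C_i\Delta s(n-2)S_t^i\big(M_{\text{{realised}}}^{(i)}-M_{\text{strike}}^{(i)}\big)=C_i(\Delta S_t)^i+C_i\Delta s(n-2)S_t^i\big(\overline{S}_{n,i}-M_{\text{strike}}^{(i)}\big)$, while the cash leg has accrued by the factor $e^{r\Delta t}$.

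Finally I would subtract the time-$t$ cost from the time-$(t+\Delta t)$ value. The terms proportional to $M_{\text{strike}}^{(i)}-\overline{S}_{n,i}$ cancel because the cash deposit was sized precisely so that, after accruing at rate $r$, it offsets them; the $P_M$ terms cancel through the identity $\frac{e^{r\Delta t}}{e^{r\Delta t}-1}-1-\frac{1}{e^{r\Delta t}-1}=0$, exactly as in Proposition~\ref{PropositionHVS}. What survives is $C_i(\Delta S_t)^i=Q_i$, which is the assertion, and this also confirms that $\mathcal{P}_t^{(i)}$ replicates $(\Delta S_t)^i$ as required in~(\ref{hedgePort}). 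I do not expect a genuine obstacle here: the argument is pure discounting arithmetic. The only point that needs care is the same caveat as in the variance-swap proof, namely the approximation $\Delta t\ll\Delta S_t$ invoked when passing from $(\Delta S_t)^i$ to the single grid return $(\Delta S_t/S_t)^i$, together with matching the notation $\widetilde{S}_{n,i}$ to the definition of the realised $i$-th moment.
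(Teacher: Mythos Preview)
Your proposal is correct and follows exactly the same route as the paper: the paper's own proof simply reads ``The proof follows in the same fashion as for Proposition~\ref{PropositionHVS},'' and the (commented-out) detailed version in the source carries out precisely the computation you describe---price the swap leg plus cash leg at $t$, revalue at $t+\Delta t$, and cancel the $M_{\text{strike}}^{(i)}-\overline{S}_{n,i}$ and $P_M$ terms via the identity $\tfrac{e^{r\Delta t}}{e^{r\Delta t}-1}-1-\tfrac{1}{e^{r\Delta t}-1}=0$. One small remark: the detour through $(\Delta S_t)^i=S_t^i(\Delta X_t)^i$ is not actually needed, since $(\Delta S_t)^i=S_t^i(\Delta S_t/S_t)^i$ is an identity and the moment-swap payoff already delivers $(\Delta S_t/S_t)^i$ directly.
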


\begin{proof}
The proof follows in the same fashion as for Proposition \ref{PropositionHVS}.

\end{proof}

\subsubsection{Hedging with power jump processes of higher
orders\label{SectionHPJ}}

In the last two sections, we discuss how to hedge the term $\sum_{i=1}%
^{q}Q_{i}$ for$\ q\geq2$ using variance swaps and moment swaps. \ We can
instead hedge using power jump assets, discussed in Section \ref{SectionPJA},
if we allow trading of them. \ \ Using It\^{o}'s formula, see \cite[Section
2.3]{cns05}, equation (\ref{S}) has the solution%
\begin{equation}
S_{t}=S_{0}\exp\left(  X_{t}+\left(  b-\sigma^{2}/2\right)  t\right)
\prod_{0<s\leq t}\left(  1+\Delta L_{s}\right)  \exp\left(  -\Delta
L_{s}\right)  , \label{Ss}%
\end{equation}
where $b$ is defined in (\ref{S}), $\sigma^{2}$ is the Brownian variance
parameter and $L$ is the pure jump part of the L\'{e}vy process $X$, see
\cite[Section 2]{cns05} for details. \ In the following, we consider the
simplified case where there is at most one jump of $X$ between $t$ and
$t+\Delta t$, and the general case where there can be infinite number of
jumps. \ Note that the latter case might not be realistic because in reality,
we only observe a discrete series of the underlying stock $S$, while the power
jump processes of the L\'{e}vy process with infinite activity are not
observable. \ Therefore, it appears to be more practical to consider trading
in moment swaps rather than power jump processes. \ We consider both assets
for completeness and theoretical interest.

\paragraph{The simplified case}

\ If $\Delta t$ is negligible compared to $\Delta S_{t}$, from (\ref{S}),
(\ref{T}) and assuming there is at most one jump of $X$ between $t$ and
$t+\Delta t$,\ we have%
\begin{equation}
\left(  \Delta S_{t}\right)  ^{i}=S_{t}^{i}\left[  \exp\left(  -r\left(
t+\Delta t\right)  \right)  T_{t+\Delta t}^{\left(  i\right)  }-\exp\left(
-rt\right)  T_{t}^{\left(  i\right)  }+m_{i}\Delta t\right]  . \label{PJAD}%
\end{equation}
\ Therefore, we can hedge the term $Q_{i}$ by constructing $\mathcal{P}%
_{t}^{\left(  i\right)  }$:

\begin{proposition}
\label{Proposition3}If $\Delta t$ is negligible compared to $\Delta S_{t}$, to
hedge $Q_{i},$ we invest in $C_{i}$ units of $\mathcal{P}_{t}^{\left(
i\right)  },$ consisting of $S_{t}^{i}\exp\left(  -r\left(  t+\Delta t\right)
\right)  $ units of $T_{t}^{\left(  i\right)  }$ and%
\[
S_{t}^{i}\left\{  e^{-r\left(  t+\Delta t\right)  }T_{t}^{\left(  i\right)
}-e^{-rt}T_{t}^{\left(  i\right)  }+m_{i}\Delta t\right\}  /\left[  e^{r\Delta
t}-1\right]
\]
units of cash in a risk-free bank account. \ 
\end{proposition}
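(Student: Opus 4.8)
The plan is to copy, almost verbatim, the bookkeeping argument used for Proposition \ref{PropositionHVS}: write the value of one unit of $\mathcal{P}_t^{\left(i\right)}$ at the two times $t$ and $t+\Delta t$, subtract, and check that the difference collapses to $\left(\Delta S_t\right)^i$; multiplying by $C_i$ then gives $Q_i$. First I would record the value at time $t$, namely $S_t^i e^{-r\left(t+\Delta t\right)}T_t^{\left(i\right)}$ held in the $i$-th power jump asset, plus the cash position $S_t^i\{e^{-r\left(t+\Delta t\right)}T_t^{\left(i\right)}-e^{-rt}T_t^{\left(i\right)}+m_i\Delta t\}/[e^{r\Delta t}-1]$ in the bank. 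Between $t$ and $t+\Delta t$ there is no rebalancing, so the number of units of $T^{\left(i\right)}$ is frozen while the bank deposit is multiplied by $e^{r\Delta t}$; hence the value at $t+\Delta t$ is $S_t^i e^{-r\left(t+\Delta t\right)}T_{t+\Delta t}^{\left(i\right)}$ plus $e^{r\Delta t}$ times the initial cash position.

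Next I would subtract the two. The power-jump contribution to the change is $S_t^i e^{-r\left(t+\Delta t\right)}\bigl(T_{t+\Delta t}^{\left(i\right)}-T_t^{\left(i\right)}\bigr)$, and the cash contribution is $S_t^i\{e^{-r\left(t+\Delta t\right)}T_t^{\left(i\right)}-e^{-rt}T_t^{\left(i\right)}+m_i\Delta t\}$, since the factor $e^{r\Delta t}-1$ produced by one period of interest cancels the denominator. Adding the two pieces and telescoping the common $S_t^i e^{-r\left(t+\Delta t\right)}T_t^{\left(i\right)}$ terms leaves exactly $S_t^i\bigl[e^{-r\left(t+\Delta t\right)}T_{t+\Delta t}^{\left(i\right)}-e^{-rt}T_t^{\left(i\right)}+m_i\Delta t\bigr]$, which by (\ref{PJAD}) equals $\left(\Delta S_t\right)^i$. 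Scaling everything by $C_i$ gives a change of value $C_i\left(\Delta S_t\right)^i=Q_i$, as claimed, and the portfolio is self-financing by construction: all cash flows at time $t$ are absorbed into the initial deposit, and no intermediate injection is required.

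For completeness I would also indicate where (\ref{PJAD}) itself comes from, this being the only non-bookkeeping ingredient. With $\Delta t$ negligible, (\ref{S}) gives $\Delta S_t=S_t\Delta X_t$ up to terms of order $\Delta t$, so $\left(\Delta S_t\right)^i=S_t^i\left(\Delta X_t\right)^i$; and under the hypothesis of at most one jump on $[t,t+\Delta t]$ the increment of the power jump process is simply $X_{t+\Delta t}^{\left(i\right)}-X_t^{\left(i\right)}=\left(\Delta X_t\right)^i$, so from (\ref{CompPower}) and (\ref{T}) one obtains $e^{-r\left(t+\Delta t\right)}T_{t+\Delta t}^{\left(i\right)}-e^{-rt}T_t^{\left(i\right)}=Y_{t+\Delta t}^{\left(i\right)}-Y_t^{\left(i\right)}=\left(\Delta X_t\right)^i-m_i\Delta t$, which rearranges to (\ref{PJAD}).

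The computation presents no serious obstacle; the only point requiring care is the modelling step behind (\ref{PJAD}) rather than the portfolio arithmetic — namely, justifying that the drift and Brownian parts of $S$ contribute only negligible ($O(\Delta t)$, resp.\ $o(1)$) terms, and that the single-jump assumption legitimately identifies the increment of $X^{\left(i\right)}$ with $\left(\Delta X_t\right)^i$ — together with the implicit observation that $\mathcal{P}_t^{\left(i\right)}$, being a combination of the risk-free account and the martingale-based asset $T^{\left(i\right)}$, introduces no arbitrage.
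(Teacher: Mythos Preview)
Your argument is correct and follows essentially the same bookkeeping route as the paper: compute the initial value of $\mathcal{P}_t^{(i)}$, let the $T^{(i)}$-holding evolve and the cash accrue by the factor $e^{r\Delta t}$, subtract, and invoke (\ref{PJAD}) to identify the result with $(\Delta S_t)^i$. Your added justification of (\ref{PJAD}) and the no-arbitrage remark are not in the paper's proof but are consistent with the surrounding text.
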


\begin{proof}
The proof is included in Appendix \ref{AppendixProp3}. \ 
\end{proof}

If $\Delta t$ is not negligible compared to $\Delta S_{t}$, assuming
$\sigma=0$ and there is only one jump of $X$ between times $t$ and $t+\Delta
t$ as before, we have from (\ref{Ss})%
\begin{equation}
\Delta S_{t}=S_{t}\left[  \exp\left(  b\Delta t\right)  \left(  1+\Delta
X_{t}\right)  -1\right]  . \label{DeltaS}%
\end{equation}
Note that if $\Delta t\rightarrow0,$ $\exp\left(  b\Delta t\right)
\rightarrow1$, we have $\Delta S_{t}=S_{t}\left(  \Delta X_{t}\right)  $, as
in the case above$.$ \ Squaring both sides, we have
\[
\left(  \Delta S_{t}\right)  ^{2}=S_{t}^{2}\left\{  e^{2b\Delta t}\left(
\Delta X_{t}\right)  ^{2}+2e^{b\Delta t}\left[  e^{b\Delta t}-1\right]  \Delta
X_{t}+\left[  e^{b\Delta t}-1\right]  ^{2}\right\}  .
\]
Substituting $\Delta X_{t}$ by $\left[  \frac{\Delta S_{t}}{S_{t}}+1\right]
\exp\left(  -b\Delta t\right)  -1,$ we have%
\[
\left(  \Delta S_{t}\right)  ^{2}=2S_{t}\left[  \exp\left(  b\Delta t\right)
-1\right]  \Delta S_{t}+S_{t}^{2}\exp\left(  2b\Delta t\right)  \left(  \Delta
X_{t}\right)  ^{2}-S_{t}^{2}\left[  \exp\left(  b\Delta t\right)  -1\right]
^{2}.
\]
Similarly to (\ref{PJAD}) above,
\begin{equation}
\left(  \Delta S_{t}\right)  ^{2}=-S_{t}^{2}\left[  e^{b\Delta t}-1\right]
^{2}+2S_{t}\left[  e^{b\Delta t}-1\right]  \Delta S_{t}+S_{t}^{2}e^{2b\Delta
t}\left[  e^{-r\left(  t+\Delta t\right)  }T_{t+\Delta t}^{\left(  2\right)
}-e^{-rt}T_{t}^{\left(  2\right)  }+m_{2}\Delta t\right]  . \label{PJAD2}%
\end{equation}
We can then hedge the term $Q_{2}$ by constructing $\mathcal{P}_{t}^{\left(
2\right)  }$:

\begin{proposition}
\label{Proposition4}If $\Delta t$ is not negligible compared to $\Delta S_{t}
$, to hedge the term $Q_{2},$ we invest in $C_{2}$ units of $\mathcal{P}%
_{t}^{\left(  2\right)  }$, consisting of $S_{t}^{2}e^{2b\Delta t}e^{-r\left(
t+\Delta t\right)  }$ units of $T_{t}^{\left(  2\right)  }$ and
\begin{align*}
&  \left\{  S_{t}^{2}e^{2b\Delta t-r\left(  t+\Delta t\right)  }T_{t}^{\left(
2\right)  }-S_{t}^{2}\left[  e^{b\Delta t}-1\right]  ^{2}+2S_{t}\left[
e^{b\Delta t}-1\right]  \Delta S_{t}\right. \\
&  \ \ \ \ \ +\left.  S_{t}^{2}e^{2b\Delta t}\left[  -e^{-rt}T_{t}^{\left(
2\right)  }+m_{2}\Delta t\right]  \right\}  /\left[  e^{r\Delta t}-1\right]
\end{align*}
units of cash in a risk-free bank account.
\end{proposition}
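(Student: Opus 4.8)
The plan is to verify the proposition by the same direct book-keeping argument used for Proposition~\ref{PropositionHVS} and Proposition~\ref{Proposition3}: I would write down the value of the proposed basket at time $t$ and at time $t+\Delta t$, subtract, and check that the change of value equals $C_2(\Delta S_t)^2=Q_2$. The only structural ingredient is the decomposition~(\ref{PJAD2}) of $(\Delta S_t)^2$, which is already established in the text from the solution~(\ref{Ss}) under the standing assumptions $\sigma=0$ and at most one jump of $X$ in $(t,t+\Delta t]$ (so that $\Delta S_t=S_t[e^{b\Delta t}(1+\Delta X_t)-1]$ as in~(\ref{DeltaS}), and, after substituting $\Delta X_t$ and using the one-jump identity $(\Delta X_t)^2=e^{-r(t+\Delta t)}T^{(2)}_{t+\Delta t}-e^{-rt}T^{(2)}_t+m_2\Delta t$ coming from~(\ref{T}) and~(\ref{CompPower})). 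So I would simply cite~(\ref{PJAD2}) and focus on the portfolio accounting.

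Concretely, write $\gamma:=S_t^2e^{2b\Delta t}e^{-r(t+\Delta t)}$ for the number of units of $T^{(2)}$ held per unit of $\mathcal{P}_t^{(2)}$, and let $D$ be the bank-account amount stated in the proposition. The $T^{(2)}$ holding is worth $\gamma T^{(2)}_t$ at time $t$ and $\gamma T^{(2)}_{t+\Delta t}$ at time $t+\Delta t$, a change of $\gamma(T^{(2)}_{t+\Delta t}-T^{(2)}_t)$; the deposit grows from $D$ to $De^{r\Delta t}$, a change of $D(e^{r\Delta t}-1)$. Adding these and inserting the stated $D$, the factor $e^{r\Delta t}-1$ cancels and the change of value of $\mathcal{P}_t^{(2)}$ becomes
\[
\gamma T^{(2)}_{t+\Delta t}-\gamma T^{(2)}_t+S_t^2e^{2b\Delta t-r(t+\Delta t)}T^{(2)}_t-S_t^2\bigl[e^{b\Delta t}-1\bigr]^2+2S_t\bigl[e^{b\Delta t}-1\bigr]\Delta S_t-S_t^2e^{2b\Delta t}e^{-rt}T^{(2)}_t+S_t^2e^{2b\Delta t}m_2\Delta t .
\]
Since $\gamma T^{(2)}_t=S_t^2e^{2b\Delta t-r(t+\Delta t)}T^{(2)}_t$, the second and third terms cancel; collecting the remaining power-jump terms as $S_t^2e^{2b\Delta t}[e^{-r(t+\Delta t)}T^{(2)}_{t+\Delta t}-e^{-rt}T^{(2)}_t]$, the expression is precisely the right-hand side of~(\ref{PJAD2}), i.e. $(\Delta S_t)^2$. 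Multiplying by the $C_2$ units held gives $C_2(\Delta S_t)^2=Q_2$, and since nothing is injected or withdrawn between $t$ and $t+\Delta t$ the strategy is self-financing.

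The computation is routine; the one point I would flag is that the bank-account position contains $\Delta S_t$ through $2S_t[e^{b\Delta t}-1]\Delta S_t/(e^{r\Delta t}-1)$, which is not $\mathcal{F}_t$-measurable. This is exactly the linear-in-$\Delta S_t$ piece thrown up when $(\Delta S_t)^2$ is re-expressed via~(\ref{DeltaS}), and it plays the same role as the delta term $D_2^1F(t+\Delta t,S_t)\Delta S_t$ in~(\ref{hedgePort}): it should be understood as being carried by a position in the underlying stock rather than by literal cash, so that $\mathcal{P}_t^{(2)}$ is genuinely a combination of $T^{(2)}$, the stock, and the bank account. The only other thing to watch is keeping the $e^{b\Delta t}$ and $e^{\pm r(t+\Delta t)}$ factors aligned and making sure the at-most-one-jump assumption inherited from~(\ref{PJAD2}) is used consistently. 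I do not expect any genuine obstacle beyond this bookkeeping.
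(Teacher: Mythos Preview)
Your proposal is correct and follows exactly the approach the paper intends: the paper's own proof simply says ``The proof is similar to that of Proposition~\ref{Proposition3}'', i.e.\ the same direct bookkeeping of portfolio values at $t$ and $t+\Delta t$ combined with the decomposition~(\ref{PJAD2}). Your additional remark that the $2S_t[e^{b\Delta t}-1]\Delta S_t/(e^{r\Delta t}-1)$ piece is not $\mathcal{F}_t$-measurable and should really be interpreted as a position in the underlying is a valid observation that the paper does not make explicit.
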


\begin{proof}
The proof is similar to that of Proposition \ref{Proposition3}. \ 
\end{proof}

To hedge $C_{i}$ for $i>2$ if $\Delta t$ is not negligible compared to $\Delta
S_{t}$, we start from (\ref{DeltaS}),
\[
\left(  \Delta S_{t}\right)  ^{i}=S_{t}^{i}\left\{  \sum_{j=0}^{i}\binom{i}%
{j}\left(  -1\right)  ^{i-j}\exp\left(  jb\Delta t\right)  \left[  1+j\Delta
X_{t}+\sum_{k=2}^{j}\binom{j}{k}\left(  \Delta X_{t}\right)  ^{k}\right]
\right\}  .
\]
Substituting $\Delta X_{t}$ by $\left[  \frac{\Delta S_{t}}{S_{t}}+1\right]
\exp\left(  -b\Delta t\right)  -1$, we have%
\[
\left(  \Delta S_{t}\right)  ^{i}=S_{t}^{i}\sum_{j=0}^{i}\binom{i}{j}\left(
-1\right)  ^{i-j}e^{jb\Delta t}\left\{  1+j\left(  e^{-b\Delta t}-1\right)
+je^{-b\Delta t}\frac{\Delta S_{t}}{S_{t}}+\sum_{k=2}^{j}\binom{j}{k}\left(
\Delta X_{t}\right)  ^{k}\right\}  .
\]
Let
\begin{align}
c_{0}^{\left(  i,j\right)  }  &  =S_{t}^{i}\binom{i}{j}\left(  -1\right)
^{i-j}\exp\left(  jb\Delta t\right)  \left\{  1+j\left(  \exp\left(  -b\Delta
t\right)  -1\right)  \right\} \label{cij0}\\
c_{1}^{\left(  i,j\right)  }  &  =S_{t}^{i-1}\binom{i}{j}\left(  -1\right)
^{i-j}j\exp\left(  \left(  j-1\right)  b\Delta t\right) \label{cij1}\\
c_{k}^{\left(  i,j\right)  }  &  =S_{t}^{i}\binom{i}{j}\left(  -1\right)
^{i-j}\exp\left(  jb\Delta t\right)  \binom{j}{k}\text{ \ \ \ for
}k=2,3,...,j\text{ ,} \label{cijk}%
\end{align}
we have
\[
\left(  \Delta S_{t}\right)  ^{i}=\sum_{j=0}^{i}\left[  c_{1}^{\left(
i,j\right)  }\Delta S_{t}+\sum_{k=2}^{j}c_{k}^{\left(  i,j\right)  }\left(
\Delta X_{t}\right)  ^{k}+c_{0}^{\left(  i,j\right)  }\right]  .
\]
Similar to (\ref{PJAD}) above,%
\[
\left(  \Delta S_{t}\right)  ^{i}=\sum_{j=0}^{i}\left[  c_{1}^{\left(
i,j\right)  }\Delta S_{t}+\sum_{k=2}^{j}c_{k}^{\left(  i,j\right)  }\left[
\exp\left(  -r\left(  t+\Delta t\right)  \right)  T_{t+\Delta t}^{\left(
k\right)  }-\exp\left(  -rt\right)  T_{t}^{\left(  k\right)  }+m_{k}\Delta
t\right]  +c_{0}^{\left(  i,j\right)  }\right]  .
\]
Therefore, we can hedge the term $Q_{i}$ by constructing $\mathcal{P}%
_{t}^{\left(  i\right)  }$:

\begin{proposition}
\label{Proposition5}To hedge $Q_{i}$ for $i>2$ if $\Delta t$ is not negligible
compared to $\Delta S_{t}$, we invest in $C_{i}$ units of $\mathcal{P}%
_{t}^{\left(  i\right)  }$, consisting of $\sum_{j=k}^{i}c_{k}^{\left(
i,j\right)  }\exp\left(  -r\left(  t+\Delta t\right)  \right)  $ units of
$T_{t}^{\left(  k\right)  }$ for $k=2,3,...i,$ and
\begin{align*}
&  \frac{1}{\left[  \exp\left(  r\Delta t\right)  -1\right]  }\sum_{j=0}%
^{i}\left\{  \sum_{k=2}^{j}c_{k}^{\left(  i,j\right)  }\exp\left(  -r\left(
t+\Delta t\right)  \right)  T_{t}^{\left(  k\right)  }\right. \\
&  \ \ \ \ \ +\left.  c_{1}^{\left(  i,j\right)  }\Delta S_{t}+\sum_{k=2}%
^{j}c_{k}^{\left(  i,j\right)  }\left[  -\exp\left(  -rt\right)
T_{t}^{\left(  k\right)  }+m_{k}\Delta t\right]  +c_{0}^{\left(  i,j\right)
}\right\}
\end{align*}
units of cash in a risk-free bank account, where $c_{0}^{\left(  i,j\right)
}$, $c_{1}^{\left(  i,j\right)  }$ and $c_{k}^{\left(  i,j\right)  }$ are
defined in (\ref{cij0})-(\ref{cijk}).
\end{proposition}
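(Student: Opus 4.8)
The plan is to verify that the proposed portfolio is self-financing and that its change in value over the period $[t,t+\Delta t]$ equals $C_i(\Delta S_t)^i = Q_i$, exactly mirroring the argument used for Proposition \ref{PropositionHVS} and Proposition \ref{Proposition3}. The starting point is the decomposition established just before the statement, namely
\[
\left(\Delta S_t\right)^i = \sum_{j=0}^{i}\left[ c_1^{(i,j)}\Delta S_t + \sum_{k=2}^{j} c_k^{(i,j)}\left( e^{-r(t+\Delta t)}T_{t+\Delta t}^{(k)} - e^{-rt}T_t^{(k)} + m_k\Delta t \right) + c_0^{(i,j)} \right],
\]
with $c_0^{(i,j)}, c_1^{(i,j)}, c_k^{(i,j)}$ as in (\ref{cij0})--(\ref{cijk}). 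This identity already expresses $(\Delta S_t)^i$ in terms of increments of tradable assets — the stock increment $\Delta S_t$, the increments $T_{t+\Delta t}^{(k)} - T_t^{(k)}$ of the $k$-th power jump assets, and deterministic terms — so the portfolio construction is just a matter of allocating the right number of units to each and parking the residual deterministic part in the bank account with the correct discounting.

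The key steps, in order, are as follows. First I would rewrite the double sum $\sum_{j=0}^{i}\sum_{k=2}^{j}$ by interchanging the order of summation to $\sum_{k=2}^{i}\sum_{j=k}^{i}$, so that the coefficient multiplying $e^{-r(t+\Delta t)}T_{t+\Delta t}^{(k)}$ becomes $\sum_{j=k}^{i} c_k^{(i,j)}$; this is exactly the holding in $T_t^{(k)}$ claimed in the statement. Second, I would write down the initial value of the portfolio at time $t$: it is $C_i$ times the sum of (a) the value of $\sum_{j=k}^{i} c_k^{(i,j)} e^{-r(t+\Delta t)}$ units of $T_t^{(k)}$ for each $k=2,\dots,i$, i.e. $\sum_{k=2}^{i}\left(\sum_{j=k}^{i} c_k^{(i,j)}\right) e^{-r(t+\Delta t)} T_t^{(k)}$, plus (b) the stated cash amount. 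Third, I would compute the portfolio value at time $t+\Delta t$: the power jump holdings are now worth $\sum_{k=2}^{i}\left(\sum_{j=k}^{i} c_k^{(i,j)}\right) e^{-r(t+\Delta t)} T_{t+\Delta t}^{(k)}$, while the cash amount has grown by the factor $e^{r\Delta t}$. Fourth, I would subtract to get the change in value, and show — using the decomposition above together with the observation that the $\Delta S_t$ and $m_k\Delta t$ and $c_0^{(i,j)}$ pieces are precisely what the cash position is designed to reproduce (the $\Delta S_t$ piece being delivered by the self-financing rebalancing, exactly as in Proposition \ref{Proposition3}) — that the telescoping of the $e^{r\Delta t}$ growth against the $-1$ and $-1/(e^{r\Delta t}-1)$ terms collapses everything to $C_i(\Delta S_t)^i$.

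The computation is essentially bookkeeping, and the genuinely routine part is the algebra of the $e^{r\Delta t}/(e^{r\Delta t}-1) - 1 - 1/(e^{r\Delta t}-1) = 0$ cancellation, which already appeared twice above. The one place that needs a little care — and what I would flag as the main obstacle — is tracking the several classes of deterministic terms simultaneously: the $c_0^{(i,j)}$ constants, the $m_k\Delta t$ compensator terms, the $e^{-rt}T_t^{(k)}$ terms that are known at time $t$, and the $c_1^{(i,j)}\Delta S_t$ terms. One must be careful that the cash position is defined to absorb exactly $\sum_{j=0}^{i}\big[\sum_{k=2}^{j} c_k^{(i,j)}(e^{-r(t+\Delta t)}T_t^{(k)} - e^{-rt}T_t^{(k)} + m_k\Delta t) + c_1^{(i,j)}\Delta S_t + c_0^{(i,j)}\big]$ divided by $e^{r\Delta t}-1$ — noting that $e^{-r(t+\Delta t)}T_t^{(k)}$ appears here with a plus sign to undo the initial holding's contribution — and that when this is grown by $e^{r\Delta t}$ and combined with the appreciation of the $T_t^{(k)}$ positions, the net is precisely the right-hand side of the displayed decomposition, i.e. $(\Delta S_t)^i$. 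Since $M^{(q)}$ is defined so that $C_i = D_2^i F(t+\Delta t, S_t)/i!$, multiplying through by $C_i$ gives the claimed hedge of $Q_i$, and a short remark that $\mathcal{P}_t^{(i)}$ involves only traded assets (stock, power jump assets, bank account) completes the no-arbitrage observation. I would then simply refer to Appendix \ref{AppendixProp3} style verification, as the paper does for the earlier propositions.
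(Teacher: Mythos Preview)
Your proposal is correct and follows essentially the same approach as the paper, which simply states that the proof is similar to that of Proposition~\ref{Proposition3}. Your outline of interchanging the summation order, computing the change in value of the power jump and cash positions, and invoking the same $e^{r\Delta t}/(e^{r\Delta t}-1)-1-1/(e^{r\Delta t}-1)=0$ cancellation is exactly the intended verification.
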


\begin{proof}
The proof is similar to that of Proposition \ref{Proposition3}. \ 
\end{proof}

\paragraph{The general case}

In the case where there are infinite number of jumps from $t~$to $t+\Delta t$,
we need the following results on explicit formulae of CRP proved by
\cite{y06}. \ Let%
\begin{equation}
\mathcal{I}_{k}=\left\{  \left(  i_{1},i_{2},...,i_{l}\right)  |\ l\in\left\{
1,2,...,k\right\}  ,i_{q}\in\left\{  1,2,...,k\right\}  \text{ and }\sum
_{q=1}^{l}i_{q}\leq k\right\}  \label{setI}%
\end{equation}
and
\begin{equation}
\mathcal{L}_{k}=\left\{  \left(  i_{1},i_{2},...,i_{l}\right)  |l\in\left\{
1,2,...,k\right\}  ,i_{q}\in\left\{  1,2,...,k\right\}  ,i_{1}\geq i_{2}%
\geq\cdots\geq i_{l},\sum_{q=1}^{l}i_{q}=k\right\}  . \label{Lk}%
\end{equation}
The number of distinct values in a tuple $\phi_{k}=\left(  i_{1}^{(k)}%
,i_{2}^{(k)},...,i_{l}^{(k)}\right)  $ in $\mathcal{L}_{k}$ is less than or
equal to $l.$ \ When it is less than $l,$ it means some of the value(s) in the
tuple are repeated. \ Let the number of times $r\in\left\{
1,2,3,..,k\right\}  $ appears in the tuple $\phi_{k}=\left(  i_{1}^{(k)}%
,i_{2}^{(k)},...,i_{l}^{(k)}\right)  $ be $p_{r}^{\phi_{k}}.$ \ Denote the
terms which do not contain any stochastic integral in $\left(  X_{t+\Delta
t}-X_{t}\right)  ^{k}$ by $C_{\Delta t,\sigma}^{\left(  k\right)  } $.

\begin{proposition}
\label{TheoremC2}%
\begin{equation}
C_{\Delta t,\sigma}^{\left(  k\right)  }=\sum_{\phi_{k}=\left(  i_{1}%
^{(k)},i_{2}^{(k)},...,i_{l}^{(k)}\right)  \in\mathcal{L}_{k}}\frac{1}%
{l!}\left(  i_{1}^{(k)},i_{2}^{(k)},...,i_{l}^{(k)}\right)  !\left(
p_{1}^{\phi_{k}},p_{2}^{\phi_{k}},...,p_{k}^{\phi_{k}}\right)  !\left[
\prod\limits_{q\in\phi_{k}}m_{q}^{\prime}\right]  t^{l}, \label{C2_crp}%
\end{equation}
where $i_{1}^{(k)},...,i_{l}^{(k)}$ are the elements of $\phi_{k}$,
$p_{j}^{\phi_{k}}$'s are defined above and $\left(  i_{1}^{(k)},i_{2}%
^{(k)},...,i_{l}^{(k)}\right)  !$ is the multinomial coefficient: $\left(
i_{1}^{(k)},i_{2}^{(k)},...,i_{l}^{(k)}\right)  !=\frac{\left(  \sum_{j=1}%
^{l}i_{j}^{(k)}\right)  !}{i_{1}^{(k)}!i_{2}^{(k)}!\cdots i_{l}^{(k)}!}$,
$m_{q}^{\prime}=m_{q}$ for $q\neq2$ and $m_{2}^{\prime}=m_{2}+\sigma^{2}.$
\end{proposition}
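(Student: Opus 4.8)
The plan is to identify $C_{\Delta t,\sigma}^{(k)}$ with an ordinary moment of the increment $X_{t+\Delta t}-X_{t}$ and then to evaluate that moment by a generating-function argument. Because $X$ has stationary and independent increments, $X_{t+\Delta t}-X_{t}$ has the law of $X_{\Delta t}$, so I set $\mu_{k}(s):=E[X_{s}^{k}]$ and read the ``$t$'' displayed in (\ref{C2_crp}) as the increment length $\Delta t$; the claim is then that $C_{\Delta t,\sigma}^{(k)}=\mu_{k}(\Delta t)$ and that $\mu_{k}(\Delta t)$ equals the right-hand side of (\ref{C2_crp}). The first equality follows from the explicit chaos expansion of \cite{y06}: in that expansion every summand is either a purely deterministic function of $\Delta t$ or an iterated stochastic integral against the martingales $H^{(i)}$ (equivalently, against $W$ and $\widetilde{N}$). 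Under the standing exponential-moment hypothesis on $\nu$ one has $X_{s}\in L^{p}$ for every $p$, so all integrands involved are square-integrable and every such iterated integral is centred; hence the terms ``not containing any stochastic integral'' must sum to $E[(X_{t+\Delta t}-X_{t})^{k}]=\mu_{k}(\Delta t)$, and the problem is reduced to computing $\mu_{k}$.

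For that I would first pin down the rates. By the L\'{e}vy--Khintchine formula together with the exponential-moment assumption, $E[\exp(uX_{s})]=\exp(s\,\psi(u))$ with $\psi(u)=m_{1}u+\tfrac{1}{2}\sigma^{2}u^{2}+\int(e^{ux}-1-ux)\,\nu(\mathrm{d}x)$, analytic near $u=0$; expanding the integral as $\sum_{j\geq2}\tfrac{u^{j}}{j!}\int x^{j}\nu(\mathrm{d}x)=\sum_{j\geq2}\tfrac{m_{j}}{j!}u^{j}$ gives $\psi(u)=\sum_{j\geq1}\tfrac{m_{j}'}{j!}u^{j}$ with exactly the convention $m_{1}'=m_{1}$, $m_{2}'=m_{2}+\sigma^{2}$, $m_{j}'=m_{j}$ for $j\geq3$ of the statement. (If one prefers to avoid transforms, the same input is produced by It\^{o}'s formula applied to $s\mapsto X_{s}^{k}$: taking the purely deterministic part of that expansion yields the recursion $\mu_{k}(s)=\sum_{j=1}^{k}\binom{k}{j}m_{j}'\int_{0}^{s}\mu_{k-j}(u)\,\mathrm{d}u$ with $\mu_{0}\equiv1$, the drift terms coming from the Brownian bracket (coefficient $\binom{k}{2}\sigma^{2}$) and from the compensators of the jump-power sums $\sum_{u\leq s}(\Delta X_{u})^{j}$ (coefficient $\binom{k}{j}m_{j}$) combining into the single $\binom{k}{j}m_{j}'$; this integrated recursion is the convenient inductive form of the identity.)

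Then I would extract $\mu_{k}$ from $\sum_{k\geq0}\tfrac{\mu_{k}(s)}{k!}u^{k}=\exp\big(s\sum_{j\geq1}\tfrac{m_{j}'}{j!}u^{j}\big)=\sum_{l\geq0}\tfrac{s^{l}}{l!}\big(\sum_{j\geq1}\tfrac{m_{j}'}{j!}u^{j}\big)^{l}$. Expanding the $l$-th power over ordered tuples $(i_{1},\dots,i_{l})$ of positive integers and collecting the coefficient of $u^{k}$ forces $\sum_{r}i_{r}=k$; grouping those ordered tuples by the partition type $\phi_{k}\in\mathcal{L}_{k}$ they determine --- there being $l!/\prod_{r}p_{r}^{\phi_{k}}!$ ordered tuples of each given type --- gives $\mu_{k}(s)=\sum_{\phi_{k}\in\mathcal{L}_{k}}\frac{k!}{\prod_{j}i_{j}^{(k)}!\,\prod_{r}p_{r}^{\phi_{k}}!}\,s^{l}\prod_{q\in\phi_{k}}m_{q}'$. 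Finally, writing $\frac{k!}{\prod_{j}i_{j}^{(k)}!\,\prod_{r}p_{r}^{\phi_{k}}!}=\frac{1}{l!}\cdot\frac{(\sum_{j}i_{j}^{(k)})!}{\prod_{j}i_{j}^{(k)}!}\cdot\frac{(\sum_{r}p_{r}^{\phi_{k}})!}{\prod_{r}p_{r}^{\phi_{k}}!}$ and using $\sum_{j}i_{j}^{(k)}=k$, $\sum_{r}p_{r}^{\phi_{k}}=l$ turns the coefficient into $\frac{1}{l!}(i_{1}^{(k)},\dots,i_{l}^{(k)})!\,(p_{1}^{\phi_{k}},\dots,p_{k}^{\phi_{k}})!$, which is precisely (\ref{C2_crp}) evaluated at $s=\Delta t$.

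The step I expect to be the genuine obstacle is the first one: making rigorous that the ``deterministic part'' of the $y06$ expansion really is $E[X_{\Delta t}^{k}]$, i.e. that no deterministic contribution is hidden inside a term that carries a stochastic integral and that every genuine (iterated) integral is centred. This is exactly where the exponential-moment hypothesis on $\nu$ earns its keep, since it supplies $L^{p}$-integrability for all $p$ and hence the applicability of It\^{o}'s formula and of the martingale property to each integrand. Once that bookkeeping is granted, the remainder is the routine transform/combinatorics above, whose only subtlety is the passage from ordered index tuples to unordered partition types --- which is what produces the two multinomial factors and the $1/l!$ in the stated formula.
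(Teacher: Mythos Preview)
Your proof is correct. The paper does not actually prove this proposition at all: it is imported without argument from \cite{y06}, where it presumably arises as a by-product of building the full chaotic expansion of $(X_{t+\Delta t}-X_t)^k$ by iterating It\^{o}'s formula and keeping track of the compensator terms at each stage.

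Your route is genuinely different and more direct for the purpose at hand. You bypass the explicit construction entirely by observing that, once the chaos expansion exists, every iterated stochastic integral against the $Y^{(i)}$ (equivalently against $W$ and $\widetilde N$) is centred --- this is where the standing exponential-moment hypothesis does its work, guaranteeing the $L^2$-bounds needed for the martingale property --- so that the deterministic residue $C_{\Delta t,\sigma}^{(k)}$ must equal the ordinary moment $E[(X_{\Delta t})^k]$. You then compute that moment from the Laplace exponent via the exponential formula; this is exactly the classical moment--cumulant relation (Fa\`a di Bruno / Bell polynomials), with the cumulant rates of a L\'{e}vy process being precisely the $m_j'$ of the statement. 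The combinatorial repackaging at the end --- ordered compositions to partitions, producing the two multinomial factors and the $1/l!$ --- is carried out cleanly.

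The trade-off is that the \cite{y06} approach delivers the entire expansion (Propositions \ref{TheoremC2}, \ref{Theorem2} and Theorem \ref{newFormula}) in one sweep, while your argument yields only the constant term. For this proposition in isolation, though, your argument is shorter and conceptually cleaner. Your remark that the ``$t$'' on the right-hand side of (\ref{C2_crp}) should be $\Delta t$ is also correct; that is a typographical slip in the paper.
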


Denote the coefficient of the stochastic integral $\int_{t}^{t+\Delta t}%
\int_{t}^{t_{1}-}\cdots\int_{t}^{t_{j-1}-}$\textrm{d}$Y_{t_{j}}^{\left(
i_{1}\right)  }\cdots$\textrm{d}$Y_{t_{2}}^{\left(  i_{j-1}\right)  }%
$\textrm{d}$Y_{t_{1}}^{\left(  i_{j}\right)  }$ in $\left(  X_{t+\Delta
t}-X_{t}\right)  ^{k}$ by $\Pi_{(i_{1},i_{2},...,i_{j}),\Delta t,\sigma
}^{\left(  k\right)  }. $ \ We then have the following result. \ \ \qquad
\qquad\qquad\qquad\qquad\qquad\qquad\qquad\qquad\qquad\qquad\qquad\qquad
\qquad\qquad\qquad\qquad\qquad\qquad\qquad\qquad\qquad\qquad\qquad\qquad
\qquad\qquad\qquad\qquad\qquad

\begin{proposition}
\label{Theorem2}%
\begin{equation}
\Pi_{(i_{1},i_{2},...,i_{j}),\Delta t,\sigma}^{\left(  k\right)  }=\left(
i_{1},i_{2},...,i_{j},n\right)  !C_{\Delta t,\sigma}^{\left(  n\right)
}\text{ where }n=k-\sum_{p=1}^{j}i_{p}\text{.} \label{PI}%
\end{equation}

\end{proposition}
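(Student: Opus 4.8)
The plan is to derive the full chaos expansion of $\left(X_{t+\Delta t}-X_{t}\right)^{k}$ by iterating It\^{o}'s formula applied to powers of the L\'{e}vy increment, and then to read off $\Pi_{(i_{1},\dots,i_{j}),\Delta t,\sigma}^{(k)}$ by a counting argument. The first step is to record the one--step recursion. Applying It\^{o}'s formula to $x\mapsto x^{k}$ along $s\mapsto X_{s}-X_{t}$ on $\left[t,t+\Delta t\right]$, using $\mathrm{d}X_{s}=\mathrm{d}Y_{s}^{(1)}+m_{1}\mathrm{d}s$, the identity $\sum_{t<u\leq s}(\Delta X_{u})^{r}=X_{s}^{(r)}-X_{t}^{(r)}$ with $\mathrm{d}X_{s}^{(r)}=\mathrm{d}Y_{s}^{(r)}+m_{r}\mathrm{d}s$, and the Gaussian It\^{o} term $\binom{k}{2}\sigma^{2}\left(X_{s-}-X_{t}\right)^{k-2}\mathrm{d}s$, one obtains the exact identity
\[
\left(X_{t+\Delta t}-X_{t}\right)^{k}=\sum_{r=1}^{k}\binom{k}{r}\left[\int_{t}^{t+\Delta t}\left(X_{s-}-X_{t}\right)^{k-r}\mathrm{d}Y_{s}^{(r)}+m_{r}^{\prime}\int_{t}^{t+\Delta t}\left(X_{s-}-X_{t}\right)^{k-r}\mathrm{d}s\right],
\]
with $m_{1}^{\prime}=m_{1}$, $m_{2}^{\prime}=m_{2}+\sigma^{2}$, $m_{r}^{\prime}=m_{r}$ for $r\geq3$, which is exactly the $m_{q}^{\prime}$ convention of Proposition \ref{TheoremC2}.

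Next I would iterate this identity, at each level replacing $\left(X_{s-}-X_{t}\right)^{k-r}$ by its own expansion, stopping when the exponent reaches $0$. Each term of the fully expanded sum is indexed by an ordered list of pairs $(r_{\ell},\text{type}_{\ell})$, $\text{type}_{\ell}\in\{\text{stochastic},\text{drift}\}$, with $\sum_{\ell}r_{\ell}=k$; its scalar prefactor is the telescoping product $\binom{k}{r_{1}}\binom{k-r_{1}}{r_{2}}\cdots=k!/(r_{1}!\cdots r_{m}!)$ times $\prod_{\ell\text{ drift}}m_{r_{\ell}}^{\prime}$, multiplying an iterated integral whose integrators are the corresponding $\mathrm{d}Y^{(r_{\ell})}$'s and $\mathrm{d}s$'s. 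Now fix a stochastic profile $\left(i_{1},\dots,i_{j}\right)$ and put $n=k-\sum_{p=1}^{j}i_{p}$. Peeling the $j$ stochastic differentials $\mathrm{d}Y^{(i_{j})},\dots,\mathrm{d}Y^{(i_{1})}$ from outermost to innermost contributes precisely
\[
\binom{k}{i_{j}}\binom{k-i_{j}}{i_{j-1}}\cdots\binom{i_{1}+n}{i_{1}}=\frac{k!}{i_{1}!\cdots i_{j}!\,n!}=\left(i_{1},\dots,i_{j},n\right)!,
\]
while the residual degree $n$ is carried entirely by $\mathrm{d}s$--integrators; summing over all ordered compositions of $n$, regrouped by the underlying partition $\phi_{n}\in\mathcal{L}_{n}$ — so that the $1/l!$, the multinomial $(i_{1}^{(n)},\dots,i_{l}^{(n)})!$ and the $(p_{1}^{\phi_{n}},\dots,p_{n}^{\phi_{n}})!$ arise exactly as in counting ordered arrangements of a multiset, each $\mathrm{d}s$ yielding a factor $m_{q}^{\prime}$ and the length-$\Delta t$ simplex yielding $\Delta t^{l}$ — reproduces verbatim the formula for $C_{\Delta t,\sigma}^{(n)}$ in Proposition \ref{TheoremC2}. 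Multiplying the two pieces gives $\Pi_{(i_{1},\dots,i_{j}),\Delta t,\sigma}^{(k)}=\left(i_{1},\dots,i_{j},n\right)!\,C_{\Delta t,\sigma}^{(n)}$.

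The step I expect to be the real work is this last one: checking that the $\mathrm{d}s$--integrators attached to a given stochastic backbone reassemble into precisely $C_{\Delta t,\sigma}^{(n)}$ and nothing else — that the binomial bookkeeping factors cleanly as the multinomial $\left(i_{1},\dots,i_{j},n\right)!$ times the $C^{(n)}$--coefficients, and that the Fubini rearrangements of the interspersed Lebesgue integrators leave a remaining simplex of total length $\Delta t$ (this is where the independent--increments structure of $X$ is used, so that conditionally on the stochastic part the drift residue behaves exactly as in the pure--drift expansion underlying Proposition \ref{TheoremC2}). An alternative that avoids some of this is induction on $k$: the one--step recursion above expresses the coefficient of $\mathrm{d}Y^{(i_{j})}\cdots\mathrm{d}Y^{(i_{1})}$ in $\left(X_{t+\Delta t}-X_{t}\right)^{k}$ in terms of the coefficient of $\mathrm{d}Y^{(i_{j-1})}\cdots\mathrm{d}Y^{(i_{1})}$ in $\left(X_{s-}-X_{t}\right)^{k-i_{j}}$ (with factor $\binom{k}{i_{j}}$), the base case $j=0$ being exactly Proposition \ref{TheoremC2}; the induction then closes using the elementary identity $\binom{k}{i_{j}}\left(i_{1},\dots,i_{j-1},n\right)!=\left(i_{1},\dots,i_{j},n\right)!$ for $n=(k-i_{j})-\sum_{p<j}i_{p}$, together with the fact that $C^{(n)}$ depends on the residual degree alone.
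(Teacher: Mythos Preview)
The paper does not prove this proposition at all: Propositions \ref{TheoremC2} and \ref{Theorem2} and Theorem \ref{newFormula} are quoted from the external reference \cite{y06}, so there is no in-paper argument to compare against. Your It\^{o}-iteration approach is the natural one and is almost certainly what the cited source does; the one-step recursion you wrote down is correct, and the telescoping of the binomials into $(i_1,\dots,i_j,n)!$ is exactly the right bookkeeping.

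One genuine gap to flag concerns your alternative induction at the end. You claim the one-step recursion expresses the coefficient of $\mathcal{S}'_{(i_1,\dots,i_j)}$ in $(X_{t+\Delta t}-X_t)^k$ solely via the $r=i_j$ stochastic term, with factor $\binom{k}{i_j}$ times the coefficient of $\mathcal{S}'_{(i_1,\dots,i_{j-1})}$ in $(X_{s-}-X_t)^{k-i_j}$. That is not the whole story: the drift terms $\binom{k}{r}m_r'\int_t^{t+\Delta t}(X_{s-}-X_t)^{k-r}\,\mathrm{d}s$ also contribute to the \emph{same} backbone $(i_1,\dots,i_j)$ whenever $k-r\geq\sum_p i_p$, since the lower power already contains $\mathcal{S}'_{(i_1,\dots,i_j)}$ with coefficient $\Pi^{(k-r)}_{(i_1,\dots,i_j),s-t,\sigma}$. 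So the induction does not close by the single multinomial identity you wrote; you still need to reassemble the $s$-dependent pieces
\[
\binom{k}{i_j}\int_t^{t+\Delta t} C^{(n)}_{s-t,\sigma}\,\mathcal{S}'_{(i_1,\dots,i_{j-1}),s-t,t}\,\mathrm{d}Y_s^{(i_j)}
\quad\text{and}\quad
\sum_{r=1}^{n}\binom{k}{r}m_r'\int_t^{t+\Delta t} C^{(n-r)}_{s-t,\sigma}\,\mathcal{S}'_{(i_1,\dots,i_j),s-t,t}\,\mathrm{d}s
\]
into $(i_1,\dots,i_j,n)!\,C^{(n)}_{\Delta t,\sigma}\,\mathcal{S}'_{(i_1,\dots,i_j),\Delta t,t}$. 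This is precisely the Fubini/simplex computation you already identified as ``the real work'' in your first approach --- the induction does not sidestep it. (The $k=2$ case shows the mechanism: the two single-$\mathrm{d}Y^{(1)}$ contributions have integrands $2m_1(s-t)$ and $2m_1(t+\Delta t-s)$, which sum to the constant $2m_1\Delta t$; the general pattern is that summing over all interleavings of the Lebesgue integrators with the stochastic backbone collapses the $s$-dependence.) Your first, direct argument is therefore the honest one; just be aware that the induction you offer as a shortcut is not actually shorter.
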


\begin{theorem}
\label{newFormula}For any L\'{e}vy process $X$, the representation of $\left(
X_{t+\Delta t}-X_{t}\right)  ^{n}$ is given by
\[
\left(  X_{t+\Delta t}-X_{t}\right)  ^{n}=\sum_{\theta_{n}\in\mathcal{I}_{n}%
}\Pi_{\theta_{n},\Delta t,\sigma}^{\left(  n\right)  }\mathcal{S}_{\theta
_{n},\Delta t,t}^{\prime}+C_{\Delta t,\sigma}^{\left(  n\right)  },
\]
where $\mathcal{I}_{n}$ is defined in (\ref{setI}), $\Pi_{\theta_{n},\Delta
t,\sigma}^{\left(  n\right)  }$ and $C_{\Delta t,\sigma}^{\left(  n\right)  }$
are defined above and $\mathcal{S}_{\left(  i_{1},i_{2},...,i_{j}\right)
,\Delta t,t}^{\prime}$ is defined to be the integral
\[
\mathcal{S}_{\left(  i_{1},i_{2},...,i_{j}\right)  ,\Delta t,t}^{\prime}%
=\int_{t}^{t+\Delta t}\int_{t}^{t_{1}-}\cdots\int_{t}^{t_{j-1}-}%
\mathrm{d}Y_{t_{j}}^{\left(  i_{1}\right)  }\cdots\mathrm{d}Y_{t_{2}}^{\left(
i_{j-1}\right)  }\mathrm{d}Y_{t_{1}}^{\left(  i_{j}\right)  }.
\]

\end{theorem}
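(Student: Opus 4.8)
The plan is to prove the identity by induction on $n$, using the known chaos expansion for low powers and a recursive step that peels off one factor of $(X_{t+\Delta t}-X_t)$ at a time. First I would set up the combinatorial bookkeeping: expand $(X_{t+\Delta t}-X_t)^n$ via the multinomial theorem in terms of the increments of the compensated power jump processes $Y^{(i)}$ and the drift terms $m_i'\Delta t$ (recall $Y^{(i)}_t = X^{(i)}_t - m_i t$ and for the pure-jump part $Y^{(i)}_t=\int_0^t\!\int_{\mathbb R} x^i\,\tilde N(\mathrm ds,\mathrm dx)$; the Brownian contribution enters only through $m_2'=m_2+\sigma^2$). The core algebraic fact I need is a product formula expressing $X_{t+\Delta t}-X_t$ (equivalently an increment of $Y^{(1)}$ plus $m_1\Delta t$) times an iterated integral $\mathcal S'_{(i_1,\dots,i_j),\Delta t,t}$ as a sum of iterated integrals of the form $\mathcal S'_{(i_1,\dots,i_j,1),\Delta t,t}$, together with ``diagonal'' terms coming from the quadratic covariations $\mathrm d[Y^{(1)},Y^{(i_1)}]$ which raise $i_1$ to $i_1+1$, plus the drift term $m_1\Delta t$ times $\mathcal S'$. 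Iterating this $n$ times generates exactly the index set $\mathcal I_n$ of equation~(\ref{setI}), and collecting the purely deterministic leftovers produces $C^{(n)}_{\Delta t,\sigma}$.

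Next I would verify that the coefficients come out as claimed. The cleanest route is to check the two Propositions~\ref{TheoremC2} and~\ref{Theorem2} are mutually consistent with the recursion, rather than re-deriving them: Proposition~\ref{TheoremC2} gives $C^{(k)}_{\Delta t,\sigma}$ as a sum over partitions $\phi_k\in\mathcal L_k$ weighted by $\frac1{l!}$ times the multinomial coefficient $(i_1^{(k)},\dots,i_l^{(k)})!$, the symmetry factor $(p_1^{\phi_k},\dots,p_k^{\phi_k})!$ for repeated parts, and $\prod_q m_q'$; this is precisely $\mathbb E[(X_{t+\Delta t}-X_t)^k]$ written as a sum over set partitions of a $k$-element set into blocks of sizes $i_1^{(k)},\dots,i_l^{(k)}$ — the standard moment-cumulant-type expansion for the moments of a Lévy increment, where the cumulant-like quantities are the $m_q'\,\Delta t$. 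Since the iterated stochastic integrals all have mean zero, taking expectations in the claimed formula must return this, which fixes $C^{(n)}_{\Delta t,\sigma}$. For the stochastic part, Proposition~\ref{Theorem2} asserts that the coefficient $\Pi^{(k)}_{(i_1,\dots,i_j),\Delta t,\sigma}$ of the integral $\mathcal S'_{(i_1,\dots,i_j),\Delta t,t}$ equals $(i_1,\dots,i_j,n)!\,C^{(n)}_{\Delta t,\sigma}$ with $n=k-\sum_p i_p$; the multinomial coefficient $(i_1,\dots,i_j,n)!$ counts the ways to assign the $k$ tensor slots of $(X_{t+\Delta t}-X_t)^{\otimes k}$ to the $j$ integrator blocks of sizes $i_1,\dots,i_j$ and the remaining $n$ slots to the deterministic factor, and $C^{(n)}_{\Delta t,\sigma}$ accounts for how those $n$ slots further contract among themselves into the constant term. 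I would prove this by induction: adding one more factor of $X_{t+\Delta t}-X_t$ either attaches its slot to the deterministic pool (multiplying by the appropriate linear factor and updating $n\to n+1$, which is how $C^{(n)}$ evolves under the recursion of Proposition~\ref{TheoremC2}), or it starts/extends an integrator block, and in each case the multinomial coefficient updates by exactly the Pascal-type relation built into $(i_1,\dots,i_j,n)!$.

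Finally I would assemble the pieces: the base case $n=1$ is $X_{t+\Delta t}-X_t = Y^{(1)}_{t+\Delta t}-Y^{(1)}_t + m_1\Delta t = \mathcal S'_{(1),\Delta t,t} + C^{(1)}_{\Delta t,\sigma}$, which matches since $\mathcal I_1=\{(1)\}$, $\Pi^{(1)}_{(1),\Delta t,\sigma}=(1,0)!=1$ and $C^{(1)}_{\Delta t,\sigma}=m_1\Delta t$; the inductive step is the product-formula computation above, and one checks that every index tuple produced lies in $\mathcal I_n$ (the constraint $\sum i_q\le n$ being automatic since each step adds at most one to the total weight) and that no tuple in $\mathcal I_n$ is missed. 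The main obstacle I expect is purely combinatorial: bookkeeping the diagonal/contraction terms when the product formula collides two integrators of orders $a$ and $b$ into one of order $a+b$ (using $Y^{(a)}_t Y^{(b)}_t = \int Y^{(a)}_{s-}\,\mathrm dY^{(b)}_s + \int Y^{(b)}_{s-}\,\mathrm dY^{(a)}_s + [Y^{(a)},Y^{(b)}]_t$ and $\mathrm d[Y^{(a)},Y^{(b)}]_s$ contributing an $x^{a+b}$-type jump term), and confirming that after relabelling the resulting multinomial and symmetry factors agree on the nose with $(i_1,\dots,i_j,n)!\,C^{(n)}_{\Delta t,\sigma}$ — in particular that the $\frac1{l!}$ and repeated-part factors $(p_1,\dots,p_k)!$ in $C^{(n)}$ are exactly what is needed so that the unordered partition sum in Proposition~\ref{TheoremC2} reproduces the ordered multinomial expansion without over- or under-counting. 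I would organise this by first proving the clean ordered (non-reduced) version of the expansion and only at the end symmetrising to the partition form stated in Proposition~\ref{TheoremC2}.
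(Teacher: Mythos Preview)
The paper does not actually prove Theorem~\ref{newFormula}: it, together with Propositions~\ref{TheoremC2} and~\ref{Theorem2}, is imported from the reference \cite{y06} (``we need the following results on explicit formulae of CRP proved by \cite{y06}'') and stated without argument. So there is no in-paper proof to compare against.

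Your strategy---induction on $n$ via the It\^o product formula for $(Y^{(1)}_{t+\Delta t}-Y^{(1)}_t)\cdot\mathcal S'_{(i_1,\dots,i_j),\Delta t,t}$, with the quadratic-covariation identity $[Y^{(a)},Y^{(b)}]_s=Y^{(a+b)}_s+m'_{a+b}\,s$ supplying the diagonal contractions---is the standard route to such chaos expansions for power-jump martingales and is almost certainly what underlies the cited result. One point where your sketch is too loose: the product of a single increment with an $j$-fold iterated integral does \emph{not} produce only the one appended tuple $(i_1,\dots,i_j,1)$ and the one collision at $i_1$. Applying integration by parts once handles only the outermost layer; the term $\int A_{s-}\,\mathrm dB_s$ pushes the factor $Y^{(1)}$ one level inward and forces you to recurse. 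The end result is the full quasi-shuffle: the index $1$ may be inserted at any of the $j{+}1$ slots, or may merge with any $i_p$ (raising it to $i_p{+}1$ and spitting out a compensator $m'_{i_p+1}$ times a shorter integral). You will need all of these terms to populate $\mathcal I_n$ and to get the multinomial coefficients $(i_1,\dots,i_j,n)!$ of Proposition~\ref{Theorem2} to come out right; with only the two terms you named the induction does not close. Your later remarks about ``bookkeeping the diagonal/contraction terms'' and proving an ordered version first suggest you are aware of this, but the product-formula step as written should be corrected before you attempt the coefficient match.
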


\ If $\Delta t$ is negligible compared to $\Delta S_{t}$, from (\ref{S}) and
Theorem \ref{newFormula},\ we have%
\begin{equation}
\left(  \Delta S_{t}\right)  ^{n}=S_{t}^{n}\left(  \Delta X_{t}\right)
^{n}=S_{t}^{n}\left(  X_{t+\Delta t}-X_{t}\right)  ^{n}=S_{t}^{n}\left[
\sum_{\theta_{n}\in\mathcal{I}_{n}}\Pi_{\theta_{n},\Delta t,\sigma}^{\left(
n\right)  }\mathcal{S}_{\theta_{n},\Delta t,t}^{\prime}+C_{\Delta t,\sigma
}^{\left(  n\right)  }\right]  . \label{DeltaSN}%
\end{equation}
\ In order to hedge $\left(  \Delta S_{t}\right)  ^{n}$, we can invest in the
\textit{power jump integral process}:%
\[
\mathcal{U}_{\left(  i_{1},i_{2},...,i_{j}\right)  ,\Delta t,t}=\exp\left(
r\Delta t\right)  \mathcal{S}_{\left(  i_{1},i_{2},...,i_{j}\right)  ,\Delta
t,t}^{\prime}.
\]
Note that since $Y^{\left(  i\right)  }$'s are martingales, $\left\{
\mathcal{S}_{\left(  i_{1},i_{2},...,i_{j}\right)  ,\Delta t,t}^{\prime}%
,t\geq0\right\}  $'s are also martingales. \ Therefore, the discounted
versions of the $\mathcal{U}_{\left(  i_{1},i_{2},...,i_{j}\right)  ,\Delta
t,t}$ are $Q$-martingales:%
\[
E_{Q}\left[  \exp\left(  -r\Delta t\right)  \mathcal{U}_{\left(  i_{1}%
,i_{2},...,i_{j}\right)  ,\Delta t,t}|\mathcal{F}_{s}\right]  =E_{Q}\left[
\mathcal{S}_{\left(  i_{1},i_{2},...,i_{j}\right)  ,\Delta t,t}^{\prime
}|\mathcal{F}_{s}\right]  =\mathcal{S}_{\left(  i_{1},i_{2},...,i_{j}\right)
,s-t,t}^{\prime},\ \ \ t\leq s\leq t+\Delta t.
\]
Hence the market allowing trade in the bond, the stock and the power jump
integral assets remains arbitrage-free. \ From (\ref{DeltaSN}), we have
$\left(  \Delta S_{t}\right)  ^{n}=S_{t}^{n}\left[  \sum_{\theta_{n}%
\in\mathcal{I}_{n}}\Pi_{\theta_{n},\Delta t,\sigma}^{\left(  n\right)  }%
\exp\left(  -r\Delta t\right)  \mathcal{U}_{\theta_{n},\Delta t,t}+C_{\Delta
t,\sigma}^{\left(  n\right)  }\right]  .$

\begin{proposition}
\label{Proposition3G}If $\Delta t$ is negligible compared to $\Delta S_{t}$,
to hedge $Q_{i},$ we invest in $C_{i}$ units of $\mathcal{P}_{t}^{\left(
i\right)  }$, consisting of $S_{t}^{i}\Pi_{\theta_{i},\Delta t,\sigma
}^{\left(  i\right)  }\exp\left(  -r\Delta t\right)  $ units of $\mathcal{U}%
_{\theta_{i},\Delta t,t}$ for $\theta_{i}\in\mathcal{I}_{i}$ and $\frac
{S_{t}^{i}C_{\Delta t,\sigma}^{\left(  i\right)  }}{\left(  \exp\left(
r\Delta t\right)  -1\right)  }$ units of cash in a risk-free bank account. \ 
\end{proposition}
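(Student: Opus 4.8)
The plan is to check directly, along the lines of Propositions \ref{PropositionHVS}, \ref{Proposition3} and \ref{Proposition4}, that the buy-and-hold portfolio $\mathcal{P}_t^{(i)}$ described in the statement, held over the single period $[t,t+\Delta t]$, has change of value exactly $(\Delta S_t)^i$; then $C_i$ units of it reproduce $Q_i=C_i(\Delta S_t)^i$. The starting point is the decomposition (\ref{DeltaSN}) coming from Theorem \ref{newFormula}, $(\Delta S_t)^i=S_t^i\big[\sum_{\theta_i\in\mathcal{I}_i}\Pi_{\theta_i,\Delta t,\sigma}^{(i)}\exp(-r\Delta t)\,\mathcal{U}_{\theta_i,\Delta t,t}+C_{\Delta t,\sigma}^{(i)}\big]$, where $C_{\Delta t,\sigma}^{(i)}$ (given explicitly in Proposition \ref{TheoremC2}) is deterministic and so $\mathcal{F}_t$-measurable, while $\mathcal{U}_{\theta_i,\Delta t,t}=\exp(r\Delta t)\,\mathcal{S}'_{\theta_i,\Delta t,t}$ is $\mathcal{F}_{t+\Delta t}$-measurable and carries the randomness.

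The verification then proceeds by recording the values of the two pieces of $\mathcal{P}_t^{(i)}$ at the two relevant times. Each iterated integral $\mathcal{S}'_{\theta_i,\Delta t,t}$ is started at $t$, hence vanishes at time $t$ and equals its terminal value at $t+\Delta t$; since its discounted price is a $Q$-martingale (the remark just before the statement, which also gives that the enlarged market is arbitrage-free), one unit of the power jump integral asset $\mathcal{U}_{\theta_i,\Delta t,t}$ costs $0$ at time $t$. Consequently the $\mathcal{U}$-positions are free to set up, so the initial value of $\mathcal{P}_t^{(i)}$ is just the cash deposit $S_t^i C_{\Delta t,\sigma}^{(i)}/(\exp(r\Delta t)-1)$, which by time $t+\Delta t$ has grown to $S_t^i C_{\Delta t,\sigma}^{(i)}\exp(r\Delta t)/(\exp(r\Delta t)-1)$, while the $\mathcal{U}$-positions are then worth $\sum_{\theta_i\in\mathcal{I}_i}S_t^i\Pi_{\theta_i,\Delta t,\sigma}^{(i)}\exp(-r\Delta t)\,\mathcal{U}_{\theta_i,\Delta t,t}=\sum_{\theta_i\in\mathcal{I}_i}S_t^i\Pi_{\theta_i,\Delta t,\sigma}^{(i)}\mathcal{S}'_{\theta_i,\Delta t,t}$. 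Subtracting and using the collapse $\exp(r\Delta t)/(\exp(r\Delta t)-1)-1/(\exp(r\Delta t)-1)=1$ (as in the earlier propositions), the change of value is $S_t^i\big[\sum_{\theta_i\in\mathcal{I}_i}\Pi_{\theta_i,\Delta t,\sigma}^{(i)}\mathcal{S}'_{\theta_i,\Delta t,t}+C_{\Delta t,\sigma}^{(i)}\big]$, which is the right-hand side of (\ref{DeltaSN}); hence $\Delta\mathcal{P}_t^{(i)}=(\Delta S_t)^i$, and a portfolio held fixed over the period is self-financing there by construction.

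The only real point of care --- the nearest thing to an obstacle --- is the information bookkeeping: one must be sure that $C_{\Delta t,\sigma}^{(i)}$ is genuinely deterministic, so that it can be pre-funded through the bank account, and that each freshly issued asset $\mathcal{U}_{\theta_i,\Delta t,t}$ can be entered into at zero cost, which is precisely the martingale statement preceding the proposition. With those two facts in hand the rest is the same short algebraic cancellation that appears in the proofs of the earlier propositions of this section.
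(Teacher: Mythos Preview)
Your proof is correct and follows essentially the same buy-and-hold verification used for Propositions \ref{PropositionHVS} and \ref{Proposition3}: compute the initial and terminal values of $\mathcal{P}_t^{(i)}$, subtract, and recover the right-hand side of (\ref{DeltaSN}). The paper does not spell out a separate proof for this proposition, but the argument is implicitly the same as that in Appendix \ref{AppendixProp3}; the only new ingredient you correctly identify is that the freshly issued integral asset $\mathcal{U}_{\theta_i,\Delta t,t}$ has zero price at time $t$ (since $\mathcal{S}'_{\theta_i,0,t}=0$), which simplifies the initial-value computation compared with Proposition \ref{Proposition3}.
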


\begin{remark}
In this general case, we can only derive simple hedging strategy when $\Delta
t$ is negligible. \ Note that both power jump assets introduced by
\cite{cns05} and power jump integral assets introduced here are imaginary
assets. \ In reality, we only observe a discrete series of stock price, $S$,
while there are an infinite number of jumps between any finite time interval
if the underlying L\'{e}vy process has infinite activity. \ In other words,
the values of these assets cannot be observed in the market and hence cannot
be traded. \ The moment swaps introduced by \cite{s05} depend on the increment
of the underlying stock, $\Delta S$, and can hence be observed and traded in
reality. \ We include the discussion on power jump assets for theoretical interest.
\end{remark}

Alternatively, note that in $\mathcal{S}_{\left(  i_{1},i_{2},...,i_{j}%
\right)  ,\Delta t,t}^{\prime}$, the integrand $\int_{t}^{t_{1}-}\cdots
\int_{t}^{t_{j-1}-}\mathrm{d}Y_{t_{j}}^{\left(  i_{1}\right)  }\cdots
\mathrm{d}Y_{t_{2}}^{\left(  i_{j-1}\right)  }$ is a predictable function.
\ Since we assume $\Delta t$ to be very small, we can hedge $\left(  \Delta
S_{t}\right)  ^{n}$ by investing in the power jump assets. \ Let $\phi
_{j,s}^{\left(  n\right)  }$ be the predictable function such that%
\begin{equation}
\left(  \Delta S_{t}\right)  ^{n}=S_{t}^{n}\left[  \sum_{\theta_{n}%
\in\mathcal{I}_{n}}\Pi_{\theta_{n},\Delta t,\sigma}^{\left(  n\right)
}\mathcal{S}_{\theta_{n},\Delta t,t}^{\prime}+C_{\Delta t,\sigma}^{\left(
n\right)  }\right]  =\sum_{j=1}^{n}\int_{t}^{t+\Delta t}\phi_{j,s}^{\left(
n\right)  }\mathrm{d}Y_{s}^{\left(  j\right)  }+S_{t}^{n}C_{\Delta t,\sigma
}^{\left(  n\right)  }\text{,} \label{phijsn}%
\end{equation}
where $\phi_{j,s}^{\left(  n\right)  }$'s can be calculated by rearranging the
terms in $S_{t}^{n}\sum_{\theta_{n}\in\mathcal{I}_{n}}\Pi_{\theta_{n},\Delta
t,\sigma}^{\left(  n\right)  }\mathcal{S}_{\theta_{n},\Delta t,t}^{\prime}$'s.
\ We then have%
\[
\left(  \Delta S_{t}\right)  ^{n}=\int_{t}^{t+\Delta t}\sum_{j=1}^{n}%
-e^{-2rs}T_{s}^{\left(  j\right)  }\phi_{j,s}^{\left(  n\right)  }%
\mathrm{d}e^{rs}+S_{t}^{n}C_{\Delta t,\sigma}^{\left(  n\right)  }+\sum
_{j=1}^{n}\int_{t}^{t+\Delta t}\phi_{j,s}^{\left(  n\right)  }e^{-rs}%
\mathrm{d}T_{s}^{\left(  j\right)  }.
\]
Hence, to hedge $\left(  \Delta S_{t}\right)  ^{n}$, we invest $\sum_{j=1}%
^{n}-e^{-2r\Delta t}T_{t}^{\left(  j\right)  }\phi_{j,t}^{\left(  n\right)
}+\frac{S_{t}^{n}C_{\Delta t,\sigma}^{\left(  n\right)  }}{\exp\left(  r\Delta
t\right)  -1}$ in a riskless bank account and invest $\phi_{j,t}^{\left(
n\right)  }e^{-r\Delta t}$ units of $T_{t}^{\left(  i\right)  }$ for
$j=1,2,...,n$.

\subsubsection{Extension of delta and gamma hedges\label{SectionDGLit}%
\label{SectionExt}}

So far we have discussed the hedging strategies using moment swaps and power
jump assets. \ In this section, we give a brief introduction to delta and
gamma hedging strategies and extend it to obtain perfect hedging in a L\'{e}vy
market. \ Let $\Pi$ be the value of the portfolio under consideration. \ The
delta and gamma dynamic hedging strategies are constructed using a Taylor
expansion:%
\begin{equation}
\delta\Pi=\frac{\partial\Pi}{\partial S}\delta S+\frac{\partial\Pi}{\partial
t}\delta t+\frac{1}{2}\frac{\partial^{2}\Pi}{\partial S^{2}}\delta S^{2}%
+\frac{1}{2}\frac{\partial^{2}\Pi}{\partial t^{2}}\delta t^{2}+\frac
{\partial^{2}\Pi}{\partial S\partial t}\delta S\delta t+..., \label{mt}%
\end{equation}
where $\delta\Pi$ and $\delta S$ are the changes in $\Pi$ and $S$ in a small
time interval $\delta t$. \ \cite[Chapter 14]{h03} gave detailed descriptions
of the strategies in finance. \ The \textit{delta} of a portfolio is defined
as $\frac{\partial\Pi}{\partial S}.$ \ Delta hedging eliminates the first term
on the right-hand side of (\ref{mt}). \ The \textit{gamma} of a portfolio is
defined as $\frac{\partial^{2}\Pi}{\partial S^{2}}.$ \ Gamma hedging
eliminates the third term on the right-hand side of (\ref{mt}).

Below we extend the gamma hedge in order to obtain a perfect hedging strategy
in a L\'{e}vy market. \ Note that equation (\ref{mt}) is a multivariate Taylor
expansion and it is assumed that all the cross derivative terms are
negligible. \ In equation (\ref{f2}), we applied Taylor expansions twice to
avoid the cross derivative terms, since the value of $\Delta t$ is
{deterministic} and known at time $t$. \ Hence, for fixed $n$, the
approximation by:
\begin{equation}
F\left(  t+\Delta t,S_{t}+\Delta S_{t}\right)  -F\left(  t,S_{t}\right)
=\sum_{i=1}^{\infty}\frac{D_{1}^{i}F\left(  t,S_{t}\right)  }{i!}\left(
\Delta t\right)  ^{i}+\sum_{i=1}^{n}\frac{D_{2}^{i}F\left(  t+\Delta
t,S_{t}\right)  }{i!}\left(  \Delta S_{t}\right)  ^{i} \label{f4}%
\end{equation}
is more accurate than%
\[
F\left(  t+\Delta t,S_{t}+\Delta S_{t}\right)  -F\left(  t,S_{t}\right)
=\sum_{i=1}^{\infty}\frac{D_{1}^{i}F\left(  t,S_{t}\right)  }{i!}\left(
\Delta t\right)  ^{i}+\sum_{i=1}^{n}\frac{D_{2}^{i}F\left(  t,S_{t}\right)
}{i!}\left(  \Delta S_{t}\right)  ^{i}.
\]
Moreover, in the literature, $\Delta t$ and $\Delta S$ are assumed to be very
small (such that the cross terms and higher terms are negligible). \ We
provide the flexibility of specifying the values of $\Delta t$ and $\Delta
S_{t}$ such that static hedging is possible in some cases.

It is natural to extend the delta and gamma hedging strategies to the $n$-th
derivative of the portfolio with respect to the underlying asset using the
approximation of equation (\ref{f4}). \ Let $F$ be the value of our portfolio
to be hedged and there are $n-1$ traded financial derivatives, $F_{i},$
$i=2,...,n$, which are linearly independent of each other. \ Suppose we add
$w_{i}~$number of $F_{i}$ into our portfolio, $i=2,...,n$ and add $w_{1}$
number of the underlying asset, which is denoted by $F_{1}$. \ We assume that
$D_{2}^{j}F_{i}\left(  t+\Delta t,S_{t}\right)  $ are nonzero for $j=i$ and
can be zero, or not, for $j=1,2,...,i-1,i+1,...,n.$ \ In general, to make the
portfolio $k$-th moment neutral for $k=1,...,n$, we need $D_{2}^{k}F\left(
t+\Delta t,S_{t}\right)  +\sum_{i=1}^{n}w_{i}D_{2}^{k}F_{i}\left(  t+\Delta
t,S_{t}\right)  =0$ for $k=1,2,...,n$. \ Therefore, we have $n$ equations for
$n$ unknown, $w_{i}$'s. \ Note that whether the system of equations is
solvable depends on the values of $D_{2}^{k}F_{i}\left(  t+\Delta
t,S_{t}\right)  $, $i,k=1,2,...,n.$ \ Therefore, the traded financial
derivatives have to be chosen such that the system of equations are solvable.

\section{Minimal variance portfolios in a L\'{e}vy market\label{SectionMVP}}

In Section \ref{SectionHS}, we gave the perfect hedging portfolios, given that
the moment swaps, power jump assets and certain financial derivatives that
depend on the same underlying asset, are available in the market. \ In this
section, we demonstrate how to use the minimal variance portfolios derived by
\cite{bdlop03} to hedge the higher order terms in the Taylor expansion,
investing only in a risk-free bank account, the underlying asset and, if
possible, variance swaps.

\cite{bdlop03} derived the minimal variance hedging portfolio of a contingent
claim in a market such that the stock prices are independent L\'{e}vy
martingales in terms of Malliavin derivatives. \ We demonstrate how to use
their results to hedge the terms $Q_{i}$. \ Following \cite{bdlop03}, to
derive the minimal variance portfolio, we need to confine ourselves to the
case of L\'{e}vy processes, $\eta=\left\{  \eta\left(  t\right)  ,0\leq t\leq
T\right\}  $, which are martingales on the filtered probability space under
consideration. \ That is, $E\left[  \eta\left(  t\right)  \right]  =0$ and
$E\left[  \eta^{2}\left(  t\right)  \right]  =\left(  \sigma^{2}%
+\int_{\mathbb{R}}x^{2}\nu\left(  \mathrm{d}x\right)  \right)  t$.
\ \cite{bdlop03} called such processes \textit{L\'{e}vy martingales of the
second order}. \ From \cite[equation (2.1)]{bdlop03}, $\eta\left(  t\right)  $
has the following representation formula:%
\begin{equation}
\eta\left(  t\right)  =\sigma W\left(  t\right)  +\int_{0}^{t}\int
_{\mathbb{R}}x\tilde{N}\left(  \mathrm{d}s,\mathrm{d}x\right)
,\ \ \ \text{for }0\leq t\leq T, \label{decomp}%
\end{equation}
where $\sigma\in\mathbb{R}^{+}$, $W\left(  t\right)  $ is the standard
Brownian motion and $\tilde{N}\left(  \mathrm{d}t,\mathrm{d}x\right)  $ is
defined in (\ref{cprm}).

Based on the methodology developed by \cite{bdlop03}, we modify their results
to express the minimal variance portfolio for independent securities without
referring to Malliavin calculus. \ \cite{bdlop03} assumed the underlying asset
is directly represented by the L\'{e}vy martingale, that is, $S_{t}%
=\eta\left(  t\right)  $. \ We find it more natural to employ an exponential
model and allow a drift term in the model of the underlying asset since the
mean of $\eta\left(  t\right)  $ is zero. \ By extending (\ref{S}), we suppose
there are $k$ independent securities prices $S_{1},...,S_{k} $, modeled as
follows:%
\begin{equation}
\mathrm{d}S_{j}\left(  t\right)  =b_{j}S_{j}\left(  t_{-}\right)
\mathrm{d}t+S_{j}\left(  t_{-}\right)  \mathrm{d}\eta_{j}\left(  t\right)
,\text{ \ \ }j=1,...,k, \label{S2}%
\end{equation}
where $b_{j}\in\mathbb{R}$. \ Let $L_{2}\left(  \Omega\right)  =L_{2}\left(
\Omega,\mathcal{F},P\right)  $ and $\xi\in L^{2}\left(  \Omega\right)  $ be a
random variable to be hedged. \ Let $\mathcal{A}$ be the set of all admissible
portfolios. \ The minimal variance portfolio is an admissible portfolio,
$\varphi\in\mathcal{A}$ such that%
\begin{equation}
E\left[  \left(  \xi-E\left[  \xi\right]  -\sum_{j=1}^{k}\int_{0}^{T}%
\varphi_{j}\left(  s\right)  \mathrm{d}S_{j}\left(  s\right)  \right)
^{2}\right]  =\inf_{\psi\in\mathcal{A}}E\left[  \left(  \xi-E\left[
\xi\right]  -\sum_{j=1}^{k}\int_{0}^{T}\psi_{j}\left(  s\right)
\mathrm{d}S_{j}\left(  s\right)  \right)  ^{2}\right]  . \label{mvh}%
\end{equation}
This is known as the \textit{minimal variance hedging} for incomplete markets.
\ Define a measure of the length of $\xi$ by $\left\Vert \xi\right\Vert
=\left(  \int_{\Omega}\left\vert \xi\left(  \omega\right)  \right\vert
^{2}P\left(  \mathrm{d}\omega\right)  \right)  ^{1/2}=\left(  E\left[
\left\vert \xi\right\vert ^{2}\right]  \right)  ^{1/2}.$ \ Following
\cite[Definition 3.10 (a)]{bdlop03}, let $\mathbb{D}_{1,2}$ be the set of all
$\xi\in L_{2}\left(  \Omega\right)  $ such that the chaos expansion defined in
(\ref{gprm}) satisfies the condition%
\[
\left\Vert \xi\right\Vert _{\mathbb{D}_{1,2}}^{2}=E\left[  \xi^{2}\right]
+\sum_{n=1}^{\infty}\sum_{j_{1},...,j_{n}=1,2}\int_{U_{j_{n}}}\left\Vert
g_{n}^{\left(  j_{1},...,j_{n}\right)  }\left(  \cdot,u_{n}^{\left(
j_{n}\right)  }\right)  \right\Vert _{L_{2}\left(  G_{n-1}\right)  }%
^{2}\mathrm{d}\left\langle Q_{j_{n}}\right\rangle \left(  u_{n}^{\left(
j_{n}\right)  }\right)  <\infty,
\]
where $G_{n}$ is defined in (\ref{gn}). \ The chaotic representation derived
by \cite{bdlop03} implies that every $\xi$ satisfying some moment conditions
can be expressed in the form%
\begin{equation}
\xi=E\left[  \xi\right]  +\sum_{j=1}^{k}\int_{0}^{T}f_{1}\left(
\xi;s,j\right)  \mathrm{d}W_{j}\left(  s\right)  +\sum_{j=1}^{k}\int_{0}%
^{T}\int_{\mathbb{R}}f_{2}\left(  \xi;s,x,j\right)  \tilde{N}_{j}\left(
\mathrm{d}s,\mathrm{d}x\right)  , \label{f1f2}%
\end{equation}
where $f_{1}\left(  \xi;s,j\right)  $ and $f_{2}\left(  \xi;s,x,j\right)  $
are predictable functions. \ \cite{y06} derived the computationally explicit
representation formula for $f_{1}\left(  \xi;s,j\right)  $ and $f_{2}\left(
\xi;s,x,j\right)  $ when $\xi$ is the power of increments of a L\'{e}vy
process, see Theorem \ref{newFormula}. \ The minimal variance portfolio
consisting of independent securities driven by (\ref{S2}), can be obtained by
modifying Theorem 4.1 in \cite{bdlop03}:

\begin{proposition}
\label{PropositionMVP}For any $\xi\in\mathbb{D}_{1,2}$, the minimal variance
portfolio $\varphi=\left(  \varphi_{1},...,\varphi_{k}\right)  $ in
(\ref{mvh}),
\[
\hat{\xi}=E\left[  \xi\right]  +\sum_{j=1}^{k}\int_{0}^{T}\varphi_{j}\left(
s\right)  \mathrm{d}S_{j}\left(  s\right)  ,
\]
admits the following representation:
\[
\varphi_{j}\left(  s\right)  =\frac{f_{1}\left(  \xi;s,j\right)  \sigma
_{j}+\int_{\mathbb{R}}xf_{2}\left(  \xi;s,x,j\right)  \nu_{j}\left(
\mathrm{d}x\right)  }{\left\{  \sigma_{j}^{2}+\int_{\mathbb{R}}x^{2}\nu
_{j}\left(  \mathrm{d}x\right)  \right\}  S_{j}\left(  s\right)  },
\]
where $f_{1}\left(  \xi;s,j\right)  $ and $f_{2}\left(  \xi;s,x,j\right)  $
are predictable functions defined in (\ref{f1f2}).
\end{proposition}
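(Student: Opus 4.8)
The plan is to turn the variational problem (\ref{mvh}) into a pointwise least-squares problem, using the predictable representation (\ref{f1f2}) together with the It\^{o} isometry and the orthogonality of the integrators. Set $c_j(s):=\varphi_j(s)S_j(s_-)$. From (\ref{S2}) and the decomposition (\ref{decomp}) of $\eta_j$,
\[
\int_0^T\varphi_j(s)\,\mathrm{d}S_j(s)=\int_0^T c_j(s)\sigma_j\,\mathrm{d}W_j(s)+\int_0^T\!\!\int_{\mathbb{R}}c_j(s)x\,\tilde N_j(\mathrm{d}s,\mathrm{d}x)+\int_0^T b_j c_j(s)\,\mathrm{d}s .
\]
Because the $\eta_j$ are independent, the Brownian motions $W_1,\dots,W_k$ and the compensated measures $\tilde N_1,\dots,\tilde N_k$ drive mutually orthogonal spaces of square-integrable martingales, and the representation (\ref{f1f2}), which holds for $\xi\in\mathbb{D}_{1,2}$ by \cite{bdlop03}, writes $\xi-E[\xi]$ as the corresponding sum over $j$ of a $W_j$-integral of $f_1(\xi;\cdot,j)$ and a $\tilde N_j$-integral of $f_2(\xi;\cdot,\cdot,j)$.

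Next I would expand $E\big[(\xi-E[\xi]-\sum_j\int_0^T\varphi_j\,\mathrm{d}S_j)^2\big]$. After projecting onto the stochastic-integral part of each $S_j$ (the Galtchouk--Kunita--Watanabe projection onto the martingale parts, which is the object the minimal-variance criterion singles out and the reason $b_j$ drops out of the answer), the part depending on $\varphi$ equals, by the It\^{o} isometry and the orthogonality above,
\[
\sum_{j=1}^k\int_0^T E\big[(f_1(\xi;s,j)-c_j(s)\sigma_j)^2\big]\,\mathrm{d}s+\sum_{j=1}^k\int_0^T\!\!\int_{\mathbb{R}}E\big[(f_2(\xi;s,x,j)-c_j(s)x)^2\big]\,\nu_j(\mathrm{d}x)\,\mathrm{d}s .
\]
Since each $c_j$ is an arbitrary predictable process, this is minimised by minimising the integrand pointwise in $(\omega,s)$ over the scalar $c_j(s)$.

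The pointwise problem is to minimise $(f_1-c\sigma_j)^2+\int_{\mathbb{R}}(f_2-cx)^2\,\nu_j(\mathrm{d}x)$ over $c\in\mathbb{R}$; differentiating gives the normal equation $\sigma_j f_1(\xi;s,j)+\int_{\mathbb{R}}x f_2(\xi;s,x,j)\,\nu_j(\mathrm{d}x)=c\,(\sigma_j^2+\int_{\mathbb{R}}x^2\nu_j(\mathrm{d}x))$, and solving for $c$ and dividing by $S_j(s_-)=S_j(s)$ (for Lebesgue-almost every $s$) yields exactly the stated $\varphi_j(s)$. The minimiser is predictable because $f_1$ and $f_2$ are, and $\xi\in\mathbb{D}_{1,2}$ forces $\int_0^T c_j(s)^2(\sigma_j^2+\int_{\mathbb{R}}x^2\nu_j(\mathrm{d}x))\,\mathrm{d}s<\infty$, so $\varphi\in\mathcal{A}$ and every integral above is finite.

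The main obstacle, and the only genuine departure from \cite[Theorem 4.1]{bdlop03} (where $S_t=\eta(t)$ is itself a martingale), is the drift term $\int_0^T b_j c_j(s)\,\mathrm{d}s$ in the first display: the hedging error $R=\xi-E[\xi]-\sum_j\int_0^T\varphi_j\,\mathrm{d}S_j$ is not mean-zero and is not automatically orthogonal to the whole hedging space, so one has to argue that the minimal-variance integrand against $S_j$ is governed by the martingale part of $S_j$ alone. This follows because the mean-variance tradeoff $b_j/(\sigma_j^2+\int_{\mathbb{R}}x^2\nu_j(\mathrm{d}x))$ is deterministic, so the variance-optimal hedge coincides with the Galtchouk--Kunita--Watanabe projection and the reparametrisation $c_j=\varphi_j S_j(s_-)$ reduces everything to the driftless case of \cite{bdlop03}. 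The remaining items --- the Fubini interchanges, a measurable selection of the pointwise minimiser, and the admissibility check --- are routine.
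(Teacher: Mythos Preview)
Your proof is correct and lands on the same formula as the paper's, but the route is organised differently. The paper's argument characterises the minimiser through the orthogonality condition $E[(\xi-\hat\xi)\Theta]=0$, where $\Theta$ ranges over the span of the \emph{martingale} parts of $\int\theta_j\,\mathrm{d}S_j$ (justifying this reduction by appeal to \cite[Section 4.2]{ms95} and the Hilbert-space projection in \cite[Theorem 2.3]{bdlop03}); it then computes $E[\xi^0\Theta]$ via the It\^{o} isometry and solves the resulting identity for $\varphi_j$. You instead expand the squared error directly and minimise the integrand pointwise in $c_j(s)=\varphi_j(s)S_j(s_-)$, which is the dual (first-order) formulation of the same projection.

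The one substantive step --- why the drift $b_j$ does not enter the answer even though $S_j$ is not a martingale --- is precisely what the paper outsources to the citation of \cite{ms95}. You make this explicit by invoking the Galtchouk--Kunita--Watanabe projection and the fact that the mean-variance tradeoff $b_j/(\sigma_j^2+\int x^2\nu_j(\mathrm{d}x))$ is deterministic, so that the variance-optimal strategy coincides with the GKW integrand. That observation is correct and makes the role of the drift more transparent than in the paper's proof; conversely, the paper's orthogonality formulation avoids having to write out the full expansion of the squared norm and the cross terms with the finite-variation part.
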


\begin{proof}
The proof is included in Appendix \ref{AppendixProofProp1}.
\end{proof}

Although variance swaps are traded in OTC markets, there might be times that
the appropriate variance swaps needed are not available. \ Hence, we firstly
discuss how to use a minimal variance portfolio to hedge $\sum_{i=2}^{q}Q_{i}
$ using only a risk-free bank account and the underlying stock. \ As in
Section \ref{SectionHPJ}, we consider the simplified case where there is at
most one jump of $X$ between $t$ and $t+\Delta t$, and the general case where
there can be infinite number of jumps. \ 

\subsection{The simplified case}

If $\Delta t$ is negligible compared to $\Delta S_{t},$ from (\ref{PJAD}),
\begin{equation}
\sum_{i=2}^{q}Q_{i}=\sum_{i=2}^{q}C_{i}S_{t}^{i}\left[  \int_{t}^{t+\Delta
t}\mathrm{d}Y_{s}^{\left(  i\right)  }+m_{i}\Delta t\right]  . \label{PJADi}%
\end{equation}

\begin{proposition}
\label{PropositionMVPwithout}If $\Delta t$ is negligible compared to $\Delta
S_{t}$, the minimal variance portfolio to hedge $\sum_{i=2}^{q}Q_{i}$ using
only a risk-free bank account and the underlying asset is to\newline1) invest
$\sum_{i=2}^{q}\frac{C_{i}}{\left(  \exp\left(  r\Delta t\right)  -1\right)
}S_{t}^{i}m_{i}\Delta t$ in a risk-free bank account, and\newline2) buy
$\frac{1}{\left[  \sigma^{2}+m_{2}\right]  }\sum_{i=2}^{q}C_{i}S_{t}%
^{i-1}m_{i+1}$ units of the underlying stock, $S_{t}, $ where $m_{i}$ is
defined in (\ref{mean}).
\end{proposition}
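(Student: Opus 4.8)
The plan is to identify the payoff $\xi:=\sum_{i=2}^{q}Q_{i}$ with a random variable whose predictable representation (\ref{f1f2}) can be written down in closed form, substitute that representation into Proposition~\ref{PropositionMVP} with a single traded security ($k=1$), the stock $S$, and treat the deterministic part of $\xi$ separately with the bank account exactly as in the proof of Proposition~\ref{PropositionHVS}.

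First I would rewrite the payoff. By (\ref{PJADi}),
\[
\xi=\sum_{i=2}^{q}Q_{i}=\underbrace{\sum_{i=2}^{q}C_{i}S_{t}^{i}m_{i}\Delta t}_{\text{deterministic}}+\sum_{i=2}^{q}C_{i}S_{t}^{i}\int_{t}^{t+\Delta t}\mathrm{d}Y_{s}^{(i)}.
\]
For $i\ge 2$ the process $Y^{(i)}$ in (\ref{CompPower}) accumulates only the jumps of $X$, so even when a Brownian component is present one has $\int_{t}^{t+\Delta t}\mathrm{d}Y_{s}^{(i)}=\int_{t}^{t+\Delta t}\int_{\mathbb{R}}x^{i}\tilde{N}(\mathrm{d}s,\mathrm{d}x)$, as in (\ref{YN}). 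Since $\Delta t$ is deterministic, each coefficient $C_{i}=D_{2}^{i}F(t+\Delta t,S_{t})/i!$, and hence each $C_{i}S_{t}^{i}$, is $\mathcal{F}_{t}$-measurable, so $(s,x)\mapsto 1_{(t,t+\Delta t]}(s)\sum_{i=2}^{q}C_{i}S_{t}^{i}x^{i}$ is a bona fide predictable integrand. By uniqueness of the representation (\ref{f1f2}) we then read off
\[
E[\xi]=\sum_{i=2}^{q}C_{i}S_{t}^{i}m_{i}\Delta t,\qquad f_{1}(\xi;s)\equiv 0,\qquad f_{2}(\xi;s,x)=1_{(t,t+\Delta t]}(s)\sum_{i=2}^{q}C_{i}S_{t}^{i}x^{i}.
\]
The exponential moment assumption on $\nu$ guarantees $\int_{\mathbb{R}}x^{2i}\nu(\mathrm{d}x)<\infty$ for every $i$, so this first-chaos random variable lies in $\mathbb{D}_{1,2}$ and Proposition~\ref{PropositionMVP} applies.

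Next I would substitute into Proposition~\ref{PropositionMVP} with $k=1$, $\sigma_{1}=\sigma$, $\nu_{1}=\nu$. Since $f_{1}\equiv 0$ and $\int_{\mathbb{R}}xf_{2}(\xi;s,x)\nu(\mathrm{d}x)=1_{(t,t+\Delta t]}(s)\sum_{i=2}^{q}C_{i}S_{t}^{i}m_{i+1}$ by (\ref{mean}), while the denominator is $(\sigma^{2}+m_{2})S_{s}$, the minimal variance stock position is
\[
\varphi(s)=1_{(t,t+\Delta t]}(s)\,\frac{\sum_{i=2}^{q}C_{i}S_{t}^{i}m_{i+1}}{(\sigma^{2}+m_{2})\,S_{s}}.
\]
Using the standing assumption that $\Delta t$ is negligible compared with $\Delta S_{t}$, so that $S_{s}\approx S_{t}$ throughout $(t,t+\Delta t]$, this collapses to holding $\frac{1}{\sigma^{2}+m_{2}}\sum_{i=2}^{q}C_{i}S_{t}^{i-1}m_{i+1}$ units of the stock, which is assertion 2). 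Finally, the deterministic residual $E[\xi]=\sum_{i=2}^{q}C_{i}S_{t}^{i}m_{i}\Delta t$ is produced exactly by a bank deposit $A$ with $A(e^{r\Delta t}-1)=\sum_{i=2}^{q}C_{i}S_{t}^{i}m_{i}\Delta t$, i.e.\ $A=\sum_{i=2}^{q}\frac{C_{i}}{e^{r\Delta t}-1}S_{t}^{i}m_{i}\Delta t$, which is assertion 1); this is the same cash bookkeeping used in the proof of Proposition~\ref{PropositionHVS}.

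There is no deep obstacle here — the work is in being careful rather than clever. The one point that needs checking is that the predictable representation of $\xi$ genuinely has the one-term form above: one must observe that for $i\ge 2$ the $Y^{(i)}$-integrals involve only the compensated Poisson measure (hence $f_{1}\equiv 0$ and the Brownian coordinate plays no role), that the $\mathcal{F}_{t}$-measurable prefactors $C_{i}S_{t}^{i}$ may legitimately be carried inside as predictable integrands on $(t,t+\Delta t]$, and that the resulting random variable belongs to $\mathbb{D}_{1,2}$ so that Proposition~\ref{PropositionMVP} applies verbatim. The replacement of $\varphi(s)$ by a single time-$t$ number of stock units is precisely the ``$\Delta t$ negligible'' approximation already invoked to obtain (\ref{PJADi}).
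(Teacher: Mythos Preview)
Your proposal is correct and follows essentially the same route as the paper: split $\sum_{i=2}^{q}Q_{i}$ via (\ref{PJADi}) into a deterministic piece (hedged by the bank deposit as in Proposition~\ref{PropositionHVS}) and a pure-jump stochastic integral, read off $f_{1}\equiv0$ and $f_{2}(\xi;s,x)=\sum_{i=2}^{q}C_{i}S_{t}^{i}x^{i}$ from (\ref{YN}), and plug into Proposition~\ref{PropositionMVP} with $k=1$. You are in fact more explicit than the paper on two points it leaves implicit: that $Y^{(i)}$ for $i\ge2$ carries no Brownian part (hence $f_{1}\equiv0$), and that the passage from $\varphi(s)\propto S_{s}^{-1}$ to a constant number of shares $\propto S_{t}^{-1}$ is exactly the ``$\Delta t$ negligible'' approximation already used to obtain (\ref{PJADi}).
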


\begin{proof}
The proof is included in Appendix \ref{AppendixMVPwithout}.
\end{proof}

In the following, we discuss how to hedge the terms $\sum_{i=3}^{q}Q_{i}$
using a risk-free bank account, the underlying stock and variance swaps. \ If
$\Delta t$ is negligible compared to $\Delta S_{t},$ from (\ref{PJAD}),
\begin{equation}
\sum_{i=3}^{q}Q_{i}=\sum_{i=3}^{q}C_{i}S_{t}^{i}\left[  \int_{t}^{t+\Delta
t}\mathrm{d}Y_{s}^{\left(  i\right)  }+m_{i}\Delta t\right]  . \label{C1}%
\end{equation}
Therefore, we have the following hedging portfolio.

\begin{proposition}
\label{PropositionMVPwith}If $\Delta t$ is negligible compared to $\Delta
S_{t},$ the minimal variance portfolio to hedge $\sum_{i=3}^{q}Q_{i}$ by
investing in a risk-free bank account, the underlying asset and variance swaps
is given by:\newline1) buy $\phi\Delta s\left(  n-2\right)  S_{t}^{2}$ units
of the variance swap at time $t$ with sampling points $\left\{  ...,s_{n-1}%
=t,s_{n}=t+\Delta t\right\}  \,$, maturity $t+\Delta t$ and strike
$\sigma_{\text{strike}}^{2}$, where%
\[
\phi=\frac{\sum_{i=3}^{q}C_{i}S_{t}^{i-2}\int_{\mathbb{R}}x^{i}\nu\left(
\mathrm{d}x\right)  }{\int_{\mathbb{R}}x^{2}\nu\left(  \mathrm{d}x\right)
}=\frac{\sum_{i=3}^{q}C_{i}S_{t}^{i-2}m_{i}}{m_{2}},
\]
$m_{i}$ are defined in (\ref{mean}) and $P_{V}$ is the price of one unit of
the variance swap.\newline2) invest nothing in the underlying asset, $S_{t}%
$,\newline3) invest
\[
\frac{1}{e^{r\Delta t}-1}\left\{  \sum_{i=3}^{q}C_{i}S_{t}^{i}m_{i}\Delta
t+\phi S_{t}^{2}\left\{  \Delta s\left(  n-2\right)  \left[  \sigma
_{\text{strike}}^{2}-\overline{S}_{n,2}\right]  +P_{V}\Delta s\left(
n-2\right)  -m_{2}\Delta t\right\}  \right\}
\]
in a risk-free bank account, where $\overline{S}_{n,2}$ is defined in
(\ref{S2bar}).
\end{proposition}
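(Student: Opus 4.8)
The plan is to reduce the claim to its stochastic part, read off a chaos representation of the form (\ref{f1f2}), apply the minimal variance formula of Proposition \ref{PropositionMVP} with the variance swap as the hedging instrument to pin down $\phi$, and then verify the self-financing property by the same telescoping computation used in the proof of Proposition \ref{PropositionHVS}. Throughout I work in the simplified regime of this subsection: $\Delta t$ negligible compared with $\Delta S_{t}$ and at most one jump of $X$ on $[t,t+\Delta t]$, so that the Brownian part of $\Delta X_{t}$ enters only at higher order.

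First I would split $\sum_{i=3}^{q}Q_{i}$ using (\ref{C1}) into the stochastic part $\sum_{i=3}^{q}C_{i}S_{t}^{i}\int_{t}^{t+\Delta t}\mathrm{d}Y_{s}^{(i)}$ and the $\mathcal{F}_{t}$-measurable part $\sum_{i=3}^{q}C_{i}S_{t}^{i}m_{i}\Delta t$; the latter will be reproduced exactly by a time-$t$ deposit in the risk-free account, in the same way the $D_{1}^{i}F$ terms are handled through (\ref{P}). By the Poisson-random-measure identity (\ref{YN}), $\int_{t}^{t+\Delta t}\mathrm{d}Y_{s}^{(i)}=\int_{t}^{t+\Delta t}\int_{\mathbb{R}}x^{i}\tilde{N}(\mathrm{d}s,\mathrm{d}x)$, so on $[t,t+\Delta t]$ the random part of the claim has a representation (\ref{f1f2}) with $f_{1}\equiv0$ and predictable kernel $f_{2}(\xi;s,x)=\sum_{i=3}^{q}C_{i}S_{t}^{i}x^{i}$.

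Next I would introduce the variance swap as the hedging instrument carrying jump risk. The essential observation, which is exactly the content of (\ref{SX}) and of the calculation preceding Proposition \ref{PropositionHVS}, is that holding $\Delta s(n-2)S_{t}^{2}$ variance swaps with last sampling points $s_{n-1}=t$, $s_{n}=t+\Delta t$ reproduces $(\Delta S_{t})^{2}=S_{t}^{2}(\Delta X_{t})^{2}=S_{t}^{2}\big(\int_{t}^{t+\Delta t}\mathrm{d}Y_{s}^{(2)}+m_{2}\Delta t\big)$ up to quantities known at time $t$ (the strike $\sigma_{\text{strike}}^{2}$, the accrued partial sum $\overline{S}_{n,2}$ of (\ref{S2bar}), and the price $P_{V}$). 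Hence the variance swap supplies precisely one extra admissible noise coordinate, namely $S_{t}^{2}\int_{t}^{t+\Delta t}\int_{\mathbb{R}}x^{2}\tilde{N}(\mathrm{d}s,\mathrm{d}x)$, and the minimal variance portfolio over the bank account, the stock and this instrument (Proposition \ref{PropositionMVP}, adapted to the enlarged family) is obtained by projecting $f_{2}$ onto the span of the available kernels; carrying out the projection and unwinding the $S_{t}^{2}\Delta s(n-2)$ scaling gives the stated number of units $\phi=\big(\sum_{i=3}^{q}C_{i}S_{t}^{i-2}\int_{\mathbb{R}}x^{i}\nu(\mathrm{d}x)\big)/\int_{\mathbb{R}}x^{2}\nu(\mathrm{d}x)$, the projection being carried entirely by the variance swap, which is part 2).

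Finally I would check self-financing exactly as in Proposition \ref{PropositionHVS}: writing the time-$t$ cost as the variance-swap premium $\phi\Delta s(n-2)S_{t}^{2}P_{V}$ plus the deposit of part 3), and the time-$(t+\Delta t)$ value as the variance-swap payoff $\phi\Delta s(n-2)S_{t}^{2}(\sigma_{\text{realised}}^{2}-\sigma_{\text{strike}}^{2})$ plus the grown deposit, one verifies that all the terms carrying $\sigma_{\text{strike}}^{2}$, $\overline{S}_{n,2}$ and $P_{V}$ cancel, and that the denominator $e^{r\Delta t}-1$ in part 3) is calibrated so that the deposit grows by exactly $\sum_{i=3}^{q}C_{i}S_{t}^{i}m_{i}\Delta t+\phi S_{t}^{2}\{\Delta s(n-2)(\sigma_{\text{strike}}^{2}-\overline{S}_{n,2})+P_{V}\Delta s(n-2)-m_{2}\Delta t\}$; the $-m_{2}\Delta t$ term is what cancels the spurious drift $\phi S_{t}^{2}m_{2}\Delta t$ coming from $(\Delta S_{t})^{2}$, so that the net change in portfolio value equals $\phi S_{t}^{2}\int_{t}^{t+\Delta t}\mathrm{d}Y_{s}^{(2)}+\sum_{i=3}^{q}C_{i}S_{t}^{i}m_{i}\Delta t$, which is the $L^{2}$-projection of $\sum_{i=3}^{q}Q_{i}$. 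The main obstacle is the third step: Proposition \ref{PropositionMVP} and Theorem~4.1 of \cite{bdlop03} are stated for a basket of \emph{independent} L\'{e}vy securities of the form (\ref{S2}), whereas the variance swap is not of that form, so one must justify directly---via the It\^{o} isometry for the compensated Poisson measure on $[t,t+\Delta t]$ together with the assumption that $\Delta t$ is negligible---that over the hedging period it behaves as one more arbitrage-free noise coordinate, and then re-run the orthogonal-projection argument for this non-independent family while keeping careful track of which quantities are $\mathcal{F}_{t}$-measurable and hence belong in the bank account rather than in the hedge.
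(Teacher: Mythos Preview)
Your approach is essentially the one the paper has in mind: the paper's own proof reads in full ``The proof is similar to those of Propositions \ref{PropositionMVP} and \ref{PropositionMVPwithout}'', and your plan is precisely an expansion of that---split off the deterministic drift via (\ref{C1}), identify $f_{1}\equiv0$ and $f_{2}(s,x)=\sum_{i\ge3}C_iS_t^ix^i$ through (\ref{YN}), and feed this into the projection formula of Proposition~\ref{PropositionMVP} with the variance swap playing the role of the second hedging coordinate. Your additional self-financing check, modelled on Proposition~\ref{PropositionHVS}, is a helpful sanity check that the bank-account term in part 3) is calibrated correctly; the paper does not write this out but it is implicit in the ``similar to'' referral.

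You are also right to flag the obstacle you mention at the end: Proposition~\ref{PropositionMVP} (and \cite{bdlop03}, Theorem~4.1) assume the hedging instruments are \emph{independent} securities of the form (\ref{S2}), which the variance swap is not---it is driven by the same $\tilde N$ as $S$. The paper does not address this and simply applies the formula by analogy. In particular, the statement that the stock receives zero weight (part~2) is not a consequence of an orthogonal projection onto the \emph{joint} span of the stock and variance-swap kernels in $L^{2}(\nu)$ (those kernels $x$ and $x^{2}$ are not $\nu$-orthogonal in general), but rather of applying the single-asset formula of Proposition~\ref{PropositionMVP} to the variance-swap coordinate alone, exactly as Proposition~\ref{PropositionMVPwithout} applies it to the stock alone. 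So your plan matches the paper; just be aware that the ``re-run the orthogonal-projection argument for this non-independent family'' step you propose is more than the paper actually carries out.
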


\begin{proof}
The proof is similar to those of Propositions \ref{PropositionMVP} and
\ref{PropositionMVPwithout}.
\end{proof}

\subsection{The general case}

If $\Delta t$ is negligible compared to $\Delta S_{t},$ from (\ref{DeltaSN}),%
\[
\left(  \Delta S_{t}\right)  ^{n}=S_{t}^{n}\sum_{\theta_{n}\in\mathcal{I}_{n}%
}\Pi_{\theta_{n},\Delta t,\sigma}^{\left(  n\right)  }\mathcal{S}_{\theta
_{n},\Delta t,t}^{\prime}+S_{t}^{n}C_{\Delta t,\sigma}^{\left(  n\right)  },
\]
where the expression can be calculated explicitly using Theorem
\ref{newFormula}. \ Let%
\[
\sum_{i=2}^{q}Q_{i}=\sum_{j=1}^{q}C_{i}\int_{t}^{t+\Delta t}\phi
_{j,s}^{\left(  q\right)  }\mathrm{d}Y_{s}^{\left(  j\right)  }+\sum_{i=2}%
^{q}C_{i}S_{t}^{i}C_{\Delta t,\sigma}^{\left(  i\right)  }\text{,}%
\]
where $\phi_{j,s}^{\left(  q\right)  }$ is defined in (\ref{phijsn}). \ 

\begin{proposition}
\label{PropositionMVPwithoutG}If $\Delta t$ is negligible compared to $\Delta
S_{t}$, the minimal variance portfolio to hedge $\sum_{i=2}^{q}Q_{i}$ using
only a risk-free bank account and the underlying asset is to\newline1) invest
$\sum_{i=2}^{q}\frac{C_{i}}{\exp\left(  r\Delta t\right)  -1}S_{t}%
^{i}C_{\Delta t,\sigma}^{\left(  i\right)  }$ in a risk-free bank account,
and\newline2) buy $\frac{1}{\left[  \sigma^{2}+m_{2}\right]  }\sum_{j=1}%
^{q}C_{i}\phi_{j,s}^{\left(  q\right)  }S_{t}^{-1}m_{i+1}$ units of the
underlying stock, $S_{t},$ where $m_{i}$ is defined in (\ref{mean}).
\end{proposition}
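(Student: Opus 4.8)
The plan is to recognise $\xi:=\sum_{i=2}^{q}Q_{i}=\sum_{i=2}^{q}C_{i}\left(\Delta S_{t}\right)^{i}$, with $C_{i}=D_{2}^{i}F\left(t+\Delta t,S_{t}\right)/i!$ and $S_{t}$ known at time $t$, as a random variable whose predictable chaos representation (\ref{f1f2}) over the hedging sub-interval $[t,t+\Delta t]$ can be written down explicitly, and then to substitute that representation into Proposition~\ref{PropositionMVP} with a single driving security ($k=1$, the stock $S_{t}$). Since $\Delta t$ is negligible compared with $\Delta S_{t}$ one has $\left(\Delta S_{t}\right)^{i}=S_{t}^{i}\left(X_{t+\Delta t}-X_{t}\right)^{i}$, so the task reduces to representing $\left(X_{t+\Delta t}-X_{t}\right)^{i}$ for $i=2,\dots,q$.

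First I would apply Theorem~\ref{newFormula} to each power, obtaining $\left(X_{t+\Delta t}-X_{t}\right)^{i}=\sum_{\theta_{i}\in\mathcal{I}_{i}}\Pi_{\theta_{i},\Delta t,\sigma}^{(i)}\mathcal{S}_{\theta_{i},\Delta t,t}^{\prime}+C_{\Delta t,\sigma}^{(i)}$, where $C_{\Delta t,\sigma}^{(i)}$ is deterministic. Every iterated integral $\mathcal{S}_{\theta_{i},\Delta t,t}^{\prime}$ with outermost index $j$ is a single stochastic integral against $\mathrm{d}Y^{(j)}$ whose integrand $\int_{t}^{t_{1}-}\cdots\int_{t}^{t_{j-1}-}\mathrm{d}Y^{(i_{1})}\cdots\mathrm{d}Y^{(i_{j-1})}$ is $\mathcal{F}_{s-}$-measurable, hence predictable. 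Collecting these over $\theta_{i}$ and over $i=2,\dots,q$, weighted by the known factors $C_{i}S_{t}^{i}$, produces predictable functions $\phi_{j,s}^{(q)}$ as in (\ref{phijsn}), so that $\xi=\sum_{j=1}^{q}\int_{t}^{t+\Delta t}\phi_{j,s}^{(q)}\,\mathrm{d}Y_{s}^{(j)}+\sum_{i=2}^{q}C_{i}S_{t}^{i}C_{\Delta t,\sigma}^{(i)}$. Finally I would pass from the $Y^{(j)}$'s to $W$ and $\tilde{N}$: by (\ref{decomp}), $\mathrm{d}Y_{s}^{(1)}=\sigma\,\mathrm{d}W_{s}+\int_{\mathbb{R}}x\,\tilde{N}(\mathrm{d}s,\mathrm{d}x)$, and by (\ref{YN}), $\mathrm{d}Y_{s}^{(j)}=\int_{\mathbb{R}}x^{j}\,\tilde{N}(\mathrm{d}s,\mathrm{d}x)$ for $j\ge2$, which puts the martingale part of $\xi$ in the form (\ref{f1f2}) with $f_{1}\left(\xi;s\right)=\sigma\,\phi_{1,s}^{(q)}$ and $f_{2}\left(\xi;s,x\right)=\sum_{j=1}^{q}\phi_{j,s}^{(q)}x^{j}$.

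Substituting into Proposition~\ref{PropositionMVP} with $k=1$ gives the minimal variance stock holding
\[
\varphi(s)=\frac{\sigma\,f_{1}\left(\xi;s\right)+\int_{\mathbb{R}}x\,f_{2}\left(\xi;s,x\right)\nu(\mathrm{d}x)}{\left\{\sigma^{2}+\int_{\mathbb{R}}x^{2}\nu(\mathrm{d}x)\right\}S(s)}=\frac{\sigma^{2}\phi_{1,s}^{(q)}+\sum_{j=1}^{q}\phi_{j,s}^{(q)}m_{j+1}}{\left(\sigma^{2}+m_{2}\right)S(s)},
\]
using $\int_{\mathbb{R}}x^{j+1}\nu(\mathrm{d}x)=m_{j+1}$ from (\ref{mean}). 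Evaluating at the rebalancing instant $s=t$, where $S(s)=S_{t}$ and, because $\Delta t$ is negligible, the integrands arising from the genuinely iterated integrals are themselves negligible, this collapses to the number of units of $S_{t}$ asserted in item~2), exactly as in the simplified case of Proposition~\ref{PropositionMVPwithout}. The remaining purely deterministic term $\sum_{i=2}^{q}C_{i}S_{t}^{i}C_{\Delta t,\sigma}^{(i)}$ is $\mathcal{F}_{t}$-measurable, so depositing $\big(\sum_{i=2}^{q}C_{i}S_{t}^{i}C_{\Delta t,\sigma}^{(i)}\big)/(e^{r\Delta t}-1)$ in the bank account at time $t$ produces a change of value over $[t,t+\Delta t]$ equal to this term while leaving the variance unchanged; this is item~1). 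Adding the two positions gives the asserted portfolio.

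The main obstacle is the bookkeeping of the middle step: one must verify that, once all the iterated integrals produced by Theorem~\ref{newFormula} for the powers $i=2,\dots,q$ are assembled, the integrands $\phi_{j,s}^{(q)}$ are genuinely predictable and square-integrable, so that $\xi\in\mathbb{D}_{1,2}$ and Proposition~\ref{PropositionMVP} is applicable. The standing exponential-moment hypothesis on $\nu$ is precisely what guarantees finiteness of all the moments $m_{j+1}$ and square-integrability of the (bounded-in-time) integrands over the finite interval $[t,t+\Delta t]$; granting this, what remains is the algebraic simplification of the numerator displayed above together with the cash top-up handling the deterministic remainder.
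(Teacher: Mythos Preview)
Your approach is essentially the paper's own: the paper simply states that the proof is ``similar to that of Proposition~\ref{PropositionMVPwithout},'' and you have carried out precisely that template---split $\sum_{i=2}^{q}Q_{i}$ into the deterministic constant $\sum_{i}C_{i}S_{t}^{i}C_{\Delta t,\sigma}^{(i)}$ (hedged with the bank account) plus the martingale part $\sum_{j}\int\phi_{j,s}^{(q)}\,\mathrm{d}Y^{(j)}$, rewrite the latter via (\ref{YN})/(\ref{decomp}) to read off $f_{1},f_{2}$, and plug into Proposition~\ref{PropositionMVP}. One small remark: your careful treatment of $\mathrm{d}Y^{(1)}=\sigma\,\mathrm{d}W+\int x\,\tilde N$ produces an additional $\sigma^{2}\phi_{1,s}^{(q)}$ in the numerator of $\varphi(s)$ that does not appear in the paper's stated item~2); your justification that at the rebalancing instant the inner iterated integrands vanish is the right way to reconcile this, and is consistent with how the paper treats the simplified case.
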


\begin{proof}
The proof is similar to that of Proposition \ref{PropositionMVPwithout}.
\end{proof}

In the following, we discuss how to hedge the terms $\sum_{i=3}^{q}Q_{i}$
using a risk-free bank account, the underlying stock and variance swaps. \ 

\begin{proposition}
\label{PropositionMVPwithG}If $\Delta t$ is negligible compared to $\Delta
S_{t},$ the minimal variance portfolio to hedge $\sum_{i=3}^{q}Q_{i}$ by
investing in a risk-free bank account, the underlying asset and variance swaps
is given by:\newline1) buy $\phi\Delta s\left(  n-2\right)  S_{t}^{2}$ units
of the variance swap at time $t$ with sampling points $\left\{  ...,s_{n-1}%
=t,s_{n}=t+\Delta t\right\}  \,$, maturity $t+\Delta t$ and strike
$\sigma_{\text{strike}}^{2}$, where%
\[
\phi=\frac{\sum_{i=1}^{q}C_{i}\phi_{j,s}^{\left(  q\right)  }S_{t}^{-2}%
\int_{\mathbb{R}}x^{i}\nu\left(  \mathrm{d}x\right)  }{\int_{\mathbb{R}}%
x^{2}\nu\left(  \mathrm{d}x\right)  }=\frac{\sum_{i=1}^{q}C_{i}\phi
_{j,s}^{\left(  q\right)  }S_{t}^{-2}m_{i}}{m_{2}},
\]
$m_{i}$ are defined in (\ref{mean}) and $P_{V}$ is the price of one unit of
the variance swap.\newline2) invest nothing in the underlying asset, $S_{t}%
$,\newline3) invest
\[
\frac{1}{e^{r\Delta t}-1}\left\{  \sum_{i=2}^{q}C_{i}S_{t}^{i}C_{\Delta
t,\sigma}^{\left(  i\right)  }+\phi S_{t}^{2}\left\{  \Delta s\left(
n-2\right)  \left[  \sigma_{\text{strike}}^{2}-\overline{S}_{n,2}\right]
+P_{V}\Delta s\left(  n-2\right)  -m_{2}\Delta t\right\}  \right\}
\]
in a risk-free bank account, where $\overline{S}_{n,2}$ is defined in
(\ref{S2bar}).
\end{proposition}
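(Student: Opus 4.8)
The plan is to follow the template of Propositions \ref{PropositionMVP}, \ref{PropositionMVPwithout} and \ref{PropositionHVS}, now using the explicit general-case chaos expansion of Theorem \ref{newFormula}. First I would write, for each $i$ with $3\le i\le q$, the term $\left(\Delta S_{t}\right)^{i}=S_{t}^{i}\left(X_{t+\Delta t}-X_{t}\right)^{i}$ in the form (\ref{phijsn}), so that
\[
\sum_{i=3}^{q}Q_{i}=\sum_{j=1}^{q}\int_{t}^{t+\Delta t}\Big(\sum_{i=3}^{q}C_{i}\,\phi_{j,s}^{\left(i\right)}\Big)\mathrm{d}Y_{s}^{\left(j\right)}+\sum_{i=3}^{q}C_{i}S_{t}^{i}C_{\Delta t,\sigma}^{\left(i\right)},
\]
where the predictable integrands $\phi_{j,s}^{\left(i\right)}$ are read off, as in (\ref{phijsn}), by collecting the coefficients of the iterated integrals $\mathcal{S}_{\theta_{i},\Delta t,t}^{\prime}$ in $S_{t}^{i}\sum_{\theta_{i}\in\mathcal{I}_{i}}\Pi_{\theta_{i},\Delta t,\sigma}^{\left(i\right)}\mathcal{S}_{\theta_{i},\Delta t,t}^{\prime}$, using Propositions \ref{Theorem2} and \ref{TheoremC2} for the coefficients $\Pi_{\theta_{i},\Delta t,\sigma}^{\left(i\right)}$ and $C_{\Delta t,\sigma}^{\left(i\right)}$. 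Rewriting $\mathrm{d}Y_{s}^{\left(j\right)}$ via (\ref{YN}) (together with the Brownian piece carried by $Y^{\left(1\right)}$) puts $\xi:=\sum_{i=3}^{q}Q_{i}$ into the predictable-integrand form (\ref{f1f2}); I would then read off $f_{1}\left(\xi;s,1\right)$ and $f_{2}\left(\xi;s,x,1\right)$ from the resulting integrands and note that $\sum_{i=3}^{q}C_{i}S_{t}^{i}C_{\Delta t,\sigma}^{\left(i\right)}=E\left[\xi\mid\mathcal{F}_{t}\right]$.

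Next I would install the variance-swap leg exactly as in the proof of Proposition \ref{PropositionHVS}. A position of $\phi\,\Delta s\left(n-2\right)S_{t}^{2}$ units of the variance swap with last sampling points $s_{n-1}=t,\ s_{n}=t+\Delta t$ and maturity $t+\Delta t$ contributes, at $t+\Delta t$, the random amount $\phi\,S_{t}^{2}\left(\Delta S_{t}/S_{t}\right)^{2}=\phi\,S_{t}^{2}\big(\int_{t}^{t+\Delta t}\mathrm{d}Y_{s}^{\left(2\right)}+m_{2}\Delta t\big)$ plus quantities ($\sigma_{\text{strike}}^{2}$, $\overline{S}_{n,2}$, the unit price $P_{V}$) known at time $t$. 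Choosing $\phi=\big(\sum_{i=3}^{q}C_{i}S_{t}^{i-2}m_{i}\big)/m_{2}$ is the same moment-matching device used in the simplified-case Proposition \ref{PropositionMVPwith}: it makes the expected contribution of this leg equal to $E[\xi\mid\mathcal{F}_{t}]-\sum_{i=3}^{q}C_{i}S_{t}^{i}C_{\Delta t,\sigma}^{\left(i\right)}$ and, through the minimal-variance computation behind Proposition \ref{PropositionMVP}, it is designed to absorb the part of $f_{2}$ that the criterion would otherwise place on the stock, so that the optimal stock holding is $0$. The remaining time-$t$ quantities together with the deterministic remainder are then parked in the risk-free account with the factor $1/\left(e^{r\Delta t}-1\right)$, as in Proposition \ref{PropositionHVS}, so that their growth over $\left[t,t+\Delta t\right]$ reproduces exactly those terms.

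Finally I would verify the bookkeeping: assemble the initial cost (variance-swap price plus bank deposit), carry it forward to $t+\Delta t$, and check that the $P_{V}$-, $\sigma_{\text{strike}}^{2}$- and $\overline{S}_{n,2}$-dependent pieces cancel, leaving the change of value of the bank-account-plus-variance-swap portfolio equal to $\sum_{i=3}^{q}Q_{i}$ up to the $L^{2}$-orthogonal residual $\sum_{j\ge3}\int_{t}^{t+\Delta t}\left(\cdots\right)\mathrm{d}Y_{s}^{\left(j\right)}$, which is the minimal attainable variance; admissibility and absence of arbitrage are inherited from the fact, recorded before Proposition \ref{Proposition3G}, that the discounted variance-swap and power-jump values are $Q$-martingales. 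The main obstacle is Step 1 — turning the iterated-integral formula of Theorem \ref{newFormula} into the single-integral form (\ref{f1f2}) with integrands clean enough to invoke Proposition \ref{PropositionMVP}, i.e. correctly extracting the $\phi_{j,s}^{\left(i\right)}$ through (\ref{phijsn}) and bookkeeping the combinatorial coefficients of Propositions \ref{Theorem2} and \ref{TheoremC2}; once that is in hand, the moment-matching and discounting algebra is exactly that of Propositions \ref{PropositionHVS}, \ref{PropositionMVPwithout} and \ref{PropositionMVPwith}, carried out with the obvious changes.
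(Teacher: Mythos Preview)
Your approach is essentially the one the paper intends: its own proof of this proposition is the single line ``The proof is similar to that of Proposition \ref{PropositionMVPwithout}'', and you have correctly unpacked what that means in the general case --- expand $\sum_{i=3}^{q}Q_{i}$ via Theorem \ref{newFormula} and (\ref{phijsn}), hedge the deterministic piece $\sum_{i}C_{i}S_{t}^{i}C_{\Delta t,\sigma}^{(i)}$ in the bank account, install the variance-swap leg exactly as in Proposition \ref{PropositionHVS}, and apply the minimal-variance projection of Proposition \ref{PropositionMVP} to the remaining stochastic integrals.

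One small slip: when you write the explicit value $\phi=\big(\sum_{i=3}^{q}C_{i}S_{t}^{i-2}m_{i}\big)/m_{2}$ you have imported the \emph{simplified}-case formula from Proposition \ref{PropositionMVPwith}. In the general case the integrands in your decomposition are the predictable $\phi_{j,s}^{(i)}$ coming from (\ref{phijsn}), and it is these that must appear in the numerator of $\phi$, as in the statement you are proving. Your narrative makes clear you understand this (``the same moment-matching device''), but the displayed formula should be the general-case one. Also, your justification for the vanishing stock position is slightly loose: the variance swap gives access to $\mathrm{d}Y^{(2)}$, not $\mathrm{d}Y^{(1)}$, so it does not literally ``absorb the part of $f_{2}$ that the criterion would otherwise place on the stock''; rather, one runs the two-asset minimal-variance projection (stock and variance swap) and reads off that the stock coefficient is zero, exactly as in Proposition \ref{PropositionMVPwith}. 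With those two adjustments your outline matches the paper's intended argument.
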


\begin{proof}
The proof is similar to that of Proposition \ref{PropositionMVPwithout}.
\end{proof}

\section{Simulation algorithm\label{SectionDer}\label{SectionImple}}

In this section, we discuss the approximation of the derivatives, $D_{2}%
^{i}F\left(  t+\Delta t,S_{t}\right)  $, and computational implementation of
the hedging strategies. \ Assuming that the terms $\sum_{i=2}^{\infty}%
\frac{D_{1}^{i}F\left(  t,S_{t}\right)  }{i!}\left(  \Delta t\right)  ^{i}$ do
not contribute to the approximation significantly and can be ignored (which is
found to be true in our simulation study), we have%
\[
F\left(  t+\Delta t,S_{t}+\Delta S_{t}\right)  -F\left(  t,S_{t}\right)
=D_{1}^{1}F\left(  t,S_{t}\right)  \Delta t+\sum_{i=1}^{q}\frac{D_{2}%
^{i}F\left(  t+\Delta t,S_{t}\right)  }{i!}\left(  \Delta S_{t}\right)  ^{i},
\]
which is true as long as $D_{1}^{1}F\left(  t,S_{t}\right)  $ and $D_{2}%
^{i}F\left(  t+\Delta t,S_{t}\right)  $ exist for $i=1,2,3,...$. \ Note that
the assumption $\sum_{i=2}^{\infty}\frac{D_{1}^{i}F\left(  t,S_{t}\right)
}{i!}\left(  \Delta t\right)  ^{i}$ $\thickapprox0$ is only for simplicity
here since we are more interested in finding ways to hedge $\sum_{i=1}%
^{q}\frac{D_{2}^{i}F\left(  t+\Delta t,S_{t}\right)  }{i!}\left(  \Delta
S_{t}\right)  ^{i}$. \ The deterministic terms $\sum_{i=2}^{\infty}\frac
{D_{1}^{i}F\left(  t,S_{t}\right)  }{i!}\left(  \Delta t\right)  ^{i}$ can be
hedged by investing in a risk-free bank account, as in equation (\ref{P}).
\ Since the pricing formulae for options with underlying driven by L\'{e}vy
processes are in general not analytic, we need to approximate the derivatives
of the pricing formulae, $D_{2}^{i}F\left(  t+\Delta t,S_{t}\right)  $, for
$i=1,2,3,....$ \ We employ the Taylor's series based central difference
approximation of arbitrary $p$-th degree derivatives introduced by
\cite[Section 1]{ko03}, which is quoted in Appendix \ref{centralDiff}.

In the following, we discuss how to calculate the derivatives of the option
prices. \ We note that the most time consuming step in the approximation
procedures is the calculation of $\sum_{i}\frac{1}{X\left(  i\right)  ^{2}}$
in finding $d_{k}^{\left(  p\right)  }$ in equation (\ref{dp2}) in the central
difference approximation of derivatives. \ It is because the vector $X$
contains the product of all the possible combinations of length $c$ in $Y$,
where $Y$ contains all integers from $1$ to $N$ except $\left\vert
k\right\vert .$ \ For example, if we want to approximate the 31st derivative
and set $N$ $=33$ (the accuracy of the approximation increases with the value
of $N$)$,$ $c=15$ and $k=1$, the number of values in $Y$ is 32 and the number
of possible combinations of length $c$ in $Y$ is $C_{15}^{32}=\frac
{32!}{15!\left(  32-15\right)  !}=565,722,720$, which takes quite a while to
calculate. \ Nevertheless, this calculation is the same for all functions
$f\left(  t\right)  $. \ Therefore, we can build up a look-up table to store
values of $C_{N,k}\sum_{i}\frac{1}{X\left(  i\right)  ^{2}}$ for different
$N$, $c$ and $k$ and use it for all options. \ Although the calculation for
large $N$ can take a very long time, we only need to do this once.%

%TCIMACRO{\TeXButton{%
%\begin{center}%
%}{\begin{center}}}%
%BeginExpansion
\begin{center}%
%EndExpansion
%

\begin{tabular}
[c]{|ll|}\hline
\multicolumn{2}{|l|}{Algorithm}\\\hline
1. & Construct the look-up table of $C_{N,k}\sum_{i}\frac{1}{X\left(
i\right)  ^{2}}$ defined in equation (\ref{dp2})$.$\\
2. & Calculate sample paths of $S$ with different values of the current
stock\\
& price, $S_{t}$.\\
3. & Use Monte Carlo simulation to calculate the option prices with respect\\
& to different values of the current stock price.\ \\
4. & Calculate the derivatives with respect to the underlying, $D_{2}%
^{i}F\left(  t+\Delta t,S_{t}\right)  $.\\
5. & Calculate the first derivative with respect to time, $D_{1}^{1}F\left(
t,S_{t}\right)  .$\\\hline
\end{tabular}
%

%TCIMACRO{\TeXButton{%
%\end{center}%
%}{\end{center}}}%
%BeginExpansion
\end{center}%
%EndExpansion

Table \ref{SectionImple}.1: The simulation algorithm to calculate the
derivatives in Taylor expansions.

\bigskip

\textbf{Step 1\qquad}For a fixed $N$, construct the look-up table of
$C_{N,k}\sum_{i}\frac{1}{X\left(  i\right)  ^{2}}$, where $k=0$,$1$,$2$%
,$...$,$N $ and $c=3,4,...,c_{\max}$, where $c_{\max}$ $=N-1$ (since $2N>p$
and $c$ is the largest integer less than or equal to $\frac{p-1}{2}$)$.$
\ Therefore, the maximum derivative obtainable is $\left(  2N-1\right)  $-th.

Note that we should loop through $c$ and then $k$. \ For each value of $c$, we
use a vector to save the intermediate values of $\sum_{i}\frac{1}{X\left(
i\right)  ^{2}}$ for each $k.$ \ Therefore, we only need to calculate the
combination of choosing $c$ from $Y$ once for each $c$.

\textbf{Step 2}\qquad Calculate sample paths of $S$ with different values of
the current stock price, $S_{t}$. \ 

\textbf{Step 3}\qquad Use Monte Carlo simulation to calculate the option
prices with respect to different values of the current stock price, using the
sample paths of $S$ generated in Step 2.

\textbf{Step 4}\qquad Using the finite different method given in Appendix
\ref{centralDiff}, calculate the derivatives with respect to the underlying,
$D_{2}^{i}F\left(  t+\Delta t,S_{t}\right)  $, using the look-up table
produced in Step 1.

\textbf{Step 5}\qquad Similar to Step 4, calculate the first derivative with
respect to time, $D_{1}^{1}F\left(  t,S_{t}\right)  .$

After calculating the derivatives, we show the performance of the proposed
hedging strategies in the next section.

\section{Performance of the hedging strategies\label{SectionPerform}%
\label{SectionResult}}

In this section, we investigate the performance of the hedging strategies
given in Section \ref{SectionHS} on European options and barrier options. \ We
also give an example of static hedging of an one year European option on real
life data. \ We truncate the infinite sum in (\ref{f1}) and calculate
$\sum_{i=1}^{p}\frac{D_{2}^{i}F\left(  t+\Delta t,S_{t}\right)  }{i!}\left(
\Delta S_{t}\right)  ^{i}+D_{1}^{1}F\left(  t,S_{t}\right)  \Delta t$ for some
fixed $p$. \ By comparing the values on the L.H.S. and R.H.S. of (\ref{f1}),
it may be noted that for some $q\in\mathbb{N}$, the terms $\frac{D_{2}%
^{i}F\left(  t+\Delta t,S_{t}\right)  }{i!}\left(  \Delta S_{t}\right)
^{i}\simeq0$ for $i>q$. \ This approximation is very useful, since in practice
it is ideal to hedge by investing in as few kinds of products as possible, due
to cost of transaction and administration. \ By fixing a tolerance level,
$\alpha_{\text{tol}}$, we can find the smallest value of $p $ such that%
\begin{equation}
\left\vert \left[  F\left(  t+\Delta t,S_{t}+\Delta S_{t}\right)  -F\left(
t,S_{t}\right)  \right]  -\left[  D_{1}^{1}F\left(  t,S_{t}\right)  \Delta
t+\sum_{i=1}^{p}\frac{D_{2}^{i}F\left(  t+\Delta t,S_{t}\right)  }{i!}\left(
\Delta S_{t}\right)  ^{i}\right]  \right\vert \leq\alpha_{\text{tol}}
\label{alphaError}%
\end{equation}
and we call it $q$. \ For a given tolerance level, $\alpha_{\text{tol}}$, the
following approximation is then assumed satisfactory:%
\begin{equation}
F\left(  t+\Delta t,S_{t}+\Delta S_{t}\right)  -F\left(  t,S_{t}\right)
=D_{1}^{1}F\left(  t,S_{t}\right)  \Delta t+\sum_{i=1}^{q}\frac{D_{2}%
^{i}F\left(  t+\Delta t,S_{t}\right)  }{i!}\left(  \Delta S_{t}\right)  ^{i}.
\label{q2}%
\end{equation}
Thus the magnitude of $\alpha_{\text{tol}}$ determines the number of terms
required for a Taylor expansion to obtain a satisfactory approximation. \ In
option hedging, we want the number of terms to be as small as possible since
we have to invest in an additional financial derivative to hedge each term.
\ In practice as we noted before, transaction costs, bid-ask spreads and the
cost of administration make the trades of a large number of different
financial derivatives not preferable. \ Therefore, there is a trade-off
between the accuracy of the hedging and the additional costs involved.\ %

%TCIMACRO{\FRAME{itbpF}{5.585in}{2.0833in}{0in}{}{}{p_eu_s_20_0927eps.eps}%
%{\special{ language "Scientific Word";  type "GRAPHIC";
%maintain-aspect-ratio TRUE;  display "USEDEF";  valid_file "F";
%width 5.585in;  height 2.0833in;  depth 0in;  original-width 9.263in;
%original-height 3.4264in;  cropleft "0";  croptop "1";  cropright "1";
%cropbottom "0";  filename '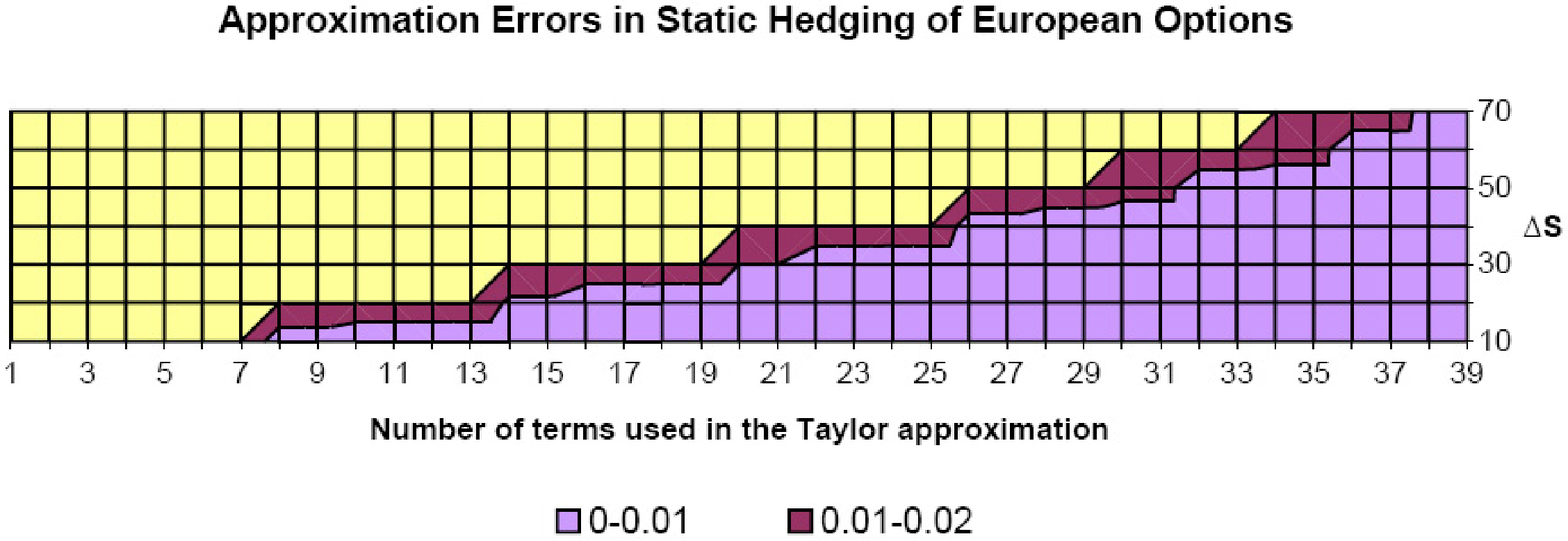';file-properties "XNPEU";}}}%
%BeginExpansion
{\includegraphics[
height=2.0833in,
width=5.585in
]%
{P_eu_s_20_0927eps.eps}%
}%
%EndExpansion
%

%TCIMACRO{\FRAME{itbpF}{5.5357in}{2.034in}{0in}{}{}{p_eu_d_20_0927.eps}%
%{\special{ language "Scientific Word";  type "GRAPHIC";
%maintain-aspect-ratio TRUE;  display "USEDEF";  valid_file "F";
%width 5.5357in;  height 2.034in;  depth 0in;  original-width 9.1808in;
%original-height 3.3434in;  cropleft "0";  croptop "1";  cropright "1";
%cropbottom "0";  filename '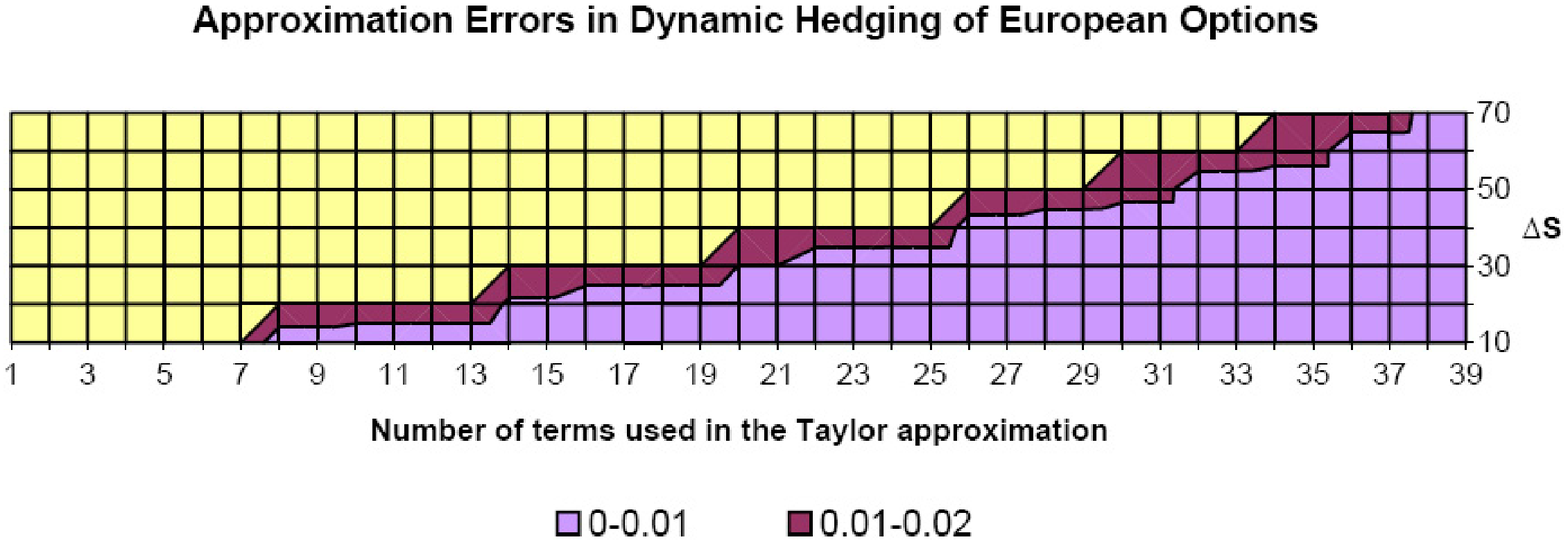';file-properties "XNPEU";}}}%
%BeginExpansion
{\includegraphics[
height=2.034in,
width=5.5357in
]%
{P_eu_d_20_0927.eps}%
}%
%EndExpansion
%

%TCIMACRO{\FRAME{itbpF}{5.4777in}{2.0012in}{0in}{}{}{p_uo_d_20_0927.eps}%
%{\special{ language "Scientific Word";  type "GRAPHIC";
%maintain-aspect-ratio TRUE;  display "USEDEF";  valid_file "F";
%width 5.4777in;  height 2.0012in;  depth 0in;  original-width 9.0823in;
%original-height 3.288in;  cropleft "0";  croptop "1";  cropright "1";
%cropbottom "0";  filename '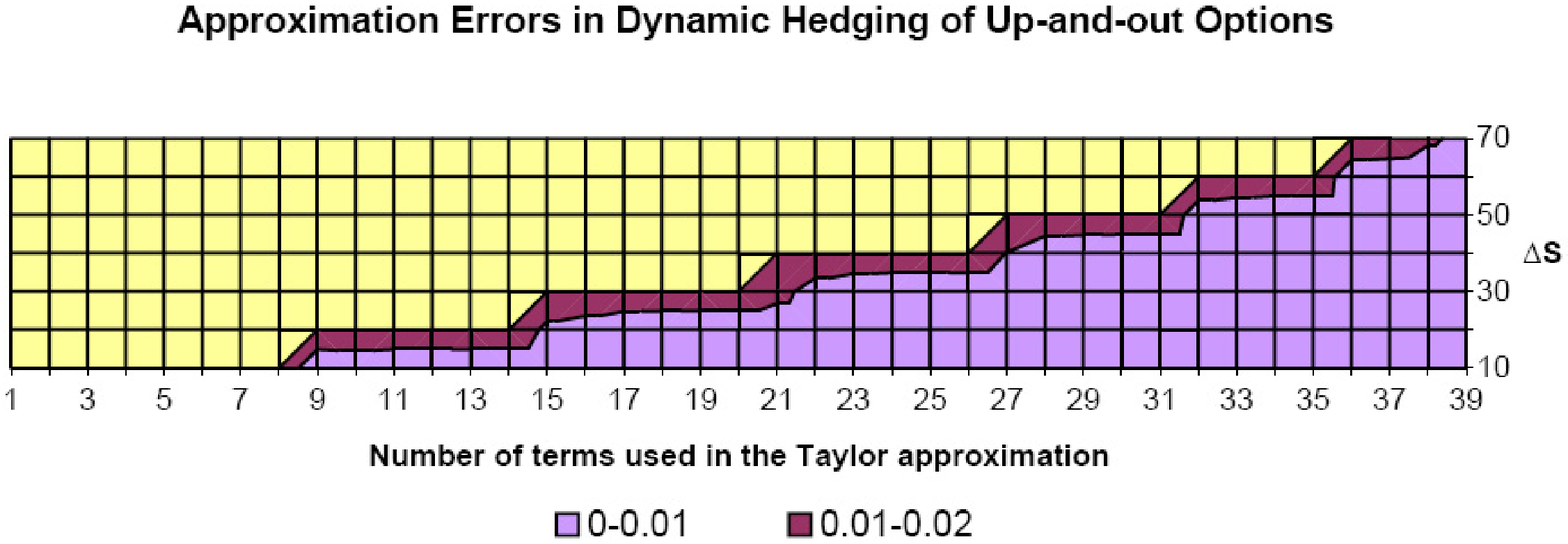';file-properties "XNPEU";}}}%
%BeginExpansion
{\includegraphics[
height=2.0012in,
width=5.4777in
]%
{P_uo_d_20_0927.eps}%
}%
%EndExpansion

Figure \ref{SectionResult}.1: The approximation errors in static hedging of
European options, dynamic hedging of European options and dynamic hedging of
up-and-out options. \ The $x\,$-axis gives the value of $q$ and the $y$-axis
gives $\Delta S$. \ The area of the graph is coloured in light purple when the
approximation error $\leq0.01$ and in deep purple when the approximation error
is between $0.01$ and $0.02$.

\bigskip

In the following, we give the performance of the static and dynamic hedging
strategies on European, up-and-out, up-and-in, down-and-out and down-and-in
options. \ We investigate how many terms in the Taylor expansions are needed
to obtain a satisfactory approximation, that is, we determine the value of $q
$ for a given $\alpha_{\text{tol}}$, defined in (\ref{alphaError}). \ In our
simulations, we set $\alpha_{\text{tol}}=0.01.$ \ It is because in practice,
we are hedging the prices of the options, the lowest price change is 0.01.
\ We assume the current stock price, $S_{0}$, is 5000 and the strike price of
the options, $K$, are 5000. \ Note that our strategies work for all values of
$K$. \ We consider the cases where the change in the price of the stock price
$\Delta S_{t}$ is equal to 10, 20, ..., 70. \ For static hedging, we assume
$\Delta t=1$, and the options are expiring in 1 year as well, that is, $T=1$.
\ For dynamic hedging, we set $\Delta t=9.5129\times10^{-6}$, approximately 5
minutes, and $T=1.1416\times10^{-4}$, approximately 1 hour.%

%TCIMACRO{\FRAME{itbpF}{5.4855in}{1.9493in}{0in}{}{}{p_ui_d_20_0927.eps}%
%{\special{ language "Scientific Word";  type "GRAPHIC";
%maintain-aspect-ratio TRUE;  display "USEDEF";  valid_file "F";
%width 5.4855in;  height 1.9493in;  depth 0in;  original-width 9.0952in;
%original-height 3.2033in;  cropleft "0";  croptop "1";  cropright "1";
%cropbottom "0";  filename '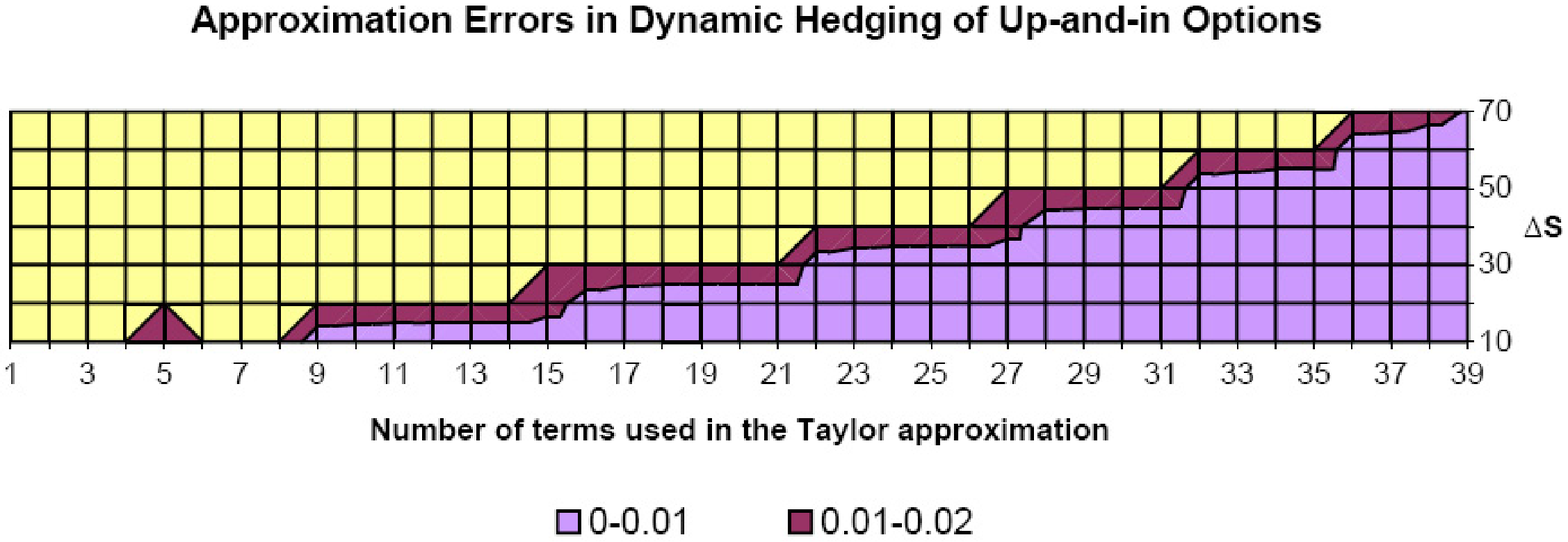';file-properties "XNPEU";}}}%
%BeginExpansion
{\includegraphics[
height=1.9493in,
width=5.4855in
]%
{P_ui_d_20_0927.eps}%
}%
%EndExpansion
%

%TCIMACRO{\FRAME{itbpF}{5.5434in}{1.9493in}{0in}{}{}{p_do_d_20_0927.eps}%
%{\special{ language "Scientific Word";  type "GRAPHIC";
%maintain-aspect-ratio TRUE;  display "USEDEF";  valid_file "F";
%width 5.5434in;  height 1.9493in;  depth 0in;  original-width 9.193in;
%original-height 3.2033in;  cropleft "0";  croptop "1";  cropright "1";
%cropbottom "0";  filename '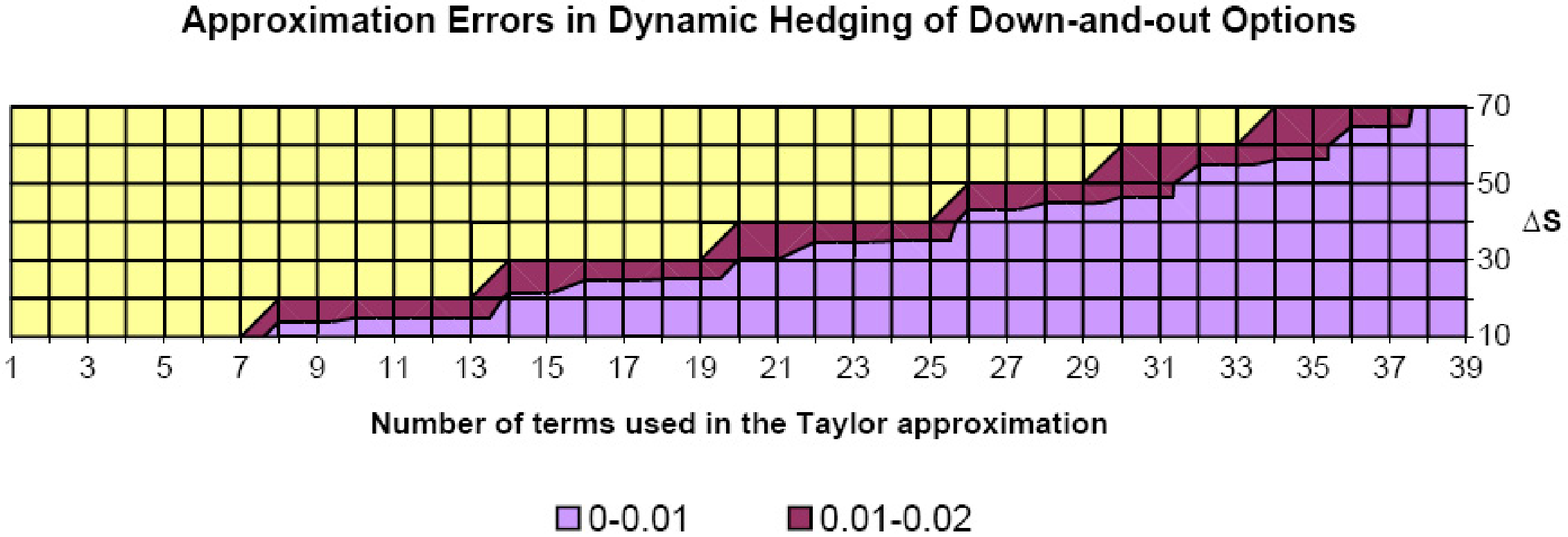';file-properties "XNPEU";}}}%
%BeginExpansion
{\includegraphics[
height=1.9493in,
width=5.5434in
]%
{P_do_d_20_0927.eps}%
}%
%EndExpansion
%

%TCIMACRO{\FRAME{itbpF}{5.5201in}{1.9925in}{0in}{}{}{p_di_d_20_0927.eps}%
%{\special{ language "Scientific Word";  type "GRAPHIC";
%maintain-aspect-ratio TRUE;  display "USEDEF";  valid_file "F";
%width 5.5201in;  height 1.9925in;  depth 0in;  original-width 9.1532in;
%original-height 3.2733in;  cropleft "0";  croptop "1";  cropright "1";
%cropbottom "0";  filename '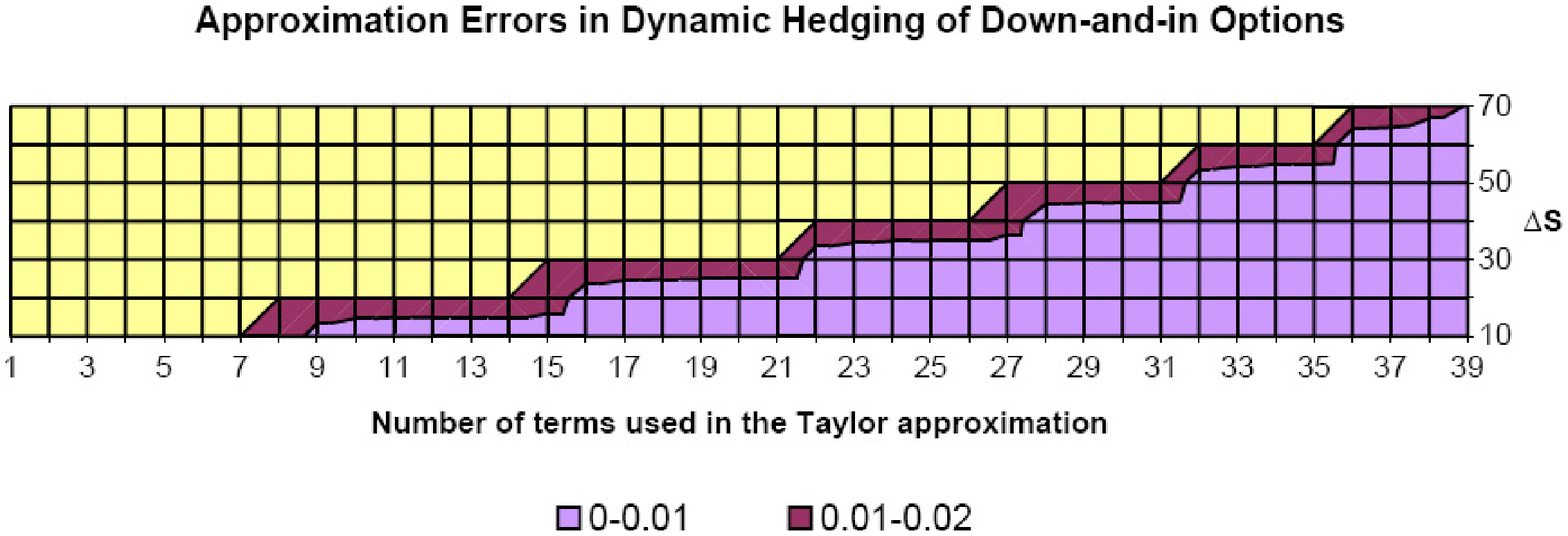';file-properties "XNPEU";}}}%
%BeginExpansion
{\includegraphics[
height=1.9925in,
width=5.5201in
]%
{P_di_d_20_0927.eps}%
}%
%EndExpansion

Figure \ref{SectionResult}.2: The approximation error in dynamic hedging of
up-and-in options, down-and-out options and down-and-in options. \ The
$x\,$-axis gives the value of $q$ and the $y$-axis gives $\Delta S$.

\bigskip

The performance of static and dynamic hedging of European options is given in
Figure \ref{SectionResult}.1. \ We can see that the values of $q$ required are
the same in the cases of static and dynamic hedging. \ The value of $q$, that
is, the number of terms required in the Taylor approximation, such that the
error $\leq\alpha_{\text{tol}}$ increases gradually as the value of $\Delta
S_{t}$ increases. \ This verifies the discussion given in the beginning of
this section, that is, for a given tolerance level, the number of terms
required in the Taylor expansions is finite. \ The values of $q$ for different
values of $\Delta S_{t}$ is also given in Table \ref{SectionResult}.1.\ 

The performance of dynamically hedging of {up-and-out} options is given in
Figure \ref{SectionResult}.1. \ We assume the barrier is given by $H=5050.$
\ The values of $q$ required are bigger than the ones for European options due
to the more complicated payoff function. \ The values of $q$ for different
values of $\Delta S_{t}$ is also given in Table \ref{SectionResult}.1.
\ Similarly, the hedging performance of {up-and-in} options, down-and-out and
down-and-in options are given in Figure \ref{SectionResult}.2 and Table
\ref{SectionResult}.1.%

\begin{tabular}
[c]{|llllllll||}\hline
\multicolumn{8}{|l||}{In static hedging of European options}\\
\multicolumn{8}{|l||}{in Figure \ref{SectionResult}.1,}\\\hline
\multicolumn{1}{|l|}{$\Delta S_{t}$} & 10 & 20 & 30 & 40 & 50 & 60 & 70\\
\multicolumn{1}{|l|}{$q$} & 8 & 14 & 20 & 26 & 32 & 36 & 38\\\hline
\multicolumn{8}{|l||}{In dynamic hedging of European options}\\
\multicolumn{8}{|l||}{in Figure \ref{SectionResult}.1,}\\\hline
\multicolumn{1}{|l|}{$\Delta S_{t}$} & 10 & 20 & 30 & 40 & 50 & 60 & 70\\
\multicolumn{1}{|l|}{$q$} & 8 & 14 & 20 & 26 & 32 & 36 & 38\\\hline
\multicolumn{8}{|l||}{In dynamic hedging of up-and-out options}\\
\multicolumn{8}{|l||}{in Figure \ref{SectionResult}.1,}\\\hline
\multicolumn{1}{|l|}{$\Delta S_{t}$} & 10 & 20 & 30 & 40 & 50 & 60 & 70\\
\multicolumn{1}{|l|}{$q$} & 9 & 15 & 22 & 27 & 32 & 36 & 39\\\hline
\end{tabular}%
\begin{tabular}
[c]{|l|lllllll|}\hline
\multicolumn{8}{|l|}{In dynamic hedging of up-and-in options}\\
\multicolumn{8}{|l|}{in Figure \ref{SectionResult}.2,}\\\hline
$\Delta S_{t}$ & 10 & 20 & 30 & 40 & 50 & 60 & 70\\
$q$ & 9 & 16 & 22 & 28 & 32 & 36 & 39\\\hline
\multicolumn{8}{|l|}{In dynamic hedging of down-and-out options}\\
\multicolumn{8}{|l|}{in Figure \ref{SectionResult}.2,}\\\hline
$\Delta S_{t}$ & 10 & 20 & 30 & 40 & 50 & 60 & 70\\
$q$ & 8 & 14 & 20 & 26 & 32 & 36 & 38\\\hline
\multicolumn{8}{|l|}{In dynamic hedging of down-and-in options}\\
\multicolumn{8}{|l|}{in Figure \ref{SectionResult}.2,}\\\hline
$\Delta S_{t}$ & 10 & 20 & 30 & 40 & 50 & 60 & 70\\
$q$ & 9 & 16 & 22 & 28 & 32 & 36 & 39\\\hline
\end{tabular}

Table \ref{SectionResult}.1: \ The values of $q$ for given $\Delta S_{t}$ in
static hedging of European options, dynamic hedging of European, up-and-out,
up-and-in, down-and-out and down-and-in options.

\bigskip

The performance of hedging some other exotic options, such as lookback options
and Asian options, can be obtained similarly since we employ Monte Carlo
simulation in calculating the option prices. \ Recall in Section
\ref{SectionImple}, as $N$ increases, the number of derivatives that can be
calculated increases. \ The results show that $q$ increases rapidly with
increasing $\Delta S_{t}$. \ Note that the bigger the value of $\Delta S_{t}$,
the slower the convergence rate of Taylor expansion and this is why dynamic
hedging is more popular in the literature. \ From our simulation results, we
note that $\frac{D_{2}^{i}F\left(  t+\Delta t,S_{t}\right)  }{i}$ become very
small as $i$ increases, but the value of $\left(  \Delta S_{t}\right)  ^{i}$
increases very rapidly. \ Therefore, we cannot ignore the terms $\frac
{D_{2}^{i}F\left(  t+\Delta t,S_{t}\right)  }{i!}\left(  \Delta S_{t}\right)
^{i}$. \ To enable perfect hedging using moment swaps, power jump assets or
some other traded derivatives depending on the same underlying asset, the
market has to allow trading in these financial derivatives in a unit as small
as $\frac{D_{2}^{i}F\left(  t+\Delta t,S_{t}\right)  }{i}$.

In summary, as long as we can find the $q$ such that the Taylor approximations
are accurate for all possible values of $\Delta S_{t}$ under consideration,
the perfect hedging using moment swaps, power jump assets or other traded
derivatives depending on the same underlying asset works very well.

To show the trading strategy is applicable to real life data, we fit the VG
model to European option price on FTSE index and derive a static hedging
strategy on a one year European option. \ On 4th January 2007 and 4th January
2008, the spot FTSE 100 index are 6287 and 6348.5, respectively. \ The change
in value of the underlying, $\Delta S$, is therefore 61.5. \ We apply our
hedging strategy to the one year European option on 4th January 2007 and show
how hedging can be achieved. \ The option with strike 6287 is worth 410.3 on
4th January 2007, where the risk-free interest rate is 5.43\%, the dividend on
the FTSE 100 index is 3.51\% and the implied volatility is 14.65\%. \ We fit
these data using the VG model and obtain the parameter values: $\theta=$
$-0.2721$, $\nu=$ 0.3032 and $\sigma=$ 3.02\%. \ The Monte Carlo (MC)
simulated option price using these parameters is 410.914. \ The pricing error
due to calibration and simulation is then $0.614$. \ At maturity, the option
is in the money and the payoff is $(6348.5-6287)=61.5$. \ Therefore, the
change of value of the option is $61.5-410.914=-349.414$ according to the MC
calculation. \ The hedging performance is given in Table \ref{SectionResult}%
.2. \ The first column shows the number of terms used in the Taylor expansion,
the second column shows the value of the derivative, $D_{2}^{\left(  i\right)
}\left(  t+\Delta t,\Delta S\right)  $ and the third column shows the
approximated price. \ We see that 12 terms are needed to obtain the change in
option price, $-349.414$. \ Perfect hedging is achieved according to the MC
price. \ It shows that the hedging strategy works very well in replicating the
MC price and the hedging error in this example is entirely due to calibration
and simulation of the VG model, which is out of the scope of this paper. \ We
note that the contributions of the odd number terms except the first term are
almost negligible and can be ignored. \ Therefore we can reduce the number of
instruments invested in this case.

$%
\begin{tabular}
[c]{|lll||}\hline
\multicolumn{1}{|l|}{1} & \multicolumn{1}{l|}{0.5} & -380.164\\
\multicolumn{1}{|l|}{2} & \multicolumn{1}{l|}{0.01107} & -338.294\\
\multicolumn{1}{|l|}{3} & \multicolumn{1}{l|}{-8.97421e-015} & -338.294\\
\multicolumn{1}{|l|}{4} & \multicolumn{1}{l|}{-9.39954e-007} & -351.741\\
\multicolumn{1}{|l|}{5} & \multicolumn{1}{l|}{9.30204e-019} & -351.741\\\hline
\end{tabular}%
\begin{tabular}
[c]{|lll||}\hline
6 & \multicolumn{1}{l|}{4.80557e-011} & -349.141\\
7 & \multicolumn{1}{l|}{-4.05377e-023} & -349.141\\
8 & \multicolumn{1}{l|}{-1.4317e-015} & -349.434\\
9 & \multicolumn{1}{l|}{9.58199e-028} & -349.434\\
10 & \multicolumn{1}{l|}{2.6928e-020} & -349.413\\\hline
\end{tabular}%
\begin{tabular}
[c]{|lll|}\hline
11 & \multicolumn{1}{l|}{-1.39873e-032} & -349.413\\
12 & \multicolumn{1}{l|}{-3.40129e-025} & -349.414\\
13 & \multicolumn{1}{l|}{1.36317e-037} & -349.414\\
14 & \multicolumn{1}{l|}{3.01623e-030} & -349.414\\
15 & \multicolumn{1}{l|}{-9.35744e-043} & -349.414\\\hline
\end{tabular}
$

Table \ref{SectionResult}.2: The performance of the hedging strategy on a one
year European option price on FTSE 100 index on 4th January 2007.

\section{Conclusion\label{SectionConclusion}}

In this paper, we provided some perfect hedging strategies and minimal
variance portfolios in a L\'{e}vy market. \ Many financial institutions hold
derivative securities in their portfolios, and frequently these securities
need to be hedged for extended periods of time. Failure to hedge properly can
expose an institution to sudden swings in the values of derivatives, such as
options, resulting from large, unanticipated changes in the levels or
volatilities of the underlying asset. \ Research in the techniques employed
for hedging derivative securities is therefore of crucial importance. \ Under
the assumption of the famous Black-Scholes model, the market is complete and
an European option can be hedged perfectly by investing in a risk-free bank
account and the underlying stock.\ \ However, there is statistical evidence,
such as the volatility smile, that the Black-Scholes model is not sufficiently
flexible to model the price process. \ As a result, the study of L\'{e}vy
process, which is a generalisation of Brownian motion with jumps, has become
increasingly important in mathematical finance. \ If the underlying asset is
driven by a L\'{e}vy process, the market is not complete, that is, a
contingent claim cannot be hedged using only a risk-free bank account and the
underlying asset. \ By applying a Taylor expansion to the pricing formulae, we
derived dynamic perfect hedging strategies of European and some exotic options
by trading in moment swaps, power jump assets or certain traded derivatives
depending on the same underlying asset. \ In the case of European options,
static hedging can also be achieved. \ We extended the delta and gamma hedging
strategies to higher moment hedging by investing in other traded derivatives
depending on the same underlying asset. \ We demonstrated how to use the
minimal variance portfolios derived by \cite{bdlop03} to hedge the higher
order terms in the Taylor expansion, investing only in a risk-free bank
account, the underlying asset and, potentially, variance swaps. \ We
explicitly addressed numerical issues in the procedures, such as the
approximation of the derivatives in the Taylor expansion, as well as
investigated the performance of the hedging strategies. \ If as many
derivatives as the Taylor expansion needed for accuracy can be determined and
the financial derivatives required to hedge are available in the specified
amounts, perfect hedging is possible.

\setcounter{section}{0} \renewcommand{\thesection}{\Alph{section}}
\renewcommand{\thesubsection}{\Alph{section}.\arabic{subsection}} \numberwithin{equation}{section}%

%TCIMACRO{\TeXButton{\small}{\small}}%
%BeginExpansion
\small
%EndExpansion

\begin{center}
{\large {\textbf{APPENDICES}}}
\end{center}

\setcounter{equation}{0}\renewcommand{\theequation}{A.\arabic{equation}}

\section{Proof of Propositions and Lemma}

\subsection{Proof of Proposition \ref{Proposition3}\label{AppendixProp3}}

The initial investment at time $t$ is%
\[
C_{i}\left\{  S_{t}^{i}e^{-r\left(  t+\Delta t\right)  }T_{t}^{\left(
i\right)  }+\frac{S_{t}^{i}e^{-r\left(  t+\Delta t\right)  }T_{t}^{\left(
i\right)  }}{e^{r\Delta t}-1}+\frac{S_{t}^{i}\left[  -e^{-rt}T_{t}^{\left(
i\right)  }+m_{i}\Delta t\right]  }{e^{r\Delta t}-1}\right\}  .
\]
At maturity, the value of the portfolio is equal to%
\[
C_{i}S_{t}^{i}\left\{  e^{-r\left(  t+\Delta t\right)  }T_{t+\Delta
t}^{\left(  i\right)  }+\frac{e^{r\Delta t}}{e^{r\Delta t}-1}\left\{
e^{-r\left(  t+\Delta t\right)  }T_{t}^{\left(  i\right)  }-e^{-rt}%
T_{t}^{\left(  i\right)  }+m_{i}\Delta t\right\}  \right\}  .
\]
Hence, by equation (\ref{PJAD}), the change of value of the portfolio equals
\[
C_{i}\left\{  S_{t}^{i}e^{-r\left(  t+\Delta t\right)  }T_{t+\Delta
t}^{\left(  i\right)  }+S_{t}^{i}\left[  -e^{-rt}T_{t}^{\left(  i\right)
}+m_{i}\Delta t\right]  \right\}  .
\]

\subsection{Proof of Proposition \ref{PropositionMVP}%
\label{AppendixProofProp1}}

Let
\begin{equation}
\xi=\xi^{0}+\sum_{j=1}^{k}\int_{0}^{T}\varphi_{j}\left(  s\right)
\mathrm{d}S_{j}\left(  s\right)  . \label{errorTerm}%
\end{equation}
where $\xi^{0}$ denotes the difference of value between $\xi$ and $\sum
_{j=1}^{k}\int_{0}^{T}\varphi_{j}\left(  s\right)  \mathrm{d}S_{j}\left(
s\right)  $ for the portfolio $\varphi=\left(  \varphi_{1},...,\varphi
_{k}\right)  $. \ By the results of \cite[Section 4.2]{ms95}, the Hilbert
space argument in \cite[Theorem 2.3]{bdlop03} and equation (\ref{decomp}), the
following orthogonality condition is satisfied: $E\left[  \left(  \xi-\hat
{\xi}\right)  \Theta\right]  =E\left[  \left\{  \xi^{0}-E\left[  \xi\right]
\right\}  \Theta\right]  $ $=E\left[  \xi^{0}\Theta\right]  -E\left[
\xi\right]  E\left[  \Theta\right]  =0,$ where
\begin{equation}
\Theta=\sum_{j=1}^{k}\int_{0}^{T}\theta_{j}\left(  s\right)  \sigma_{j}%
S_{j}\left(  s_{-}\right)  \mathrm{d}W_{j}\left(  s\right)  +\sum_{j=1}%
^{k}\int_{0}^{T}\int_{\mathbb{R}}x\theta_{j}\left(  s\right)  S_{j}\left(
s_{-}\right)  \tilde{N}_{j}\left(  \mathrm{d}s,\mathrm{d}x\right)
\label{Theta}%
\end{equation}
for all $\theta=\left(  \theta_{1},...,\theta_{k}\right)  \in\mathcal{A}$.
\ Since $E\left[  \Theta\right]  =0,$ we have $E\left[  \xi^{0}\Theta\right]
=0.$ \ From (\ref{decomp}) and (\ref{S2}),%
\[
\sum_{j=1}^{k}\int_{0}^{T}\varphi_{j}\left(  s\right)  \mathrm{d}S_{j}\left(
s\right)  =\sum_{j=1}^{k}\int_{0}^{T}\varphi_{j}\left(  s\right)  S_{j}\left(
s_{-}\right)  b_{j}\mathrm{d}s+\sum_{j=1}^{k}\int_{0}^{T}\varphi_{j}\left(
s\right)  \sigma_{j}S_{j}\left(  s_{-}\right)  \mathrm{d}W_{j}\left(
s\right)
\]%
\[
+\sum_{j=1}^{k}\int_{0}^{T}\int_{\mathbb{R}}x\varphi_{j}\left(  s\right)
S_{j}\left(  s_{-}\right)  \tilde{N}_{j}\left(  \mathrm{d}s,\mathrm{d}%
x\right)  .
\]
Hence, from (\ref{f1f2}) and (\ref{errorTerm}),
\begin{align*}
\xi^{0}  &  =E\left[  \xi\right]  -\sum_{j=1}^{k}\int_{0}^{T}\varphi
_{j}\left(  s\right)  S_{j}\left(  s_{-}\right)  b_{j}\mathrm{d}s+\sum
_{j=1}^{k}\int_{0}^{T}\left(  \frac{1}{\sigma_{j}}f_{1}\left(  \xi;s,j\right)
-\varphi_{j}\left(  s\right)  S_{j}\left(  s_{-}\right)  \right)  \sigma
_{j}\mathrm{d}W_{j}\left(  s\right) \\
&  +\sum_{j=1}^{k}\int_{0}^{T}\int_{\mathbb{R}}\left(  f_{2}\left(
\xi;s,x,j\right)  -x\varphi_{j}\left(  s\right)  S_{j}\left(  s_{-}\right)
\right)  \tilde{N}_{j}\left(  \mathrm{d}s,\mathrm{d}x\right)  .
\end{align*}
Hence, from (\ref{Theta}) and the well-known isometry, see \cite{iw89}, we
have%
\begin{align*}
E\left[  \xi^{0}\Theta\right]   &  =\sum_{j=1}^{k}E\left[  \int_{0}^{T}%
\theta_{j}\left(  s\right)  S_{j}\left(  s_{-}\right)  \left\{  \left(
f_{1}\left(  \xi;s,j\right)  -\sigma_{j}\varphi_{j}\left(  s\right)
S_{j}\left(  s_{-}\right)  \right)  \sigma_{j}\right.  \right. \\
&  +\left.  \left.  \int_{\mathbb{R}}x\left(  f_{2}\left(  \xi;s,x,j\right)
-x\varphi_{j}\left(  s\right)  S_{j}\left(  s_{-}\right)  \right)  \nu
_{j}\left(  \mathrm{d}x\right)  \right\}  \mathrm{d}s\right]  =0.
\end{align*}%
\begin{align*}
&  \Rightarrow f_{1}\left(  \xi;s,j\right)  \sigma_{j}+\int_{\mathbb{R}}%
xf_{2}\left(  \xi;s,x,j\right)  \nu_{j}\left(  \mathrm{d}x\right)
=\varphi_{j}\left(  s\right)  S_{j}\left(  s\right)  \left\{  \sigma_{j}%
^{2}+\int_{\mathbb{R}}x^{2}\nu_{j}\left(  \mathrm{d}x\right)  \right\} \\
\varphi_{j}\left(  s\right)   &  =\left[  f_{1}\left(  \xi;s,j\right)
\sigma_{j}+\int_{\mathbb{R}}xf_{2}\left(  \xi;s,x,j\right)  \nu_{j}\left(
\mathrm{d}x\right)  \right]  /\left[  \left\{  \sigma_{j}^{2}+\int
_{\mathbb{R}}x^{2}\nu_{j}\left(  \mathrm{d}x\right)  \right\}  S_{j}\left(
s\right)  \right]  .
\end{align*}

\subsection{Proof of Proposition \ref{PropositionMVPwithout}%
\label{AppendixMVPwithout}}

From equation (\ref{PJADi}), the term $\sum_{i=2}^{q}C_{i}S_{t}^{i}m_{i}\Delta
t$ can be hedged by investing
\[
\sum_{i=2}^{q}\frac{C_{i}S_{t}^{i}m_{i}\Delta t}{\exp\left(  r\Delta t\right)
-1}%
\]
in a risk-free bank account. \ To hedge the term $\sum_{i=2}^{q}C_{i}S_{t}%
^{i}\int_{t}^{t+\Delta t}\mathrm{d}Y_{s}^{\left(  i\right)  },$ we let
\[
\xi=\sum_{i=2}^{q}\int_{t}^{t+\Delta t}C_{i}S_{t}^{i}\mathrm{d}Y_{s}^{\left(
i\right)  }=\sum_{i=2}^{q}\int_{t}^{t+\Delta t}\int_{\mathbb{R}}C_{i}S_{t}%
^{i}x^{i}\tilde{N}\left(  \mathrm{d}s,\mathrm{d}x\right)
\]
by (\ref{YN}) and let the minimal variance portfolio to hedge $\xi$ be
$\hat{\xi}=E\left[  \xi\right]  +\int_{t}^{t+\Delta t}\varphi_{s}%
\mathrm{d}S_{s}=\int_{t}^{t+\Delta t}\varphi_{s}\mathrm{d}S_{s}$ since
$E\left[  \xi\right]  =0$. \ Hence, using Proposition \ref{PropositionMVP} and
equation (\ref{f1f2}) by putting $f_{1}\left(  \xi;s,j\right)  =0$ and
$f_{2}\left(  \xi;s,x,j\right)  =\sum_{i=2}^{q}C_{i}S_{t}^{i}x^{i}$, we have
\[
\varphi_{s}=\frac{\int_{\mathbb{R}}\sum_{i=2}^{q}C_{i}S_{t}^{i}x^{i+1}%
\nu\left(  \mathrm{d}x\right)  }{\left[  \sigma^{2}+\int_{\mathbb{R}}x^{2}%
\nu\left(  \mathrm{d}x\right)  \right]  S_{s}}.
\]
Hence, to hedge the terms $\sum_{i=2}^{q}Q_{i}\ $by minimal variance
portfolio, we need to invest the amount
\[
\sum_{i=2}^{q}\frac{C_{i}S_{t}^{i}m_{i}\Delta t}{\exp\left(  r\Delta t\right)
-1}%
\]
in a risk-free bank account and buy
\[
\frac{\int_{\mathbb{R}}\sum_{i=2}^{q}C_{i}S_{t}^{i}x^{i+1}\nu\left(
\mathrm{d}x\right)  }{\left[  \sigma^{2}+\int_{\mathbb{R}}x^{2}\nu\left(
\mathrm{d}x\right)  \right]  S_{t}}=\frac{\sum_{i=2}^{q}C_{i}S_{t}%
^{i-1}m_{i+1}}{\left[  \sigma^{2}+m_{2}\right]  }%
\]
amount of the underlying stock, $S_{t}$, where $m_{i}$ are defined in
(\ref{mean}). \ 

\renewcommand{\theequation}{B.\arabic{equation}}

\section{Central difference approximation of arbitrary
degree\label{centralDiff}}

\cite[Section 1]{ko03} showed that Taylor's series based central difference
approximation of arbitrary $p$-th degree derivative of a function $f\left(
t\right)  $ at $t=t_{0}$ can be written for an order $2N$ as%
\begin{equation}
f_{0}^{\left(  p\right)  }=\frac{1}{T^{p}}\sum_{k=-N}^{N}d_{k}^{\left(
p\right)  }f_{k}, \label{fp}%
\end{equation}
where $T$ is the sampling period, $2N+1$ is the number of nodes used in the
approximation, $f_{k}$ denotes the value of function $f\left(  t\right)  $ at
$t=t_{0}+kT$, $2N$ is an integer bigger than $p$ and $d_{0}^{\left(  p\right)
}=0$ if $p$ is odd, otherwise $d_{0}^{\left(  p\right)  }=-2\sum_{k=1}%
^{N}d_{k}^{\left(  p\right)  },$ and%
\begin{equation}
d_{k}^{\left(  p\right)  }=\left(  -1\right)  ^{k+c_{1}}\frac{p!}{k^{1+c_{2}}%
}C_{N,k}\sum_{i}\frac{1}{X\left(  i\right)  ^{2}},\ \ \ \text{for
}k=-N,-N+1,...,-1,1,...,N-1,N, \label{dp2}%
\end{equation}
$C_{N,k}=\frac{N!^{2}}{\left(  N-k\right)  !\left(  N+k\right)  !}$, $c=$
largest integer less than or equal to $\left(  p-1\right)  /2$, $c_{1}=1$ if
$c $ is even, otherwise $c_{1}=0$, $c_{2}=1$ if $p$ is even, otherwise
$c_{2}=0$, and the vector $X$ is generated in the following way:

1. Take a vector $Y$ containing all integers from $1$ to $N$ except
$\left\vert k\right\vert $ (in \cite[p. 121]{ko03}, it was except $k$, but
from the derivation of the formula, it should be $\left\vert k\right\vert $).

2. The vector $X$ contains the product of all the possible combinations of
length $c$ in $Y$. \ 

\begin{comment}
\FRAME{itbpF}{5.1145in}{1.292in}{0in}{}{}{}{\special{language
"Scientific Word";type "GRAPHIC";maintain-aspect-ratio TRUE;display
"USEDEF";valid_file "F";width 5.1145in;height 1.292in;depth
0in;original-width 6.6919in;original-height 1.6639in;cropleft "0";croptop
"1";cropright "1.0001";cropbottom "0";filename
'../Hedging/rescale1.eps';file-properties "XNPEU";}}
Table \ref{AppendixRescale}.1: \ The approximation results of European
options without re-scaling the stock price.
\FRAME{itbpF}{5.0739in}{1.2012in}{0in}{}{}{}{\special{language
"Scientific Word";type "GRAPHIC";maintain-aspect-ratio TRUE;display
"USEDEF";valid_file "F";width 5.0739in;height 1.2012in;depth
0in;original-width 7.2082in;original-height 1.6769in;cropleft "0";croptop
"1";cropright "1";cropbottom "0";filename
'../Hedging/rescale2.eps';file-properties "XNPEU";}}
Table \ref{AppendixRescale}.2: \ The approximation results with re-scaling
the stock price and keeping $M$ varied. \
\FRAME{itbpF}{5.0626in}{1.1813in}{0in}{}{}{}{\special{language
"Scientific Word";type "GRAPHIC";maintain-aspect-ratio TRUE;display
"USEDEF";valid_file "F";width 5.0626in;height 1.1813in;depth
0in;original-width 7.1926in;original-height 1.6492in;cropleft "0";croptop
"1";cropright "1";cropbottom "0";filename
'../Hedging/rescale3.eps';file-properties "XNPEU";}}
Table \ref{AppendixRescale}.3: \ The approximation results with re-scaling
the stock price and keeping $M$ fixed.
\end{comment}

\bibliographystyle{authordate4}
\bibliography{MyReferences,prm,sk}

\begin{thebibliography}{}

\bibitem[\protect\citename{Benth {\em et~al.\ }\relax, }2003]{bdlop03}
{\sc Benth, F., Nunno, G.~Di, L{\o}kka, A., {\O}ksendal, B., \& Proske, F.}
  2003.
\newblock Explicit representation of the minimal variance portfolio in markets
  driven by {L}{\'e}vy processes.
\newblock {\em Mathematical {F}inance}, {\bf 13}(1), 55--72.

\bibitem[\protect\citename{Carr {\em et~al.\ }\relax, }2001]{cgm01}
{\sc Carr, P., Geman, H., \& Madan, D.} 2001.
\newblock Pricing and hedging in incomplete markets.
\newblock {\em Journal of {F}inancial {E}conomics}, {\bf 62}, 131--167.

\bibitem[\protect\citename{Cont {\em et~al.\ }\relax, }2005]{ctv05}
{\sc Cont, R., Tankov, P., \& Voltchkova, E.} 2005.
\newblock Hedging with options in models with jumps.
\newblock {\em Abel {S}ymposium 2005 on {S}tochastic {A}nalysis and
  {A}pplications}.

\bibitem[\protect\citename{Corcuera {\em et~al.\ }\relax, }2005]{cns05}
{\sc Corcuera, J.~M., Nualart, D., \& Schoutens, W.} 2005.
\newblock Completion of a {L}{\'e}vy market by power-jump assets.
\newblock {\em Finance and {S}tochastics}, {\bf 9}, 109--127.

\bibitem[\protect\citename{Corcuera {\em et~al.\ }\relax, }2006]{cgns06}
{\sc Corcuera, J.~M., J.~Guerra, D.~Nualart, \& Schoutens, W.} 2006.
\newblock Optimal investment in a {L}{\'e}vy market.
\newblock {\em Applied {M}athematics and {O}ptimization}, {\bf 53}(3),
  279--309.

\bibitem[\protect\citename{Demeterfi {\em et~al.\ }\relax, }1999]{ddkz99}
{\sc Demeterfi, K., Derman, E., Kamal, M., \& Zou, J.} 1999.
\newblock A guide to volatility and variance swaps.
\newblock {\em The {J}ournal of {D}erivatives},  9--32.

\bibitem[\protect\citename{Derman {\em et~al.\ }\relax, }1995]{dek95}
{\sc Derman, E., Ergener, D., \& Kani, I.} 1995.
\newblock Static options replication.
\newblock {\em Journal of {D}erivatives}, {\bf 2}(4).

\bibitem[\protect\citename{Dritschel \& Protter, }1999]{dp99}
{\sc Dritschel, M., \& Protter, P.} 1999.
\newblock Complete markets with discontinuous security price.
\newblock {\em Finance and {S}tochastics}, {\bf 3}, 203--214.

\bibitem[\protect\citename{He {\em et~al.\ }\relax, }2005]{hkcflv05}
{\sc He, C., Kennedy, J., Coleman, T., Forsyth, P., Li, Y., \& Vetzal, K.}
  2005.
\newblock Calibration and hedging under jump diffusion.
\newblock {\em Working paper}.

\bibitem[\protect\citename{Hull, }2003]{h03}
{\sc Hull, J.} 2003.
\newblock {\em Options, futures, and other derivatives}. 5th edn.
\newblock Prentice {H}all {F}inance {S}eries.

\bibitem[\protect\citename{Ikeda \& Watanabe, }1989]{iw89}
{\sc Ikeda, N., \& Watanabe, S.} 1989.
\newblock {\em Stochastic differential equations and diffusion processes}.
\newblock Amsterdam: {N}orth-{H}olland.

\bibitem[\protect\citename{It{\^o}, }1956]{i56}
{\sc It{\^o}, K.} 1956.
\newblock Spectral type of the shift transformation of differential processes
  with stationary increments.
\newblock {\em Transactions of the {A}merican {M}athematical {S}ociety}, {\bf
  81}, 253--263.

\bibitem[\protect\citename{Khan \& Ohba, }2003]{ko03}
{\sc Khan, I., \& Ohba, R.} 2003.
\newblock Taylor series based finite difference approximations of higher-degree
  derivatives.
\newblock {\em Journal of {C}omputational and {A}pplied {M}athematics}, {\bf
  154}, 115--124.

\bibitem[\protect\citename{Kijima, }2002]{k02b}
{\sc Kijima, M.} 2002.
\newblock {\em Stochastic processes with applications to finance}.
\newblock Chapman and Hall.

\bibitem[\protect\citename{L{\o}kka, }2004]{l04}
{\sc L{\o}kka, A.} 2004.
\newblock Martingale representation of functionals of {L}{\'e}vy processes.
\newblock {\em Stochastic {A}nalysis and {A}pplications}, {\bf 22}(4),
  867--892.

\bibitem[\protect\citename{Monat \& Stricker, }1995]{ms95}
{\sc Monat, P., \& Stricker, C.} 1995.
\newblock Follmer-{S}chweizer decomposition and mean-variance hedging for
  general claims.
\newblock {\em The {A}nnals of {P}robability}, {\bf 23}(2), 605--628.

\bibitem[\protect\citename{Nualart \& Schoutens, }2000]{ns00}
{\sc Nualart, D., \& Schoutens, W.} 2000.
\newblock Chaotic and predictable representations for {L}{\'e}vy processes.
\newblock {\em Stochastic {P}rocesses and their {A}pplications}, {\bf 90},
  109--122.

\bibitem[\protect\citename{Protter, }2004]{p04}
{\sc Protter, P.} 2004.
\newblock {\em Stochastic integration and differential equations}. 2nd edn.
\newblock Springer.

\bibitem[\protect\citename{Sato, }1999]{s99}
{\sc Sato, K.} 1999.
\newblock {\em {L}{\'e}vy processes and infinitely divisible distribution}.
\newblock  Vol. 68.
\newblock Cambridge {U}niversity {S}tudies in {A}dvanced {M}athematics,
  {C}ambridge {U}niversity {P}ress, {C}ambridge.

\bibitem[\protect\citename{Schoutens, }2000]{s00}
{\sc Schoutens, W.} 2000.
\newblock {\em Stochastic processes and orthogonal polynomials}.
\newblock Springer.

\bibitem[\protect\citename{Schoutens, }2003]{s03}
{\sc Schoutens, W.} 2003.
\newblock {\em {L}{\'e}vy processes in finance: pricing financial
  derivatives,}.
\newblock Chichester, {N}ew {Y}ork, {N.Y}. : {J. W}iley.

\bibitem[\protect\citename{Schoutens, }2005]{s05}
{\sc Schoutens, W.} 2005.
\newblock Moment swaps.
\newblock {\em Quantitative {F}inance}, {\bf 5}(6), 525--530.

\bibitem[\protect\citename{Windcliff {\em et~al.\ }\relax, }2006]{wfv06}
{\sc Windcliff, H., Forsyth, P., \& Vetzal, K.} 2006.
\newblock Pricing methods and hedging strategies for volatility derivatives.
\newblock {\em Journal of {B}anking and {F}inance}, {\bf 30}(2), 409--431.

\bibitem[\protect\citename{Yip {\em et~al.\ }\relax, }2007]{y06}
{\sc Yip, W., Stephens, D., \& Olhede, S.} 2007.
\newblock The explicit chaotic representation of powers of increments of
  {L}{\'e}vy processes.
\newblock {\em http://arxiv.org/abs/0706.1698}.

\end{thebibliography}

\end{document}